\definecolor{darkred}{RGB}{175,0,0}
\newtheorem{theorem}{Theorem}
\newtheorem{proposition}{Proposition}
\newtheorem{lemma}{Lemma}
\newtheorem{definition}{Definition}
\newtheorem{problem}{Problem}
\theoremstyle{definition}
\newtheorem*{remarkth}{Remark}
\newenvironment{remark}{\begin{remarkth}}{\hfill$\lozenge$\end{remarkth}}
\newcommand{\vf}{\mathfrak{X}}
\newcommand{\df}{\Omega}
\newcommand{\Leg}{\mathcal{F}}
\newcommand{\leg}{\operatorname{leg}}
\newcommand{\F}{\mathbb{F}}
\renewcommand{\d}{d}
\newcommand{\R}{\mathbb{R}}
\newcommand{\Id}{\textnormal{Id}}
\renewcommand{\Im}{\operatorname{Im}}
\newcommand{\pr}{\operatorname{pr}}
\newcommand{\derpar}[2]{\displaystyle\frac{\partial{#1}}{\partial{#2}}}
\newcommand{\derpars}[3]{\displaystyle\frac{\partial^2{#1}}{\partial{#2}\partial{#3}}}
\newcommand{\restric}[2]{\left.#1\right|_{#2}}
\def\Lie{\mathop{\rm L}\nolimits}
\def\inn{\mathop{i}\nolimits}
\def\tabaddress#1{{\small\it\begin{tabular}[t]{c}#1
\\[1.2ex]\end{tabular}}}
\title{REGULARITY PROPERTIES OF FIBER DERIVATIVES ASSOCIATED WITH HIGHER-ORDER MECHANICAL SYSTEMS}
\author{
{\sc  Leonardo Colombo\thanks{\textbf{e}-{\it mail}: ljcolomb@umich.edu} }\\
\vspace{5mm}
   \tabaddress{Department of Mathematics, University of Michigan. \\
   Ann Arbor, MI 48109, USA} \\
{\sc  Pedro Daniel Prieto-Mart\'{\i}nez\thanks{\textbf{e}-{\it mail}: math.pedro.daniel.prieto@gmail.com} }\\
\vspace{5mm}
\tabaddress{Departament de Matem\`atiques. Building C3, North Campus UPC. \\
   Jordi Girona 1. 08034 Barcelona. Spain}\\
}
\date{\today}
\begin{document}

\maketitle

\pagestyle{myheadings}

\thispagestyle{empty}

\begin{abstract}
The aim of this work is to study fiber derivatives associated to Lagrangian and Hamiltonian
functions describing the dynamics of a higher-order autonomous dynamical system. More precisely,
given a function in $T^*T^{(k-1)}Q$, we find necessary and sufficient conditions for such a
function to describe the dynamics of a $k$th-order autonomous dynamical system, thus being a
$k$th-order Hamiltonian function. Then, we give a suitable definition of (hyper)regularity for
these higher-order Hamiltonian functions in terms of their fiber derivative. In addition, we
also study an alternative characterization of the dynamics in Lagrangian submanifolds in terms
of the solutions of the higher-order Euler-Lagrange equations.
\end{abstract}

\bigskip
\noindent \textbf{Key words}:
\textsl{Higher-order systems; Lagrangian and Hamiltonian mechanics; Lagrangian submanifolds; Tulczyjew's triple.}

\vbox{\raggedleft AMS s.\,c.\,(2010): 70H50, 53D05, 53D12}\null 
\markright{\rm L. Colombo, P.D. Prieto-Mart\'{\i}nez: \textsl{Regularity properties of higher-order systems}}

\clearpage

\tableofcontents

\section{Introduction}

Higher-order dynamical systems play a relevant role in certain branches of theoretical physics,
applied mathematics and numerical analysis. In particular, they appear in theoretical physics, in
the mathematical description of relativistic particles with spin, string theories, Hilbert's
Lagrangian for gravitation, Podolsky's generalization of electromagnetism and others, as well as
in some problems of fluid mechanics and classical physics, and in numerical models arising from the
discretization of dynamical control systems that preserve their inherent geometric structures.
In these kinds of systems, the dynamics have explicit dependence on accelerations or higher-order
derivatives of the generalized coordinates of position. The geometric tools used to study those
systems have been developed mainly by M. de Le\'on, P.R. Rodr\'igues, D.J. Saunders and M. Crampin
(among others) between  the $70$'s and $90$'s in
\cite{book:DeLeon_Rodrigues85,book:Saunders89,art:Saunders_Crampin90}
(see also \cite{art:Batlle_Gomis_Pons_Roman88,proc:Cantrijn_Crampin_Sarlet86,art:Carinena_Lopez92,
art:Colombo_Martin_Zuccalli10,art:Gracia_Pons_Roman91,art:Gracia_Pons_Roman92,art:Krupkova94,
phd:Martinez}, and references therein). These works are based in the ideas of the Lagrangian formalism
introduced by J. Klein at the beginning of the $60$'s in \cite{art:Klein62}. In the aforementioned work,
the Euler-Lagrange equations are obtained by J. Klein in a purely geometric way using the canonical
geometric structures of the tangent and cotangent bundles, avoiding the use of variational calculus and exploiting the geometry of these dynamical systems.

The interest in higher-order dynamical systems has been growing up from the 90's due to the study
of optimization and boundary value problems where the cost function involves higher-order derivatives,
which may be modeled as variational problems with explicit dependence on higher-order derivatives of
the generalized coordinates of position. These ``higher-order variational problems'' are of great
interest for their useful applications in aeronautics, robotics, computer-aided design, air traffic
control, trajectory planning, and, more generally, problems of interpolation and approximation of
curves on Riemannian manifolds. These kinds of problems have been studied in
\cite{art:Bloch_Colombo_Gupta_Martin15,art:BlochCrouch96,art:Camarinha_Silva_Crouch95,
art:Hussein_Bloch07,art:Machado_Silva_Krakowski10,art:Maruskin_Bloch11,art:Noakes_Heinzinger_Paden89} and more recently, in  \cite{art:Gay_Holm_Meier_Ratiu_Vialard12_1,art:Gay_Holm_Meier_Ratiu_Vialard12_2,art:Gay_Holm_Ratiu11,
phd:Meier} the development of variational principles which involve higher-order cost functions for optimization problems on Lie groups and their application in template matching for computational anatomy have been studied. These applications have produced a great interest in the study and development of new modern geometric tools and techniques to model properly higher-order variational problems, with the
additional goal of obtaining a deepest understanding of the intrinsic properties of these problems.
Some work in this line of research has been carried out recently in the following references, \cite{art:Colombo_Prieto14, art:Colombo_Martin14, art:Prieto_Roman11,art:Prieto_Roman13, art:Martinez15, Popescu1, Popescu, art:Bruce_Grabowska_Grabowski_Urbanski15, art:Jozwikowski_Rotkiewicz13,art:Jozwikowski_Rotkiewicz15, art:Grabowska_Vitagliano15,art:Vitagliano10}.

Let us recall that the dynamics for a $k$th-order dynamical system can be obtained both by means
of a Lagrangian function defined on a the $k$th-order tangent bundle $T^{(k)}Q$ of the smooth
manifold $Q$ that models the configuration space of the system, or by means of a Hamiltonian
function defined on the cotangent bundle $T^*T^{(k-1)}Q$ (see \cite{book:DeLeon_Rodrigues85} for
details). The relation between Lagrangian and Hamiltonian dynamics can be studied either
using the higher-order Legendre map \cite{art:DeLeon_Lacomba88,art:DeLeon_Lacomba89} or the
Legendre-Ostrogradsky map \cite{book:DeLeon_Rodrigues85}. These two well-known transformations are
both derived from the Lagrangian function, and they provide a way to define a canonical Hamiltonian
function, and hence to give a Hamiltonian formulation of the system. Nevertheless, for first-order
dynamical systems it is known from \cite{book:Abraham_Marsden78} (Sections \S3.5 and \S3.6) that,
starting from a Hamiltonian function in the cotangent bundle of the configuration space, it is
possible to define a Lagrangian function in the tangent bundle describing the dynamics of the system,
or to recover the starting Lagrangian if the Hamiltonian was defined using a Legendre map. The
fundamental tool to do so is the fiber derivative of the Hamiltonian function. This same procedure
can be carried out with a Hamiltonian function defined on $T^*T^{(k-1)}Q$, and a ``Lagrangian
function'' can be defined in $TT^{(k-1)}Q$. This ``Lagrangian function'', however, may not be an
actual Lagrangian function in the physical sense, since the generalized coordinates in the base
manifold $T^{(k-1)}Q$ are generally not related to the fibered coordinates (the ``velocities'').
From the geometric point of view, this pretended ``Lagrangian function'' may not be defined in the
holonomic submanifold $T^{(k)}Q \hookrightarrow TT^{(k-1)}Q$, which is the real domain for a
$k$th-order Lagrangian function. This problem arises from the fact that the starting Hamiltonian
function is not considered as a Hamiltonian function for $k$th-order system, but just as a Hamiltonian
function for a first-order system defined on the cotangent bundle of a larger base manifold, since
there is not an enforced relation between the momenta.

The discussion in the previous paragraph gives rises to two natural questions. First, what is a
\textit{higher-order Hamiltonian function}? And second, what does it mean for such a function to
be \textit{regular}? Observe that these concepts are clearly defined for first-order Lagrangian
or Hamiltonian functions, and also for a higher-order Lagrangian functions (see \cite{Fer} for the constrained first order case). In this work we pretend
to give an answer to these questions. Indeed, in Section \ref{sec:Definition&RegularityHOHamiltonian}
we propose a definition of both concepts, always taking into account the particular case of
Hamiltonian functions associated to a regular Lagrangian system. Moreover, we extend in a nontrivial
way some results from \cite{book:Abraham_Marsden78}, namely Propositions 3.6.7 and 3.6.8, and
Theorem 3.6.9, to higher-order autonomous dynamical systems by means of the answers to the proposed
questions.

For constrained systems, as well for singular Lagrangian functions, an alternative approach for a
better understanding of the geometry involving mechanical systems was established by W.M. Tulczyjew:
the so-called Tulczyjew's triple \cite{art:Tulczyjew76_2,art:Tulczyjew76_1}, which makes strong use
of Lagrangian submanifolds of suitable symplectic manifolds. Lagrangian submanifolds are of great
interest in geometric mechanics, since they provide a way of describing both Lagrangian and
Hamiltonian dynamics from a purely geometric and intrinsic point of view (see \cite{art:Tulczyjew76_2, art:Tulczyjew76_1}). In particular, let us recall that given a mechanical
system described by a Lagrangian function $L \colon TQ \to \R$, the Lagrangian dynamics are
``generated'' by the Lagrangian submanifold $dL(TQ) \subset T^*TQ$. On the other hand, if the
system is described by a Hamiltonian function $H \colon T^*Q \to \R$, then the Hamiltonian dynamics
are ``generated'' by the Lagrangian submanifold $dH(T^*Q) \subset T^*T^*Q$. The relationship
between these two formulations is provided by the so-called Tulczyjew's triple
\begin{equation*}
\xymatrix{
T^{*}TQ & \ & TT^{*}Q \ar[ll]_-{\alpha_{Q}}\ar[rr]^-{\beta_Q} & \ & T^{*}T^*Q,
}
\end{equation*}
where $\alpha_Q$ and $\beta_Q$ are both vector bundle isomorphisms, and $T^{*}TQ$, $TT^{*}Q$
and $T^{*}T^*Q$ are double vector bundles equipped with suitable symplectic structures.

If we consider a $k$th-order Lagrangian system described by a $k$th-order Lagrangian function
$L \colon T^{(k)}Q \to \R$, a similar construction can be carried out (with some additional
technical issues arising from the fact that $T^{(k)}Q$ is not a vector bundle in general, see
\cite{art:DeLeon_Lacomba88,art:DeLeon_Lacomba89} for details), thus obtaining a Lagrangian
submanifold in $TT^*T^{(k-1)}Q$, which completely determines the equations of motion for the
dynamics. Moreover, these equations of motion are of Hamiltonian type if the Lagrangian system
is regular. Our aim in this work is to study properties of fiber derivatives of functions defined on Lagrangian submanifolds, thus pursuing the research lines established in the works of M. de Le\'on and E. Lacomba \cite{art:DeLeon_Lacomba88,art:DeLeon_Lacomba89}.

The paper is structured as follows. In Section \ref{sec:GeometricBackground} we introduce the
necessary geometric background in order to make the paper as much selfcontained as possible.
In particular, we review the definition and basic properties of symplectic manifolds and Lagrangian
submanifolds, fiber derivatives of fiber preserving maps and their application to relate Lagrangian
and Hamiltonian dynamics, higher-order tangent bundles and some of its canonical structures and,
finally, a short review on the construction of Tulczyjew's triple for first-order dynamical systems.
In Section \ref{sec:GeometricDescriptionHODS} we study fiber derivatives of higher-order
Lagrangian systems, relating the classical Legendre-Ostrogradsky map associated to a $k$th-order
Lagrangian function $L$ and the $k$th-order Legendre transformation defined on the Lagrangian
submanifold $\Sigma_{L}\subset T^{*}TT^{(k-1)}Q$ generated by the Lagrangian $L$. To close this
Section, we study the dynamics of the end of a thrown javelin as an illustrative example. Finally, in Section
\ref{sec:Definition&RegularityHOHamiltonian} we study the problem of giving an universal definition
of higher-order Hamiltonian function, and the regularity properties of the fiber derivative
associated with such a Hamiltonian.

\section{Geometric background}
\label{sec:GeometricBackground}

In this Section we introduce the geometric structures and definitions that we use along this work. All the manifolds are real, second countable and $C^\infty$. The maps and the structures are assumed
to be $C^\infty$. Sum over crossed repeated indices is understood. If $M$ denotes a finite-dimensional
smooth manifold, then $C^\infty(M)$, $\vf(M)$ and $\Omega^k(M)$ denote the sets of smooth functions,
smooth vector fields and smooth $k$-forms on $M$, respectively.

\subsection{Symplectic manifolds and Lagrangian submanifolds}

Along this Subsection, $M$ denotes a finite-dimensional smooth manifold.
We refer to \cite{book:Cannas01,book:Libermann_Marle87,book:Weinstein77} for details and proofs.


\begin{definition}
A \textnormal{symplectic form} in $M$ is a closed $2$-form $\omega \in \Omega^2(M)$ which is
nondegenerate, that is, for every $p \in M$, $\inn_{X_p}\omega_p = 0$ if, and only if, $X_p = 0$ where $X_p\in T_p M$.
A \textnormal{symplectic manifold} is a pair $(M,\omega)$, where $M$ is a smooth manifold and
$\omega$ is a symplectic form.
\end{definition}

\begin{remark}
If $(M,\omega)$ is a symplectic manifold, then the nondegeneracy of $\omega$ implies that $M$ has
even dimension, that is, $\dim M = 2n$.
\end{remark}

\begin{definition}
Let $(M_1,\omega_1)$ and $(M_2,\omega_2)$ be two symplectic manifolds, and $\Phi \colon M_1 \to M_2$
a diffeomorphism. $\Phi$ is a \textnormal{symplectomorphism} if $\Phi^*\omega_2 = \omega_1$, and it
is an \textnormal{anti-symplectomorphism} if $\Phi^*\omega_2 = -\omega_1$, where $\Phi^{*}\omega_2$
denotes the pull-back of the $2$-form $\omega_2$ by the diffeomorphism $\Phi$.
\end{definition}

A distinguished symplectic manifold is the cotangent bundle $T^*Q$ of a $n$-dimensional smooth
manifold $Q$. Let $\pi_Q \colon T^*Q \to Q$ be the canonical projection defined by
$\pi_Q(\alpha_q) = q\in Q$, where $\alpha_q\in T^*_{q}Q$. The \textsl{Liouville $1$-form}, denoted by $\theta_Q \in \df^1(T^*Q)$, is defined as
\begin{equation*}
\langle(\theta_Q)_{\alpha_q}, X_{\alpha_q} \rangle = \langle \alpha_q , T_{\alpha_q}\pi_{T^{*}Q}(X_{\alpha_q})\rangle \, , \
\mbox{where } \alpha_q\in T^*Q \mbox{ and } X_{\alpha_q} \in T_{\alpha_q}T^*Q \, .
\end{equation*}
Observe that the Liouville $1$-form satisfies $\alpha^*\theta_Q = \alpha$ for every
$\alpha \in \df^1(Q)$. Then, one can define the \textsl{canonical symplectic form} of $T^*_{q}Q$, or
\textsl{Liouville $2$-form}, as $\omega_Q = -\d\theta_Q \in \df^2(T^*Q)$. If $(U;(q^i))$,
$1 \leqslant i \leqslant n$, are local coordinates in $Q$, the induced natural coordinates in 
$\pi_Q^{-1}(U) \subseteq T^*Q$ are $(q^i,p_i)$, $1 \leqslant i \leqslant n$. In these coordinates,
the local expression of the Liouville $1$-form is $\theta_Q = p_i \d q^i$, from where, the canonical symplectic form has the following coordinate expression $\omega_Q = \d q^i \wedge \d p_i$.

The existence of a nondegenerate $2$-form on symplectic manifolds enables us to define some special 
submanifolds. In particular, we are interested in the study of \textsl{Lagrangian submanifolds}.

\begin{definition}
Let $(M,\omega)$ be a symplectic manifold. An immersed submanifold $i_N \colon N \hookrightarrow M$ is a
\textnormal{Lagrangian submanifold} if $\dim N = \frac{1}{2}( \dim M)$ and $i_N^*\omega = 0$.
\end{definition}

Next, we introduce some particular Lagrangian submanifolds of the symplectic manifold
$(T^*Q,\omega_Q)$. The first one is the image of a closed $1$-form. Indeed, let $\lambda\in\df^1(Q)$
be a closed $1$-form, and let us consider the subset
$\Sigma_{\lambda} = \lambda(Q) \subset T^*Q$, which is a submanifold of $T^*Q$ with canonical embedding $\lambda \colon Q \hookrightarrow T^*Q$.
Then we have $\lambda^*\omega_Q = \lambda^*(-\d\theta_Q) = -\d\lambda^*\theta_Q = -\d \lambda = 0$ since $\lambda$ is closed. Hence, $\Sigma_{\lambda}$ is a Lagrangian submanifold. If, moreover,
$\lambda$ is exact, that is, $\lambda = \d f$, with $f \in C^\infty(Q)$, we say that $f$ is a
\textsl{generating function} of the Lagrangian submanifold ${\Sigma}_{\lambda}$, and we denote it
by $\Sigma_{f}$ (see \cite{book:Weinstein77} for details).

There is a more general construction of Lagrangian submanifolds given by J. \'Sniatycki and W.M.
Tulczyjew in \cite{art:Sniatycki_Tulczyjew72} (see also \cite{art:Tulczyjew76_2,art:Tulczyjew76_1})
which we use in Subsection \ref{subsec:HOTulczyjewTriple} to generate the dynamics of a higher-order
dynamical system through Lagrangian submanifolds.

\begin{theorem}[\'Sniatycki \& Tulczyjew]\label{thm:Sniatycki_Tulczyjew}
Let $Q$ be a smooth manifold, $\tau_Q \colon TQ \to Q$ its tangent bundle,
$i_N \colon N \hookrightarrow Q$ a $k$-dimensional submanifold, and $f \colon N \to \R$ a smooth function.
Then
\begin{align*}
\Sigma _{f,N} &= \left\{ \mu \in T^*Q \mid \pi _{Q} (\mu) \in N \mbox{ and }
\left\langle \mu, v \right \rangle = \left\langle \d f , v \right\rangle \mbox{ for every } v \in T_{\pi_Q(\mu)}N \right\} \\
&= \left\{ \mu \in T^*Q \mid i_{N}^{*}\mu = \d f \right\} \, ,
\end{align*}
is a Lagrangian submanifold of $(T^*Q,\omega_Q)$.
\end{theorem}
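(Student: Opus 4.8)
The plan is to verify the two defining conditions of a Lagrangian submanifold for $\Sigma_{f,N}$: that it has the correct dimension, namely $\frac{1}{2}\dim(T^*Q) = \dim Q = n$, and that the canonical symplectic form $\omega_Q$ pulls back to zero on it. Before checking either, I would first confirm that $\Sigma_{f,N}$ genuinely is a submanifold of $T^*Q$. To this end, I would work in adapted local coordinates $(q^i) = (q^a, q^\alpha)$ on $Q$ with $1 \leqslant a \leqslant k$ and $k+1 \leqslant \alpha \leqslant n$, chosen so that $N$ is locally cut out by $q^\alpha = 0$ and parametrized by the $q^a$. In the induced cotangent coordinates $(q^i, p_i) = (q^a, q^\alpha, p_a, p_\alpha)$, the condition $\pi_Q(\mu) \in N$ reads $q^\alpha = 0$, while the condition $i_N^*\mu = \d f$ reads $p_a = \partial f/\partial q^a(q^b)$ (the $p_\alpha$ remaining free). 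Thus $\Sigma_{f,N}$ is locally the graph of a smooth map over the $(q^a, p_\alpha)$, hence an embedded submanifold of dimension $k + (n-k) = n$, which is the required half-dimension.

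Next I would compute the pullback $i_{\Sigma_{f,N}}^*\omega_Q$. Using the coordinate expression $\omega_Q = \d q^i \wedge \d p_i = \d q^a \wedge \d p_a + \d q^\alpha \wedge \d p_\alpha$ and substituting the defining relations $q^\alpha = 0$ (so $\d q^\alpha = 0$ on $\Sigma_{f,N}$) and $p_a = \partial f/\partial q^a$ (so $\d p_a = \frac{\partial^2 f}{\partial q^a \partial q^b}\,\d q^b$ on $\Sigma_{f,N}$), the second summand vanishes because $\d q^\alpha$ restricts to zero, and the first summand becomes $\frac{\partial^2 f}{\partial q^a \partial q^b}\,\d q^a \wedge \d q^b$, which vanishes by symmetry of second partial derivatives. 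Hence $i_{\Sigma_{f,N}}^*\omega_Q = 0$, and combined with the dimension count, $\Sigma_{f,N}$ is a Lagrangian submanifold.

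Alternatively — and this is the more conceptual route I would mention — one can avoid coordinates by invoking the structure already built into the excerpt. Since $\d f \in \Omega^1(N)$, its image $\d f(N) \subset T^*N$ is a Lagrangian submanifold of $(T^*N, \omega_N)$ as noted just before the theorem (it is the image of a closed, indeed exact, $1$-form, and $\d f^*\omega_N = -\d(\d f) = 0$). One then identifies $\Sigma_{f,N}$ with the total space of the pullback over $\d f(N)$ of the conormal-type construction relating $T^*Q|_N$ to $T^*N$: restricting covectors from $Q$ to $N$ gives a surjective bundle map $T^*Q|_N \to T^*N$ whose kernel is the conormal bundle $(TN)^\circ$, and $\Sigma_{f,N}$ is precisely the preimage of $\d f(N)$ under this map. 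A short computation with the Liouville forms — the pullback of $\theta_Q$ to $T^*Q|_N$ differs from the pullback of $\theta_N$ by a form that is controlled along the conormal directions — then shows the pullback of $\omega_Q$ vanishes, while the dimension is $\dim N + \mathrm{rank}(TN)^\circ = k + (n-k) = n$.

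The main obstacle is largely bookkeeping rather than conceptual: one must be careful that the adapted coordinates genuinely exist (standard, by the local submanifold theorem), that the map $\mu \mapsto i_N^*\mu$ is well-defined and smooth, and that the "graph" description is valid globally over $N$ and not just in a single chart, so that $\Sigma_{f,N}$ is an honest immersed (indeed embedded) submanifold and not merely a union of coordinate pieces. I expect the coordinate proof to be the cleanest to write out, with the symmetry of $\frac{\partial^2 f}{\partial q^a \partial q^b}$ doing the essential work; the only real subtlety is making sure the $p_\alpha$ directions — which survive in $\Sigma_{f,N}$ and encode the conormal bundle of $N$ — are handled correctly when restricting $\omega_Q$, which is exactly where the condition $\d q^\alpha|_{\Sigma_{f,N}} = 0$ is used.
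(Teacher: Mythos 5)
Your proposal is correct and follows essentially the same route as the paper: adapted coordinates in which $N$ is cut out by $q^{k+1}=\dots=q^n=0$, the local description of $\Sigma_{f,N}$ as $q^\alpha=0$, $p_a=\partial f/\partial q^a$ with the $p_\alpha$ free, the dimension count $k+(n-k)=n=\tfrac{1}{2}\dim T^*Q$, and the vanishing of $i_{\Sigma_{f,N}}^*\omega_Q$ from the coordinate expression of $\omega_Q$ (with the symmetry of the Hessian of $f$ doing the work the paper leaves implicit). Your coordinate-free sketch via the conormal bundle corresponds to the intrinsic proof the paper delegates to Tulczyjew's reference, so no discrepancy arises.
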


Let $(q^i)$, $1 \leqslant i \leqslant n$, be local coordinates in $Q$ adapted to $N$,
that is, such that $N$ is locally defined by the constraints $q^{k+1} = \ldots = q^n = 0$. Then, the
smooth function $f \colon N \to \R$ depends only on the coordinates $q^{1},\ldots,q^k$, and the
submanifold $\Sigma_{f,N} \hookrightarrow T^*Q$ is locally defined by
\begin{equation*}
\Sigma_{f,N} = \left\{ (q^i,p_i) \in T^*Q \mid q^{i} = 0 \, , \
p_j = \derpar{f}{q^j} \mbox{ for } k+1 \leqslant i \leqslant n-k \, , \, 1 \leqslant j \leqslant k \right\}.
\end{equation*}
Thus, it follows that $\dim\Sigma_{f,N} = n = \dim Q = \frac{1}{2} \dim T^*Q.$ Moreover, taking into
account the local expression of the canonical symplectic form $\omega_{Q}$, if we denote
$i_{\Sigma_{f,N}} \colon \Sigma_{f,N} \hookrightarrow T^{*}Q$ the canonical embedding, it follows that
$i_{\Sigma_{f,N}}^{*}\omega_{Q} = 0$. Therefore, $\Sigma_{f,N}$ is a Lagrangian submanifold of the
symplectic manifold $(T^{*}Q,\omega_{Q})$ (see \cite{art:Tulczyjew76_2} for an intrinsic proof).

The importance of this result lies in the fact that Lagrangian submanifolds are associated to the
dynamics of Lagrangian and Hamiltonian systems subject or not to constraints as we show in
Subsection \ref{subsec:HOTulczyjewTriple}.

\subsection{Fiber derivative of a Hamiltonian function}
\label{subsec:FiberDerivativeHamiltonian}

Along this Subsection, we consider a first-order dynamical system with $n$ degrees of freedom whose
configuration space is modeled by a $n$-dimensional smooth manifold $Q$, and let $H \in C^\infty(T^*Q)$
be a Hamiltonian function describing the dynamics of the system (see \cite{book:Abraham_Marsden78},
\S 3.5 and \S 3.6, for details). 

\begin{definition}\label{def:FiberDerivative}
Let $\pi \colon E \to M$ and $\rho \colon F \to M$ be vector bundles over the common base manifold
$M$, and let $f \colon E \to F$ be a smooth fiber preserving map (not necessarily a vector bundle
morphism). Let $f_x$ denote $\restric{f}{E_x}$, where $E_x = \pi^{-1}(x)$ is the fiber over $x \in M$.
The \textnormal{fiber derivative} of $f$ is defined to be the map
\begin{equation*}
\begin{array}{rcl}
\Leg f \colon E & \longrightarrow & \displaystyle{\bigcup_{x \in M}} L(E_x,F_x) \\
v_x & \longmapsto & Df_x(v_x)
\end{array}
\end{equation*}
where $L(E_x,F_x)$ denotes the vector space of linear mappings from $E_x$ to $F_x$.
\end{definition}

Next, we apply the construction given in Definition \ref{def:FiberDerivative} to the Hamiltonian
function $H$. Let $E$ be the cotangent bundle of $Q$, $\pi_Q \colon T^*Q \to Q$ the canonical
projection, $F=Q\times\R$ the trivial vector bundle with projection
$\pr_1 \colon Q \times \R \to Q$, and $f$ the map $\widetilde{H} \colon T^*Q \to Q \times \R$
defined by
\begin{equation*}
\widetilde{H}(\alpha_q) = ( \pi_Q(\alpha_q), H(\alpha_q) ) \, .
\end{equation*}
The map $\widetilde{H}$ is smooth and fiber preserving, since
\begin{equation*}
\pr_1(\widetilde{H}(\alpha_q)) = \pr_1(\pi_Q(\alpha_q),H(\alpha_q)) = \pi_Q(\alpha_q) \, ,
\end{equation*}
although $\widetilde{H}$ is not a vector bundle morphism in general. Taking into account that
$L(T^*_qQ,\R) = T_q^{**}Q \cong T_qQ$, the \textsl{fiber derivative of $H$}, denoted by
$\Leg H \colon T^*Q \to TQ$, is defined as the fiber derivative of the map $\widetilde{H}$ in the
sense of Definition \ref{def:FiberDerivative}.

\begin{remark}
This same procedure can be carried out with a Lagrangian function $L \in C^\infty(TQ)$. As it
is well-known, the fiber derivative of $L$, $\Leg L \colon TQ \to T^*Q$, is the Legendre
map $\leg_L$ relating the Lagrangian and Hamiltonian formalisms of dynamical systems (see
\cite{book:Abraham_Marsden78}, \S3.5 for details).
\end{remark}

The map $\Leg H \colon T^*Q \to TQ$ is smooth and fiber preserving, that is, $\tau_Q \circ \Leg H = \pi_Q$.
Let $(U,(q^i))$ be a local chart in $Q$, and $(q^i,p_i)$ the
induced natural coordinates in $\pi_Q^{-1}(U) \subseteq T^*Q$. Then, the coordinate expression of
$\Leg H$ is determined by
\begin{equation*}
\Leg H(q^{i},p_{i})=\left(q^{i},\frac{\partial H}{\partial p_{i}}\right) \, ,
\end{equation*}
from where we can observe that $\Leg H$ is smooth and fiber preserving.

\begin{definition}\label{def:RegularHamiltonian}
A Hamiltonian function $H \in C^\infty(T^*Q)$ is \textnormal{regular} if the map $\Leg H \colon T^*Q
\to TQ$ is a local diffeomorphism, and it is \textnormal{hyperregular} if $\Leg H$ is a global
diffeomorphism. Otherwise, the Hamiltonian function is said to be \textnormal{singular}.
\end{definition}

Locally, the regularity condition for $H$ is equivalent to
\begin{equation*}
\det \left( \derpars{H}{p_i}{p_j} \right)(\alpha_q) \neq 0 \, , \
\mbox{for every } \alpha_q \in T^*_{q}Q \, ,
\end{equation*}
that is, a Hamiltonian function is regular if, and only if, its Hessian matrix with respect to the
momenta is invertible at every point of $T^*Q$.

Next, we give a brief review of the relation between Hamiltonian and Lagrangian formalisms in terms
of the fiber derivative of $H$ (see \cite{book:Abraham_Marsden78} details). First, let us recall
how to define a Lagrangian $L \in C^\infty(TQ)$ describing the dynamics of the system starting
from a Hamiltonian function.

\begin{proposition}[\cite{book:Abraham_Marsden78}, Prop. 3.6.7]\label{prop:LagrangianFromHamiltonian}
Let $H \in C^\infty(T^*Q)$ be a hyperregular Hamiltonian function, $\theta_Q \in \df^{1}(T^*Q)$
the Liouville $1$-form, $\omega_Q \in \df^{2}(T^*Q)$ the canonical symplectic form and $X_H \in
\vf(T^*Q)$ the unique vector field solution to the dynamical equation $\displaystyle{\inn_{X_H}\omega_Q = \d H \,}$. The function $L \in C^\infty(TQ)$ defined by $L = \theta_Q(X_H) \circ \Leg H^{-1} - H \circ \Leg H^{-1}$ is hyperregular, and $\Leg L \equiv \leg_L = \Leg H^{-1}$.
\end{proposition}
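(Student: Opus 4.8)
The plan is to verify the two assertions of Proposition~\ref{prop:LagrangianFromHamiltonian} --- that $L$ is hyperregular and that $\Leg L = \Leg H^{-1}$ --- by a direct computation in natural coordinates, using the coordinate expression of $\Leg H$ already established above and the fact that $H$ hyperregular means $\Leg H \colon T^*Q \to TQ$ is a global diffeomorphism. First I would write the dynamical vector field $X_H$ in coordinates: from $\inn_{X_H}\omega_Q = \d H$ with $\omega_Q = \d q^i \wedge \d p_i$ one gets the familiar $X_H = \derpar{H}{p_i}\,\derpar{}{q^i} - \derpar{H}{q^i}\,\derpar{}{p_i}$. Pairing with the Liouville form $\theta_Q = p_i\,\d q^i$ yields $\theta_Q(X_H) = p_i\,\derpar{H}{p_i}$, which is (up to sign) the Legendre transform in the classical sense. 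Hence the function $G \colon T^*Q \to \R$ defined by $G = \theta_Q(X_H) - H = p_i\,\derpar{H}{p_i} - H$ has the property that $L = G \circ \Leg H^{-1}$.

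Next I would introduce coordinates $(q^i,v^i)$ on $TQ$ and use $\Leg H(q^i,p_i) = (q^i, \derpar{H}{p_i})$, so that on the image $v^i = \derpar{H}{p_i}$. Writing $\Leg H^{-1}(q^i,v^i) = (q^i,\psi_i(q,v))$ for the inverse (which exists and is smooth since $H$ is hyperregular), we have the identity $\derpar{H}{p_i}(q,\psi(q,v)) = v^i$. Then $L(q^i,v^i) = \psi_j(q,v)\,v^j - H(q,\psi(q,v))$. Differentiating with respect to $v^i$ and applying the chain rule, the terms involving $\derpar{\psi_j}{v^i}$ cancel against each other because $\derpar{H}{p_j}(q,\psi) = v^j$, leaving simply $\derpar{L}{v^i} = \psi_i(q,v)$. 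This shows $\Leg L(q^i,v^i) = (q^i, \psi_i(q,v)) = \Leg H^{-1}(q^i,v^i)$, proving the second claim. Since $\Leg H^{-1}$ is a global diffeomorphism, so is $\Leg L$, hence $L$ is hyperregular, proving the first claim; alternatively one notes $\det(\derpars{L}{v^i}{v^j}) = \det(\derpar{\psi_i}{v^j})$ is the Jacobian of the inverse map and is therefore nonvanishing wherever $\det(\derpars{H}{p_i}{p_j})$ is.

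I would close by remarking that an intrinsic, coordinate-free argument is also available: $L$ is constructed so that $\d L$ pulls back correctly under $\Leg H^{-1}$, and the identity $\Leg L = \Leg H^{-1}$ is precisely the higher statement that the Legendre maps are mutually inverse, which is the content of the classical first-order theory in \cite{book:Abraham_Marsden78}; but since we only need the first-order case here, the coordinate computation is the cleanest route. The main obstacle I anticipate is purely bookkeeping: keeping the cancellation of the $\derpar{\psi_j}{v^i}$ terms transparent and making sure the sign conventions for $\omega_Q$, $\theta_Q$, and the energy $\theta_Q(X_H) - H$ are consistent with the coordinate formula $\Leg H(q^i,p_i) = (q^i,\derpar{H}{p_i})$ stated above; once those are fixed, the proof is a short two-line chain-rule argument.
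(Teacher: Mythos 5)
Your proof is correct: the coordinate computation of $X_H$, the cancellation of the $\derpar{\psi_j}{v^i}$ terms via $\derpar{H}{p_j}(q,\psi)=v^j$, and the conclusion $\Leg L=\Leg H^{-1}$ (hence hyperregularity of $L$) are all sound. The paper itself does not reprove this statement—it imports it from Abraham--Marsden—but your argument is essentially the same coordinate-based pattern the paper uses for its higher-order analogue (Proposition~\ref{prop:HOLagrangianFromHamiltonian}), so there is nothing substantive to add.
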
	

Observe that, up to this point, the Hamiltonian function $H \in C^\infty(T^*Q)$ could be any
function defined on the cotangent bundle. Nevertheless, if $L \in C^\infty(TQ)$ is a
hyperregular Lagrangian function describing the dynamics of the system, and $E_L = \Delta(L)
- L \in C^\infty(TQ)$ is the energy of the system, with $\Delta \in \vf(TQ)$ being the Liouville
vector field, then we can define a Hamiltonian function $H = E_L \circ \leg_L^{-1} \in
C^\infty(T^*Q)$. Then, the following result holds.

\begin{proposition}[\cite{book:Abraham_Marsden78}. Prop. 3.6.8 and Thm. 3.6.9]
\label{prop:FiberDerivativeHInverse}
Let $L \in C^\infty(TQ)$ be a hyperregular Lagrangian and $H = E_L \circ \leg_L^{-1} \in
C^\infty(T^*Q)$ the associated Hamiltonian function. Then $H$ is hyperregular and
$\Leg H = \leg_L^{-1}$. In addition, if $\tilde{L} = \theta_Q(X_H) \circ \Leg H^{-1} -
H \circ \Leg H^{-1} \in C^\infty(TQ)$ is the hyperregular Lagrangian function associated to $H$
by Proposition \ref{prop:LagrangianFromHamiltonian}, then $\tilde{L} = L$.
\end{proposition}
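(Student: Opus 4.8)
The plan is to verify the two assertions in turn, using the coordinate expressions and the previously recalled results as black boxes. First I would establish that $H = E_L \circ \leg_L^{-1}$ is hyperregular with $\Leg H = \leg_L^{-1}$. Since $L$ is hyperregular, $\leg_L \colon TQ \to T^*Q$ is a global diffeomorphism, so $\leg_L^{-1}$ is well-defined and smooth. Working in natural coordinates $(q^i, \dot q^i)$ on $TQ$ and $(q^i, p_i)$ on $T^*Q$, the Legendre map reads $\leg_L(q^i,\dot q^i) = (q^i, \partial L/\partial \dot q^i)$, and the energy is $E_L = \dot q^i\,\partial L/\partial \dot q^i - L$. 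Hence $H(q^i,p_i) = \bigl(\dot q^i\,\partial L/\partial\dot q^i - L\bigr)\circ\leg_L^{-1}$, and a direct computation of $\partial H/\partial p_i$ — differentiating through $\leg_L^{-1}$ and using the defining relation $p_j = \partial L/\partial \dot q^j$ to cancel terms — yields $\partial H/\partial p_i = \dot q^i$, i.e. $\Leg H(q^i,p_i) = (q^i,\dot q^i)$ where $(q^i,\dot q^i) = \leg_L^{-1}(q^i,p_i)$. This is exactly $\Leg H = \leg_L^{-1}$. Since $\leg_L^{-1}$ is a global diffeomorphism, $H$ is hyperregular by Definition \ref{def:RegularHamiltonian}.

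Next I would prove $\tilde L = L$, where $\tilde L = \theta_Q(X_H)\circ \Leg H^{-1} - H\circ \Leg H^{-1}$. By the first part, $\Leg H^{-1} = \leg_L$, so $\tilde L = \bigl(\theta_Q(X_H) - H\bigr)\circ\leg_L$. It remains to identify $\theta_Q(X_H)$. The Hamiltonian vector field $X_H$ determined by $\inn_{X_H}\omega_Q = \d H$ has the standard local form $X_H = (\partial H/\partial p_i)\,\partial/\partial q^i - (\partial H/\partial q^i)\,\partial/\partial p_i$, and with $\theta_Q = p_i\,\d q^i$ one gets $\theta_Q(X_H) = p_i\,\partial H/\partial p_i$. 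Therefore $\bigl(\theta_Q(X_H) - H\bigr)(q^i,p_i) = p_i\,\partial H/\partial p_i - H$. Precomposing with $\leg_L$ and substituting $p_i = \partial L/\partial\dot q^i$ and $\partial H/\partial p_i\circ\leg_L = \dot q^i$ (from the first part) gives $\tilde L(q^i,\dot q^i) = \dot q^i\,\partial L/\partial\dot q^i - E_L\circ\leg_L\circ\leg_L^{-1}\cdots$; more cleanly, $p_i\,\partial H/\partial p_i - H$ is by construction $\theta_Q(X_H) - H$, which equals $E_L\circ\leg_L^{-1}$ composed appropriately — unwinding, $\tilde L = \Delta(L) - E_L = \Delta(L) - (\Delta(L) - L) = L$. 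One may also simply invoke Proposition \ref{prop:LagrangianFromHamiltonian}, which already asserts $\Leg{\tilde L} = \Leg H^{-1} = \leg_L$; since $\tilde L$ and $L$ have the same Legendre map and, by the formula above, the same energy, and the relation between the two in the fibered coordinates determines the function uniquely up to a function on the base which is pinned down by matching values, one concludes $\tilde L = L$.

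The cleanest route to the final equality is in fact to run Proposition \ref{prop:LagrangianFromHamiltonian} backwards: that proposition gives $\Leg{\tilde L} = \Leg H^{-1}$, and we have just shown $\Leg H^{-1} = \leg_L = \Leg L$; so $\tilde L$ and $L$ are two hyperregular Lagrangians with the same fiber derivative, hence the same momenta, and then the coincidence of $\tilde L$ and $L$ follows from reconstructing each from its (common) Legendre map and its energy, since $\theta_Q(X_H)\circ\Leg H^{-1} = p_i \dot q^i = \Delta(L)$ pulls back correctly and $H\circ\leg_L = E_L$ by definition of $H$, giving $\tilde L = \Delta(L) - E_L = L$.

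I expect the main obstacle to be bookkeeping rather than conceptual: keeping track of exactly where each function is evaluated when composing with $\leg_L$ and $\leg_L^{-1}$, so that the terms involving derivatives of $\leg_L^{-1}$ genuinely cancel. The chain-rule cancellation in computing $\partial H/\partial p_i = \dot q^i$ is the crux — it is the higher-bookkeeping analogue of the elementary Legendre-transform identity — and once that is in hand, both the hyperregularity of $H$ and the identity $\tilde L = L$ follow by assembling the pieces together with the cited propositions.
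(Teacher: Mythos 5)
Your proposal is correct: the chain-rule cancellation giving $\derpar{H}{p_i}\circ\leg_L=\dot q^i$, hence $\Leg H=\leg_L^{-1}$ and the hyperregularity of $H$, and the final identification $\tilde L=\Delta(L)-E_L=L$ after pulling $\theta_Q(X_H)-H$ back by $\leg_L$, are exactly the standard computations. Note, however, that the paper itself offers no proof of this statement; it is quoted from Abraham--Marsden (Prop.\ 3.6.8 and Thm.\ 3.6.9) as background material. The closest argument in the paper is the proof of its higher-order analogue, Proposition \ref{prop:HOFiberDerivativeHInverse}, which proceeds intrinsically: there the equality $\tilde L=L$ is obtained by writing $\theta_{T^{(k-1)}Q}(X_H)-H=\left(\leg_L^{-1}\right)^*\left(\theta_L(X_L)-E_L\right)$ and invoking Lemma \ref{lemma:TechLemma} (for a semispray, $\theta_L(X)-E_L$ is the pullback of $L$), whose first-order instance is precisely your identity $\theta_Q(X_H)\circ\leg_L=\Delta(L)$, i.e.\ $\theta_L(X_L)-E_L=L$. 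So your coordinate route is more elementary and makes the crucial cancellation self-contained, while the paper's pullback route avoids coordinates and is the one that survives the passage to order $k$, where the fiber derivative is no longer invertible and one must instead prove $\Leg H_o$-projectability and work with sections. One caveat: your intermediate remark that two hyperregular Lagrangians with the same fiber derivative are recovered from the common Legendre map and energy is loose as stated (equal Legendre maps only force the difference to be a function on the base, and you do not separately check the energies coincide in that step); but this detour is harmless, since your direct computation $\tilde L=\Delta(L)-E_L=L$ already closes the argument without it.
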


\subsection{Tulczyjew's triple}
\label{subsec:TulczyjewTriple}

In \cite{art:Tulczyjew76_2,art:Tulczyjew76_1}, Tulczyjew established two identifications, the
first one between $TT^*Q$ and $T^*TQ$ (useful to describe Lagrangian mechanics) and the second one
between $TT^*Q$ and $T^*T^*Q$ (useful to describe Hamiltonian mechanics), giving rise to the so-called
\textsl{Tulczyjew's triple}. In this Subsection we summarize these results. Along this Subsection,
$Q$ denotes a $n$-dimensional smooth manifold.

Let us recall that the double tangent bundle $TTQ$ of a manifold $Q$ is endowed with two vector
bundle structures over the base $TQ$, given by the canonical projection $\tau_{TQ} \colon TTQ \to TQ$
arising from the tangent bundle structure, and the tangent map $T\tau_Q \colon TTQ \to TQ$ of $\tau_Q$
of the canonical projection $\tau_Q \colon TQ \to Q$ arising from the starting tangent bundle structure.
These two structures are related by the \textsl{canonical flip} $\kappa_Q \colon TTQ \to TTQ$, which is
an isomorphism of double vector bundles. If $(U;(q^i))$, $1 \leqslant i \leqslant n$, is a local
chart in $Q$ and $(q^i,v^i,\dot{q}^i,\dot{v}^i)$ the induced local coordinates in a suitable open
set of $TTQ$, then $\kappa_Q$ is given locally by
$
\kappa_Q(q^i,v^i,\dot{q}^i,\dot{v}^i) = (q^i,\dot{q}^i,v^i,\dot{v}^i)$.
It is clear from this coordinate expression that $\kappa_Q$ is an involution. From this, we can give
the following definition.

\begin{definition}
The \textnormal{Tulczyjew's isomorphisms} are the diffeomorphisms $\alpha_Q \colon TT^*Q \to T^*TQ$
and $\beta_Q \colon TT^*Q \to T^*T^*Q$ defined as follows:
\begin{enumerate}
\item $\alpha_Q$ is the dual map of $\kappa_Q$ (as a double vector bundle morphism).
\item If $\omega_Q \in \df^{2}(T^*Q)$ is the canonical symplectic form, then
$\beta_Q(X) = i_X\omega_Q \, , \ X\in TT^*Q $. \end{enumerate}
\end{definition}

Let $(U;(q^i))$ be a local chart in $Q$, and $(q^i,p_i)$ the induced natural coordinates in
$\pi_Q^{-1}(U) \subseteq T^*Q$. The induced natural coordinates in
$\tau_{T^*Q}^{-1}(\pi_Q^{-1}(U)) \subseteq TT^*Q$ are $(q^i,p_i,\dot{q}^i,\dot{p}_i)$. In these
coordinates, the maps $\alpha_Q$ and $\beta_Q$ are given by
$\alpha_{Q}(q^i,p_i,\dot{q}^i,\dot{p}_i) = (q^i,\dot{q}^i,\dot{p}_i,p_i)$ and 
$\beta_Q(q^i,p_i,\dot q^i,\dot p_i)=(q^i,\dot q^i,\dot p_i,p_i)$, respectively. 

The map $\alpha_Q$ is a symplectomorphism when we consider on $TT^{*}Q$ the symplectic structure given by the complete lift $\omega_{Q}^{c}$ of the canonical symplectic form $\omega_{Q}$ on $T^{*}Q$
and on $T^*TQ$ the canonical symplectic form $\omega_{TQ}$. On the other hand, the map $\beta_Q$ is
an anti-symplectomorphism when we consider on $TT^{*}Q$ the same symplectic structure $\omega_{Q}^{c}$
and we consider on $T^*T^*Q$ the canonical symplectic structure $\omega_{T^{*}Q}$.

The maps $\beta_Q$ and $\alpha_Q$ give rise to the \textsl{Tulczyjew triple}, summarized in the
following diagram
\begin{equation*}
\xymatrix{
T^*TQ \ar[ddr]_{\pi_{TQ}} & \ & TT^*Q \ar[ddr]^{\tau_{T^*Q}} \ar[ddl]_{T\pi_{Q}} \ar[ll]_{\alpha_Q}\ar[rr]^{\beta_Q} & \ & T^*T^*Q \ar[ddl]^{\pi_{T^*Q}}  \\
\ & \ & \ & \ & \ \\
\ & TQ \ar[rr]^{\leg_{{L}}} \ar[ddr]_{\tau_{Q}} & \ &T^*Q \ar[ddl]^{\pi_{Q}} & \ \\
\ & \ & \ & \ & \ \\
\ & \ & Q & \ & \
}
\end{equation*}
where $\leg_{L} \colon TQ \to T^*Q$ denotes the Legendre transformation associated to a given
Lagrangian function ${L} \in C^\infty(TQ)$.

\subsection{Higher-order tangent bundles}
\label{subsec:HOTangentBundles}

(See \cite{book:DeLeon_Rodrigues85,book:Saunders89} for details).

Let $Q$ be a $n$-dimensional smooth manifold. We introduce an equivalence relation in the set
$C^{\infty}(\R,Q)$ of smooth curves $\gamma \colon \R \to Q$ as follows: given two curves
$\gamma_1, \gamma_2 \colon (-a,a) \to Q$, with $a > 0$, we say that $\gamma_1$ and $\gamma_2$
have \textsl{contact of order  $k$} at $q_0 = \gamma_1(0) = \gamma_2(0)$ if there exists a local
chart $(U,\varphi)$ of $Q$ such that $q_0 \in U$ and
\begin{equation*}
\restric{\frac{d^j}{dt^j}}{t=0} \left(\varphi \circ \gamma_1(t)\right)
= \restric{\frac{d^j}{dt^j}}{t=0} \left(\varphi \circ\gamma_2(t)\right) \, ,
\end{equation*}
for $j = 0,\ldots,k$. This is a well defined equivalence relation in $C^{\infty}(\R,Q)$ and the
equivalence class of a  curve $\gamma$ is denoted $[\gamma ]_0^{(k)}$. The set of equivalence
classes is denoted $T^{(k)}Q$, and it can be proved that it is a smooth manifold. Moreover, the
map $\tau^k_{Q} \colon T^{(k)}Q \to Q$ defined by $\tau^k_{Q} \left([\gamma]_0^{(k)} \right) = \gamma(0)$
endows $T^{(k)}Q$ with a fiber bundle structure over $Q$, and therefore $T^{(k)}Q$ is called
the \textsl{tangent bundle of order $k$} of $Q$, or \textsl{$k$th-order tangent bundle} of $Q$.

The manifold $T^{(k)}Q$ is endowed with some additional structure. In particular, for every
$0 \leqslant r \leqslant k$ we define a surjective submersion
$\tau_Q^{(r,k)} \colon T^{(k)}Q \to T^{(r)}Q$ as
$\tau_Q^{(r,k)}\left([\gamma]_0^{(k)}\right) = [\gamma]_0^{(r)}$. It is easy to see that for every
$0 \leqslant r \leqslant k$, the map $\tau_Q^{(r,k)}$ defines a fiber bundle structure. Moreover,
we have that $T^{(1)} Q \equiv TQ$ is just the usual tangent bundle of $Q$, $T^{(0)} Q \equiv Q$
and $\tau_Q^{(0,k)} = \tau^k_{Q}$.

The \textsl{$r$-lift} of a smooth function $f \in C^\infty(Q)$, for $0 \leqslant r \leqslant k$, is
the smooth function $f^{(r,k)} \in C^\infty(T^{(k)}Q)$ defined as
\begin{equation*}
f^{(r,k)} \left( [\gamma]^{(k)}_0 \right) = \restric{\frac{d^r}{dt^r}}{t=0} \left(f \circ \gamma(t)\right) \, .
\end{equation*}
Of course, these definitions can be applied to functions defined on open sets of $Q$. Observe that
the $0$-lift of $f$ coincides with $f$.

Local coordinates in $T^{(k)}Q$ are introduced as follows. Let $(U,\varphi)$ a local chart in $Q$
with coordinates $(q^i)$, $1 \leqslant i \leqslant n$. Then the induced natural coordinates in the
open set $(\tau^k_Q)^{-1}(U) \equiv T^{(k)}U \subseteq T^{(k)}Q$ are
$\left( q_{(0)}^{i},q_{(1)}^{i},\ldots,q_{(k)}^{i} \right) \equiv \left( q^i_{(j)} \right)$ with
$1 \leqslant i \leqslant n$, $0 \leqslant j \leqslant k$, where $q_{(r)}^{i} = (q^i)^{(r,k)}$ for
$0 \leqslant r \leqslant k$. Sometimes, we use the standard conventions $q_{(0)}^{i}\equiv q^i$,
$q_{(1)}^{i}\equiv \dot{q}^i$ and $q_{(2)}^{i}\equiv \ddot{q}^i$.

The \textsl{canonical immersion} $j_k \colon T^{(k)}Q \to T(T^{(k-1)} Q)$ is defined as 
\begin{equation}\label{eqn:CanonicalImmersion}
j_k \left( [\gamma]_0^{(k)} \right) = [{\gamma}^{(k-1)}]_0^{(1)} \, ,
\end{equation}
where ${\gamma}^{(k-1)}$ is the lift of the curve $\gamma$ to $T^{(k-1)}Q$; that is, the curve
${\gamma}^{(k-1)} \colon \R \to T^{(k-1)}Q$ given by $\gamma^{(k-1)}(t) = [\gamma_t]_0^{(k-1)}$
where $\gamma_t(s) = \gamma(t+s)$. In the induced local coordinates of $T^{(k)}Q$, the map $j_k$ is
locally given by
\begin{equation*} 
j_k \left( q_{(0)}^{i},q_{(1)}^{i},q_{(2)}^{i},\ldots,q_{(k)}^{i} \right) =
\left( q_{(0)}^{i},q_{(1)}^{i},\ldots,q_{(k-1)}^{i};q_{(1)}^{i},q_{(2)}^{i},\ldots,q_{(k)}^{i} \right) \, ,
\end{equation*}
from where we can deduce that in the induced natural coordinates $\left( q^i_{(j)},v^i_{(j)} \right)$
of $TT^{(k-1)}Q$, the submanifold $T^{(k)}Q$ is defined locally by the $(k-1)n$ constraints
$v_{(j)}^i = q_{(j+1)}^i$.

Denote by $\df^{q}(T^{(k)}Q)$ the real vector space of $q$-forms on $T^{(k)}Q$. In the exterior
algebra of differential forms on $T^{(k)}Q$, denoted
$\displaystyle{\bigoplus_{q\geqslant 0}\df^{q}(T^{(k)}Q)}$, we define an equivalence relation as
follows: for $\alpha \in \df^{q}(T^{(k)}M)$ and $\beta \in \df^{q}(T^{(k')}Q)$,
\begin{equation*}
\alpha \sim \beta \Longleftrightarrow
\begin{cases} \alpha = \left( \tau_Q^{(k',k)} \right)^*(\beta) & \mbox{ if } k'\leqslant k \\ 
\beta = \left(\tau_Q^{(k,k')}\right)^*(\alpha) & \mbox{ if } k' \geqslant k.
\end{cases} \ 
\end{equation*}
Consider the quotient set $\displaystyle{{\Omega} = \bigoplus_{k \geqslant 0}\ df^{q}(T^{(k)}Q) \, / \sim}$,
which is a commutative graded algebra. In this set we define the \textsl{Tulczyjew's derivation},
denoted by $d_T$, as follows: for every $f \in C^\infty(T^{(k)}Q)$ the function
$d_Tf \in C^\infty(T^{(k+1)}Q)$ is defined as 
$\displaystyle{
d_Tf\left( [\gamma]^{(k+1)}_0 \right)
= \left\langle \d_{[\gamma]^{(k)}_0}f \, , j_{k+1} \left([\gamma]^{(k+1)}_0\right)\right\rangle}$ where $j_{k+1} \colon T^{(k+1)}Q \to T(T^{(k)}Q)$ is the canonical immersion, and the covector
$\d_{[\gamma]^{(k)}_0}f \in T^*_{[\gamma]^{(k)}_0}T^{(k)}Q$ is the exterior derivative of $f$ at
$[\gamma]^{(k)}_0 \in T^{(k)}Q$. Using the coordinate expression for $j_{k+1}$, the function $d_T$
is given locally by
\begin{equation*}
d_Tf\left(q_{(0)}^i,\ldots,q_{(k+1)}^i\right)
= \sum_{j=0}^{k} q_{(j+1)}^i \derpar{f}{q_{(j)}^i} \left( q_{(0)}^i,\ldots,q_{(k)}^i \right) \, .
\end{equation*}
The map $d_T$ extends to a derivation of degree $0$ in $\Omega$ and, as $d_T\d = \d d_T$, it
is determined by its action on functions and by the property $d_T(\d q_{(j)}^i) = \d q_{(j+1)}^i$.

\begin{definition}\label{defholonomic}
A curve $\psi \colon \R \to T^{(k)}Q$ is \textsl{holonomic of type $r$}, $1 \leqslant r \leqslant k$,
if $\phi^{(k-r+1,k)} = \tau_Q^{(k-r+1,k)} \circ \psi$, where $\phi = \tau_Q^{k} \circ \psi \colon \R \to Q$;
that is, the curve $\psi$ is the lifting of a curve in $Q$ up to $T^{(k-r+1)}Q$.
\end{definition}
From Definition \ref{defholonomic}, a vector field $X \in \vf(T^{(k)}Q)$ is a \textsl{semispray of type $r$},
$1 \leqslant r \leqslant k$, if every integral curve $\psi$ of $X$ is holonomic of type $r$.
In the natural cordinates of $T^{(k)}Q$, the local expression of a semispray of type $r$ is
\begin{equation}\label{eqn:SemisprayTypeRLocal}
X = q_{(1)}^i\derpar{}{q_{(0)}^i} + q_{(2)}^i\derpar{}{q_{(1)}^i} + \ldots + q_{(k-r+1)}^i\derpar{}{q_{(k-r)}^i} +
F_{(k-r+1)}^i\derpar{}{q_{(k-r+1)}^i} + \ldots + F_{(k)}^i\derpar{}{q_{(k)}^i} \, .
\end{equation}

\begin{remark}
It is clear that every holonomic curve of type $r$ is also holonomic of type $s$, for $s \geqslant r$.
The same remark is true for semisprays.
\end{remark}

\section{Geometric description of higher-order dynamical systems}
\label{sec:GeometricDescriptionHODS}

In this Section we aim at studying fiber derivatives of higher-order Lagrangian systems, relating the classical Legendre-Ostrogradsky map and the $k$th-order Legendre transformation.

\subsection{Higher-order Tulczyjew triple and dynamics generated by Lagrangian submanifolds}
\label{subsec:HOTulczyjewTriple}

In this Subsection we explore some new results in the construction of the Tulczyjew's triple for higher-order dynamical systems of M. de Le\'on and E. Lacomba   \cite{art:DeLeon_Lacomba88,art:DeLeon_Lacomba89}. In particular, we study fiber derivatives of higher-order
Lagrangian systems, relating the classical Legendre-Ostrogradsky map associated to a $k$th-order
Lagrangian function $L$ and the $k$th-order Legendre transformation defined on the Lagrangian
submanifold $\Sigma_{L}\subset T^{*}TT^{(k-1)}Q$ generated by the Lagrangian $L$. We show the theory with a simple but interesting example, the dynamics of the end of a thrown javelin.

\begin{definition}
The \textnormal{$k^{th}$-order Tulczyjew's isomorphism} is the map
$\beta_{T^{(k-1)}Q} \colon TT^{*}T^{(k-1)}Q \to T^{*}T^{*}T^{(k-1)}Q$ defined by $\beta_{T^{(k-1)}Q}(V):=i_{V}\omega_{T^{(k-1)}Q}$ with $V\in TT^{*}T^{(k-1)}Q$, and $\omega_{T^{(k-1)}Q}$ being the canonical symplectic form of $T^{*}T^{(k-1)}Q$.
\end{definition}

Let $(q^i)$, $1 \leqslant i \leqslant n$, be local coordinates in an open set $U \subset Q$, and
$(q_{(j)}^i)$, $0 \leqslant j \leqslant k-1$, the induced coordinates
in $(\pi_Q^{k-1})^{-1}(U) \subset T^{(k-1)}Q$ introduced in Section \ref{subsec:HOTangentBundles}. Then,
natural coordinates in $\left(\pi_Q^{k-1} \circ \pi_{T^{(k-1)}Q}\right)^{-1}(U) \subset T^*T^{(k-1)}Q$ are
$\left(q_{(j)}^i,p^{(j)}_i\right)$, from where we deduce that the induced local natural coordinates
in $TT^*T^{(k-1)}Q$ are $\left(q_{(j)}^i,p^{(j)}_i,\dot{q}_{(j)}^i,\dot{p}^{(j)}_i\right)$, with
$1 \leqslant i \leqslant n$ and $0 \leqslant j \leqslant k-1$. In these coordinates, the map
$\beta_{T^{(k-1)}Q}$ is locally given by
$\beta_{T^{(k-1)}Q}\left(q_{(j)}^i,p^{(j)}_i,\dot{q}_{(j)}^i,\dot{p}^{(j)}_i\right)
= \left( q_{(j)}^i,\dot{q}_{(j)}^i,\dot{p}^{(j)}_i,p^{(j)}_i \right)$.
This map is an anti-symplectomorphism when we consider $T^{*}T^{*}T^{(k-1)}Q$ endowed with the
canonical symplectic structure and $TT^*T^{(k-1)}Q$ endowed with the symplectic structure given by
the complete lift $\omega_{T^{(k-1)}Q}^{c}$ of the canonical symplectic form on $T^*T^{(k-1)}Q$.

The cotangent bundles $T^{*}T^{*}T^{(k-1)}Q$ and $T^{*}TT^{(k-1)}Q$ are examples of double vector
bundles (see \cite{art:Grabowski_Urbanski99} for details). In particular, the double vector bundles
$T^{*}T^{*}T^{(k-1)}Q$ and $T^{*}TT^{(k-1)}Q$ are canonically isomorphic via a vector bundle
isomorphism over $T^{*}T^{(k-1)}Q$
\begin{equation*} 
{\mathcal{R}}_{k} \colon T^{*}TT^{(k-1)}Q \to T^{*}T^{*}T^{(k-1)}Q \, .
\end{equation*}
This map is an anti-symplectomorphism of symplectic manifolds (considering in both cotangent bundles
the canonical symplectic structures), and also an isomorphism of double vector bundles. It is
completely determined by the condition
\begin{equation*}
\left\langle \mathcal{R}_k(\alpha_u),W_{T^{*}\tau_{T^{(k-1)}Q}(\alpha_u)} \right\rangle =
- \left\langle \alpha_u, \widetilde{W}_u \right\rangle
+ \left\langle W_{T^{*}\tau_{T^{(k-1)}Q}(\alpha_u)}, \widetilde{W}_{u} \right\rangle^{T} \, ,
\end{equation*}
for every $\alpha_u \in T^{*}_{u}TT^{(k-1)}Q$, $\widetilde{W}_{u} \in T_{u}TT^{(k-1)}Q$ and
$W_{T^{*}\tau_{T^{(k-1)}Q}(\alpha_u)}\in TT^{*}T^{(k-1)}Q$ satisfying the relation $\displaystyle{T\tau_{T^{(k-1)}Q}(\widetilde{W}_{u}) = T\pi_{T^{(k-1)}Q}(W_{T^{*}\tau_{Q}(\alpha_u)}) \,}$.

Here, $\langle \cdot,\cdot\rangle^{T} \colon TT^*T^{(k-1)}Q \times_{TT^{(k-1)}Q} TTT^{(k-1)}Q \to \R$
is the pairing defined by the tangent map of the usual pairing
$\langle \cdot,\cdot \rangle \colon T^*T^{(k-1)}Q \times_{T^{(k-1)}Q} TT^{(k-1)}Q \to \R$,
and the vector bundle projection $T^{*}\tau_{T^{(k-1)}Q} \colon T^{*}TT^{(k-1)}Q\to T^{*}T^{(k-1)}Q$
is characterized by
$\langle T^{*}\tau_{T^{(k-1)}Q}(\alpha_u),w\rangle=\langle\alpha_{u},w_{u}^{\vee}\rangle$ where $u, w \in T_{[q]_{(0)}^{(k-1)}}T^{(k-1)}Q$, $\alpha_u \in T^{*}_{u}TT^{(k-1)}Q$, and
$w_{u}^{\vee} \in T_{u}TT^{(k-1)}Q$ is the vertical lift of the tangent vector $w$ (see
\cite{art:Garcia_Guzman_Marrero_Mestdag14} for first order systems. The derivation for higher-order systems is derived straightforwardly from the definition given for first order systems).

Let $(q^i)$, $1 \leqslant i \leqslant n$, be local coordinates in $Q$, and $(q_{(j)}^i)$,
$1 \leqslant i \leqslant n$, $0 \leqslant j \leqslant k-1$, the induced coordinates in $T^{(k-1)}Q$
introduced in Section \ref{subsec:HOTangentBundles}. Then, natural coordinates in $TT^{(k-1)}Q$
are $\left(q_{(j)}^i,v_{(j)}^i\right)$, from where we deduce that the induced natural coordinates
in $T^*TT^{(k-1)}Q$ are $\left(q_{(j)}^i,v_{(j)}^i,p^{(j)}_i,\tilde{p}^{(j)}_i\right)$, with
$1 \leqslant i \leqslant n$ and $0 \leqslant j \leqslant k-1$. In these coordinates, the map
$\mathcal{R}_k$ is locally given by
\begin{equation*}
{\mathcal{R}}_k \left(q_{(j)}^i,v_{(j)}^i,p^{(j)}_i,\tilde{p}^{(j)}_i\right) =
\left(q_{(j)}^i,\tilde{p}^{(j)}_i,- p^{(j)}_i,v_{(j)}^i\right) \, .
\end{equation*}

Then, composing $\beta_{T^{(k-1)}Q}$ with $\mathcal{R}_k^{-1}$ we obtain a map
$\alpha_{T^{(k-1)}Q} \colon TT^*T^{(k-1)}Q \to T^*TT^{(k-1)}Q$, which is given in the natural coordinates
$\left(q_{(j)}^i,p^{(j)}_i,\dot{q}_{(j)}^i,\dot{p}^{(j)}_i\right)$ in $TT^*T^{(k-1)}Q$ by
\begin{equation}\label{eqn:HOAlphaLocal}
\alpha_{T^{(k-1)}Q}\left(q_{(j)}^i,p^{(j)}_i,\dot{q}_{(j)}^i,\dot{p}^{(j)}_i\right)
= \left( q_{(j)}^i,\dot{q}_{(j)}^i,\dot{p}^{(j)}_i,p^{(j)}_i \right) \, .
\end{equation}
This map is a symplectomorphism when we consider in $TT^*T^{(k-1)}Q$ the symplectic structure given
by the complete lift $\omega_{T^{(k-1)}Q}^c$ of the canonical symplectic form $\omega_{T^{(k-1)}Q}$
on $T^*T^{(k-1)}Q$ and on $T^*TT^{(k-1)}Q$ the canonical symplectic form $\omega_{TT^{(k-1)}Q}$.
The maps $\beta_{T^{(k-1)}Q}$ and $\alpha_{T^{(k-1)}Q}$ give rise to the \textsl{$k$th-order Tulczyjew triple}
\begin{equation*}
\xymatrix{
T^{*}TT^{(k-1)}Q & \ & TT^{*}T^{(k-1)}Q \ar[ll]_{\alpha_{T^{(k-1)}Q}}\ar[rr]^-{\beta_{T^{(k-1)}Q}} & \ & T^{*}T^*T^{(k-1)}Q,
}
\end{equation*}


\begin{remark}
The map $\alpha_{T^{(k-1)}Q} \colon TT^*T^{(k-1)}Q \to T^*TT^{(k-1)}Q$ can be obtained directly as the
dual to the \textsl{canonical flip} $\kappa_{T^{(k-1)}Q} \colon TT^{(k-1)}Q \to TT^{(k-1)}Q$, which
is an isomorphism of double vector bundle structures on $TT^{(k-1)}Q$ (see \cite{art:DeLeon_Lacomba89} for more details). We prefer to avoid the use of the canonical flip by using $\mathcal{R}_{k}^{-1}$, as in \cite{art:Grabowska_Vitagliano15}.
\end{remark}

Now we introduce the dynamics using a suitable Lagrangian submanifold in $T^*TT^{(k-1)}Q$ and the
$k$th-order Tulczyjew's triple. First, let $j_{k} \colon T^{(k)}Q \hookrightarrow TT^{(k-1)}Q$ be
the canonical immersion defined in \eqref{eqn:CanonicalImmersion}. Then, if $x \in T^{(k)}Q$, the
map $j_{k}^{*} \colon T_{j_{k}(x)}^{*}(TT^{(k-1)}Q) \to T_{x}^{*}(T^{(k)}Q)$ is given by
\begin{equation*}
j_{k}^{*}\mu = \mu\circ Tj_k \, , \ \mbox{for every } \mu \in T_{j_{k}(x)}^{*}TT^{(k-1)}Q \, .
\end{equation*}
Using this map, if $L\in C^\infty(T^{(k)}Q)$ is a $k$th-order Lagrangian function, by
\'Sniatycki and Tulczyjew's construction given in Theorem \ref{thm:Sniatycki_Tulczyjew} we define
a Lagrangian submanifold in the cotangent bundle $T^{*}TT^{(k-1)}Q$, endowed with the canonical
symplectic structure, as follows
\begin{equation*}\label{eqn:LagrangianSubmanifoldHO}
\Sigma_{L} = \left\{ \mu\in T^{*}TT^{(k-1)}Q \mid j_{k}^{*}\mu = \d L\right\}
\hookrightarrow T^{*}TT^{(k-1)}Q \, .
\end{equation*}

The Lagrangian submanifold $\Sigma_{L}$ fibers onto $j_{k}(T^{(k)}Q)$, it is locally parametrized by the $2kn$ coordinate functions
$\left( q_{(0)}^i,\ldots,q_{(k)}^i,\tilde{p}^{(0)}_i,\ldots,\tilde{p}^{(k-2)}_i \right)$,
$1 \leqslant i \leqslant n$, and it is immersed into $T^{*}TT^{(k-1)}Q$ as
\begin{equation*}
\left\{ \left( q_{(j)}^i \, ; \, q_{(j+1)}^i \, ; \,
\derpar{L}{q_{(0)}^i},\derpar{L}{q_{(1)}^i} - \tilde{p}^{(0)}_i,\ldots,\derpar{L}{q_{(k-1)}^i} - \tilde{p}^{(k-2)}_i \, ; \,
\tilde{p}^{(0)}_i,\ldots,\tilde{p}^{(k-2)}_i,\derpar{L}{q_{(k)}^i} \right)\right\} \, .
\end{equation*}

Therefore, taking this into account, the Lagrangian dynamics is given by the Lagrangian submanifold
$N_{L} = \alpha_{T^{(k-1)}Q}^{-1}(\Sigma_{L}) \hookrightarrow TT^{*}T^{(k-1)}Q$. Locally, $N_{L}$
is the set of elements in $TT^{*}T^{(k-1)}Q$ of the form
\begin{equation*}
\left(
q_{(j)}^i \, ; \,
\tilde{p}^{(0)}_i,\ldots,\tilde{p}^{(k-2)}_i,\derpar{L}{q_{(k)}^i} \, ; \,
q_{(j+1)}^i \, ; \,
\derpar{L}{q_{(0)}^i},\derpar{L}{q_{(1)}^i}-\tilde{p}^{(0)},\ldots,\derpar{L}{q_{(k-1)}^i}-\tilde{p}^{(k-2)}_i
\right) \, .
\end{equation*}
From this, the submanifold $N_L$ determines the following set of differential equations
\begin{align}
\frac{d}{dt}\tilde{p}^{(0)}_i = \derpar{L}{q_{(0)}^i} \, , \label{eqn:HOLagSub01}\\
\frac{d}{dt}\tilde{p}^{(j)}_i + \tilde{p}^{(j-1)}_i = \derpar{L}{q_{(j)}^i} \, ,\label{eqn:HOLagSub02}\\
\derpar{L}{q_{(k-1)}^i} - \tilde{p}^{(k-2)}_i = \frac{d}{dt}\left(\derpar{L}{q_{(k)}^i} \right) \, ,\label{eqn:HOLagSub03}
\end{align}
where $1 \leqslant j \leqslant k-2$ in \eqref{eqn:HOLagSub02}, and $1 \leqslant i \leqslant n$ in
every set. Differentiating the $n$ equations \eqref{eqn:HOLagSub03} with respect to the time $t$,
and replacing into equation \eqref{eqn:HOLagSub02} for $j = k - 2$, we obtain the
following equations
\begin{equation*}
\frac{d^{2}}{dt^{2}} \left( \derpar{L}{q_{(k)}^i} \right)
= \frac{d}{dt} \left( \derpar{L}{q_{(k-1)}^i} \right) - \derpar{L}{q_{(k-2)}^i} - \tilde{p}^{(k-3)}_i \, .
\end{equation*}
Differentiating the last set of equation with respect to the time $t$ and replacing the result into
\eqref{eqn:HOLagSub02} when $j = k - 3$ we have
\begin{equation*}
\frac{d^{3}}{dt^{3}} \left( \derpar{L}{q_{(k)}^i} \right)
= \frac{d^2}{dt^2} \left( \derpar{L}{q_{(k-1)}^i} \right)
- \frac{d}{dt} \left( \derpar{L}{q_{(k-2)}^i} \right)
+ \derpar{L}{q_{(k-3)}^i} - \tilde{p}^{(k-4)}_i \, .
\end{equation*}
Iterating the process $k-4$ times, we obtain the following set of $n$ equations
\begin{equation*}
\frac{d^{k}}{dt^{k}} \left( \derpar{L}{q_{(k)}^i}\right)
= \frac{d^{k-1}}{dt^{k-1}} \left( \derpar{L}{q_{(k-1)}^i} \right)
- \frac{d^{k-2}}{dt^{k-2}} \left( \derpar{L}{q_{(k-2)}^i} \right)
+ \ldots - \frac{d}{dt} \left( \derpar{L}{q_{(1)}^i} \right)
+ \frac{d}{dt} \tilde{p}^{(0)}_i \, .
\end{equation*}
Using equations \eqref{eqn:HOLagSub01} we obtain the following $n$ differential equations
\begin{equation*}
\sum_{j=0}^{k}(-1)^{j} \frac{d^{j}}{dt^{j}} \left( \derpar{L}{q_{(j)}^i} \right) = 0 \, ,
\end{equation*}
which are exactly the higher-order Euler-Lagrange equations for the higher-order Lagrangian function
$L$ (see \cite{art:DeLeon_Lacomba89}). From the computations and considerations given above, we have
the following result.

\begin{proposition}
The solutions of a $k$th-order Lagrangian system described by a $k$th-order Lagrangian function
$L \in C^\infty(T^{(k)}Q)$ are curves $\mu \colon I \subset \R \to \Sigma_{L}$
satisfying
\begin{equation*}
\restric{\pi_{TT^{(k-1)}Q}}{\Sigma_{L}} \circ \mu = j_k \circ \gamma^{(k)} \, ,
\end{equation*}
where $\gamma^{(k)} \colon I \to T^{(k)}Q$ is the $k$-lift of a curve $\gamma \colon I \to Q$,
$j_k \colon T^{(k)}Q \to TT^{(k-1)}Q$ the canonical immersion, and
$\restric{\pi_{TT^{(k-1)}Q}}{\Sigma_{L}} \colon \Sigma_{L} \to TT^{(k-1)}Q$ denotes the restriction
of the canonical projection $\pi_{TT^{k-1}Q} \colon T^*TT^{(k-1)}Q \to TT^{(k-1)}Q$ to $\Sigma_{L}$.
\end{proposition}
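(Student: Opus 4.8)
The plan is to unwind what ``solution of the $k$th-order Lagrangian dynamics'' means in Tulczyjew's framework, transport the data through the symplectomorphism $\alpha_{T^{(k-1)}Q}$, and then compare coordinate expressions with the local descriptions of $N_L$ and $\Sigma_L$ already obtained, together with the chain of identities leading to the higher-order Euler--Lagrange equations. Recall that the dynamics generated by the Lagrangian submanifold $N_L = \alpha_{T^{(k-1)}Q}^{-1}(\Sigma_L) \hookrightarrow TT^{*}T^{(k-1)}Q$ consists of the curves $\sigma \colon I \to N_L$ that are tangent prolongations, that is, $\sigma = \dot\eta$ for a (necessarily unique) curve $\eta \colon I \to T^{*}T^{(k-1)}Q$; to such a $\sigma$ we associate the curve $\mu := \alpha_{T^{(k-1)}Q} \circ \sigma$ in $\Sigma_L$, and conversely $\sigma = \alpha_{T^{(k-1)}Q}^{-1} \circ \mu$. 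So it suffices to show that a curve $\sigma$ in $N_L$ of this kind projects to the $(k-1)$-lift of a curve $\gamma$ in $Q$, and that this $\gamma$ solves the higher-order Euler--Lagrange equations, and conversely.

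First I would write $\eta(t) = \left( q_{(j)}^i(t), p^{(j)}_i(t) \right)$, so that $\sigma(t) = \dot\eta(t) = \left( q_{(j)}^i(t), p^{(j)}_i(t), \dot q_{(j)}^i(t), \dot p^{(j)}_i(t) \right)$ with dots denoting ordinary time derivatives, and impose $\sigma(t) \in N_L$. Comparing with the explicit local expression of $N_L$ displayed above, the ``velocity'' block yields the kinematic constraints $\dot q_{(j)}^i(t) = q_{(j+1)}^i(t)$ for $0 \leqslant j \leqslant k-1$, with $q_{(k)}^i(t)$ playing the role of the free fiber parameter of $\Sigma_L$, while the ``momentum'' block yields precisely the dynamical equations \eqref{eqn:HOLagSub01}--\eqref{eqn:HOLagSub03}. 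The kinematic constraints say exactly that the curve $t \mapsto \left( q_{(0)}^i(t), \ldots, q_{(k-1)}^i(t) \right)$ in $T^{(k-1)}Q$ is the $(k-1)$-lift of $\gamma(t) := q_{(0)}^i(t)$, and that $q_{(k)}^i(t) = \tfrac{d}{dt} q_{(k-1)}^i(t) = \tfrac{d^k}{dt^k} \gamma^i(t)$; equivalently, $\left( q_{(0)}^i(t), \ldots, q_{(k)}^i(t) \right) = \gamma^{(k)}(t)$.

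Next I would apply $\alpha_{T^{(k-1)}Q}$ through \eqref{eqn:HOAlphaLocal}, obtaining $\mu(t) = \left( q_{(j)}^i(t), \dot q_{(j)}^i(t), \dot p^{(j)}_i(t), p^{(j)}_i(t) \right)$, so that $\restric{\pi_{TT^{(k-1)}Q}}{\Sigma_L}(\mu(t))$ has coordinates $\left( q_{(j)}^i(t), \dot q_{(j)}^i(t) \right) = \left( q_{(j)}^i(t), q_{(j+1)}^i(t) \right)$, which by the coordinate expression of $j_k$ in \eqref{eqn:CanonicalImmersion} is precisely $j_k \bigl( \gamma^{(k)}(t) \bigr)$; this establishes the stated identity for every solution. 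For the converse, given a curve $\mu \colon I \to \Sigma_L$ with $\restric{\pi_{TT^{(k-1)}Q}}{\Sigma_L} \circ \mu = j_k \circ \gamma^{(k)}$, the base condition forces $q_{(j)}^i(t) = \tfrac{d^j}{dt^j} \gamma^i(t)$ for $0 \leqslant j \leqslant k$, while the momentum components $p^{(k-1)}_i = \partial L / \partial q_{(k)}^i$ and $\tilde{p}^{(k-2)}_i, \ldots, \tilde{p}^{(0)}_i$ that make $\sigma := \alpha_{T^{(k-1)}Q}^{-1} \circ \mu$ land in $N_L$ are uniquely fixed by reading \eqref{eqn:HOLagSub03} and \eqref{eqn:HOLagSub02} backwards as successive definitions --- this is the very iteration performed above to reach the Euler--Lagrange equations, now run in reverse. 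Substituting those definitions into the one remaining equation \eqref{eqn:HOLagSub01} turns it into $\sum_{j=0}^{k} (-1)^j \tfrac{d^j}{dt^j} \bigl( \partial L / \partial q_{(j)}^i \bigr) = 0$, so $\sigma$ takes values in $N_L$, i.e. $\mu$ comes from a solution, if and only if $\gamma$ satisfies the higher-order Euler--Lagrange equations.

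I expect the main obstacle to be the converse step: one must verify that the apparently over-determined chain \eqref{eqn:HOLagSub01}--\eqref{eqn:HOLagSub03} along a holonomic curve is actually consistent and closes onto the single Euler--Lagrange equation, that is, that the recursion producing those equations can be inverted to \emph{define} the auxiliary momenta $\tilde{p}^{(j)}_i$ out of $L$ and $\gamma$ with no residual constraint. The remaining work is bookkeeping: making precise that a ``solution'' here is a curve $\mu = \alpha_{T^{(k-1)}Q} \circ \dot\eta$ with $\dot\eta(I) \subset N_L$ --- so that the tangent-prolongation character, hence the kinematic constraints of the second step, is part of the definition --- after which everything reduces to a direct translation through $\alpha_{T^{(k-1)}Q}$ with coordinate expressions already at hand.
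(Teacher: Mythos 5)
Your proposal is correct and follows essentially the same route as the paper, whose ``proof'' of this proposition is precisely the preceding coordinate computation of $\Sigma_L$, $N_L=\alpha_{T^{(k-1)}Q}^{-1}(\Sigma_L)$ and of equations \eqref{eqn:HOLagSub01}--\eqref{eqn:HOLagSub03}, recombined into the higher-order Euler--Lagrange equations. Your explicit insistence that a ``solution'' means $\mu=\alpha_{T^{(k-1)}Q}\circ\dot\eta$ with $\dot\eta$ taking values in $N_L$ (so the tangent-prolongation condition, not just the projection condition, carries the dynamics) is a welcome clarification of the statement, but it is not a departure from the paper's argument.
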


\begin{remark}
Observe that the three spaces $T(T^{*}T^{(k-1)}Q),$ $T^{*}(T^{*}T^{(k-1)}Q)$ and $T^{*}(T^{(k-1)}Q)$
involved in the Tulczyjew triple are symplectic manifolds; the two maps $\alpha_{T^{(k-1)}Q}$ and
$\beta_{T^{(k-1)}Q}$ involved in the construction are a symplectomorphism and an anti-symplectomorphism,
respectively; and the dynamical equations (Euler-Lagrange and Hamilton equations) are the local
equations defining the Lagrangian submanifolds
\begin{equation*}
N_L = \alpha_{T^{(k-1)}Q}^{-1}(\Sigma_{L}) \ \mbox{ and } \
S_H = \beta_{T^{(k-1)}Q}^{-1}\left(\d H(T^{*}(T^{(k-1)}Q))\right) \, ,
\end{equation*}
respectively. Moreover, the Lagrangian and Hamiltonian functions are not involved in the definition
of the triple. In this sense, the triple is canonical. Finally, we would like to point out that the
construction can be applied to an arbitrary Lagrangian function, not necessarily to a regular one.
\end{remark}

\begin{remark}[Higher-order variational constrained equations]
The natural extension for constrained (vakonomic) higher-order mechanical systems can be studied
by considering the embedded submanifold $\mathcal{M}\subset T^{(k)}Q$ given by the vanishing of $m$
independent constraint functions $\Phi^{\alpha} \colon T^{(k)}Q \to \R$, $\alpha=1,\ldots,m$. We consider the extended Lagrangian $\mathcal{L}=L+\lambda_{\alpha}\Phi^{\alpha}$ which includes the Lagrange multipliers $\lambda_{\alpha}$ as a new extra variable. The equations of motion for the higher-order constrined variational problem are the higher-order Euler-Lagrange equations for $\mathcal{L}$, that is, 
\begin{align}
\sum_{r=0}^{k}(-1)^r\frac{d^r}{dt^r}\left(\derpar{L}{q^{(r)i}}
+ \lambda_{\alpha}\derpar{\Phi^{\alpha}}{q^{(r)i}}\right)=&0 \\ 
\Phi^\alpha \left(q_{(0)}^i,\ldots,q^i_{(k-1)},q_{(k)}^i\right)=&0
\end{align}

From the geometrical point of view, these kind of higher-order variationally constrained problems are determined by a submanifold $\mathcal{M}\subset T^{(k)}Q$ with inclusion $i:\mathcal{M}\hookrightarrow T^{(k)}Q$ and by a Lagrangian $L_{\mathcal{M}}:\mathcal{M}\to\mathbb{R}$.  Using Theorem \eqref{thm:Sniatycki_Tulczyjew} we deduce that $\Sigma_{L_{\mathcal{M}}}\subset T^{*}TT^{(k-1)}Q$ is a Lagrangian submanifold. Moreover, using the Tulczyjew's symplectomorphism one induce a new Lagrangian submanifold $\alpha_{T^{(k-1)}Q}^{-1}(\Sigma_{L_{\mathcal{M}}})\subset TT^{*}T^{(k-1)}Q$ which completely determines the constrained variational dynamics. The case of unconstrained mechanics is generated taking the whole space $T^{(k)}Q$ instead of $\mathcal{M}$ and a Lagrangian function $L:T^{(k)}Q\to\mathbb{R}$. Indeed, this procedure gives the correct dynamics for the higher-order constrained variational problem using the same ideas as in the previous subsection only by changing the Lagrangian $L:T^{(k)}Q\to\R$ by $\mathcal{L}$. This is basically because the powerful of Tulczyjew's triple does not depends on the Lagrangian function.  


We assume that the restriction  of the projection $(\tau_Q^{(k-1, k)})\Big{|}_{{\mathcal M}}: {\mathcal M}\to T^{(k-1)}Q$ is a submersion. Locally, this conditions means that  the $m\times n$-matrix
$$\left(\frac{\partial(\Phi^{1},...,\Phi^{m})}{\partial(q^{1}_{(k)},...,q^{n}_{(k)})}\right)$$
 is of rank $m$ at all points of ${\mathcal M}$.

Consequently, by the implicit function theorem, we can locally express the constraints (reordering coordinates if necessary) as
\begin{equation}\label{constraints}
\phi^\alpha \left(q_{(0)}^i,\ldots,q^i_{(k-1)},q_{(k)}^a\right) = q_{(k)}^{\alpha} \, ,
\ 1 \leqslant \alpha \leqslant m \, , \ m+1 \leqslant a \leqslant n \, , \ i = 1,\ldots,n \, ,
\end{equation} and therefore, we can define a Lagrangian function $L \in C^\infty(\mathcal{M})$ in these new adapted coordinates to $\mathcal{M}$. Observe that
$\mathcal{M} \hookrightarrow T^{(k)}Q \hookrightarrow TT^{(k-1)}Q$, and let us denote by
$i_{\mathcal{M}} \colon \mathcal{M}\hookrightarrow TT^{(k-1)}Q$ the composition of both inclusions.
Now construct
\begin{equation*}
\Sigma_{L,\mathcal{M}} = \left\{\mu\in T^{*}TT^{(k-1)}Q\mid i_{\mathcal{M}}^{*}\mu = \d L\right\}
\end{equation*}
and proceeding as in the unconstrained case we obtain that the higher-order constrained dynamics
in governed by the following set of $n$ ordinary differential equations of order $2k$
\begin{equation}
\sum_{r=0}^{k}(-1)^r\frac{d^r}{dt^r}\left(\derpar{L}{q^{(r)i}}
- \mu_{\alpha}\derpar{\Phi^{\alpha}}{q^{(r)i}}\right) = 0 \, ,
\end{equation}
satisfying the constraints. Observe that $\mu\in T^{*}TT^{(k-1)}Q$ plays the
role of Lagrange multipliers for the equations, forcing the dynamics to satisfy the constraints
imposed by the submanifold $\mathcal{M}\subset T^{(k)}Q$.
\end{remark}
\subsection{On the Legendre maps for higher-order dynamical systems}


\label{subsec:HOLegendreMaps}

In this Subsection we introduce a Legendre transformation (a fiber derivative)
$\F L \colon \Sigma_L\to T^{*}T^{(k-1)}Q$ in the Lagrangian submanifold generated by a $k$th-order
Lagrangian function $L \in C^\infty(T^{(k)}Q)$ and we study its relationship with the
Legendre-Ostrogradsky map $\leg_{L} \colon T^{(2k-1)}Q \to T^*T^{(k-1)}Q$.

Let $Q$ be the configuration space of an autonomous dynamical system of order $k$ with $n$ degrees of freedom, and let $L\in C^\infty(T^{(k)}Q)$
be the Lagrangian function for this system. From the Lagrangian function $L$ we construct the
Poincar\'{e}-Cartan $1$-form $\theta_{L} \in \df^{1}(T^{(2k-1)}Q)$, whose coordinate expression is
\begin{equation}\label{eqn:PoincareCartan1FormLocal}
\theta_{L} = \sum_{r=1}^k \sum_{j=0}^{k-r}(-1)^j d_T^j\left(\derpar{L}{q_{(r+j)}^i}\right) \d q_{(r-1)}^i \, .
\end{equation}
The Poincar\'{e}-Cartan $1$-form $\theta_{L} \in \df^{1}(T^{(2k-1)}Q)$ allows to define the Legendre-Ostrogradsky map as follows:
\begin{definition}
The \textnormal{Legendre-Ostrogradsky map} associated to the $k$th-order Lagrangian function $L$
is the fiber bundle morphism $\leg_{L} \colon T^{(2k-1)}Q \to T^*T^{(k-1)}Q$ over $T^{(k-1)}Q$
defined as follows: for every $u \in TT^{(2k-1)}Q$,
\begin{equation}\label{eqn:LegendreOstrogradskyMapDef}
\theta_L(u) = \left\langle T \tau_Q^{(k-1,2k-1)}(u) \, , \, \leg_L(\tau_{T^{(2k-1)}Q}(u) \right\rangle \, .
\end{equation}
\end{definition}

Besides the condition $\pi_{T^{(k-1)}Q} \circ \leg_L = \tau_Q^{(k-1,2k-1)}$ stated in the
definition, the Legendre-Ostrogradsky map relates the Liouville form in $T^*T^{(k-1)}Q$ to the
Poincar\'{e}-Cartan $1$-form. That is, if $\theta_{T^{(k-1)}Q} \in \df^{1}(T^*T^{(k-1)}Q)$ is
the canonical form of the cotangent bundle $T^*T^{(k-1)}Q$, then
$\leg_L^*\theta_{T^{(k-1)}Q} = \theta_L$.

In the local coordinates $\left(q_{(j)}^i\right)$, $1 \leqslant i \leqslant n$,
$0 \leqslant j \leqslant 2k-1$, of $T^{(2k-1)}Q$ introduced in Section \ref{subsec:HOTangentBundles},
we define the following local functions
\begin{equation}\label{eqn:JacobiOstrogradskyMomenta}
\hat p^{(r-1)}_i = \sum_{j=0}^{k-r}(-1)^j d_T^j\left(\derpar{L}{q_{(r+j)}^i}\right) \, ,
\end{equation}
which are called the Jacobi-Ostrogradsky momenta. Observe that we have the following relation between
$\hat{p}^{(r)}_i$ and $\hat{p}^{(r-1)}_i$
\begin{equation}\label{eqn:HamHOMomentumCoordRelation}
\hat p^{(r-1)}_i = \derpar{L}{q_{(r)}^i} - d_T\left( \hat p^{(r)}_i \right) \, ,
\ \mbox{for } 1 \leqslant r \leqslant k-1 \, .
\end{equation}

\begin{remark}
The relation \eqref{eqn:HamHOMomentumCoordRelation} means that we can recover all the
Jacobi-Ostrogradsky momenta coordinates from the set of highest order momenta $(\hat p^{(k-1)}_i)$.
\end{remark}

Bearing in mind the local expression of the form $\theta_L$, we can write
$\theta_L = \hat p^{(j)}_i \d q_{(j)}^i$. Let $(U;(q^i))$, $1 \leqslant i \leqslant n$, be a local
chart of $Q$, and $\left(q_{(j)}^i\right)$, $0 \leqslant j \leqslant 2k-1$, the induced local
coordinates in $(\tau_Q^{2k-1})^{-1}(U) \subset T^{(2k-1)}Q$ introduced in Section
\ref{subsec:HOTangentBundles}. From this, it is clear that the local expression of the
Legendre-Ostrogradsky map $\leg_{L}$ is
\begin{equation*}\label{eqn:LegendreOstrogradskyMapLocal}
\leg_L^*\left(q_{(r-1)}^i\right) = q_{(r-1)}^i \quad ; \quad
\leg_L^*\left(p_i^{(r-1)}\right) = \hat{p}_i^{(r-1)} = \sum_{j=0}^{k-r}(-1)^j d_T^j\left(\derpar{L}{q_{(r+j)}^i}\right) \, ,
\end{equation*}
where $1 \leqslant r \leqslant k$, that is,
\begin{equation*}
\leg_L \left(q_{(0)}^i,\ldots,q_{(2k-1)}^i \right) =
\left( q_{(0)}^i,\ldots,q_{(k-1)}^i,\hat{p}_i^{(0)},\ldots,\hat{p}_i^{(k-1)}\right) \, .
\end{equation*}

Consider the tangent map $T\leg_L \colon T(T^{(2k-1)}Q) \to T(T^*T^{(k-1)}Q)$. Given an arbitrary
point $[\gamma]_0^{(2k-1)} \in T^{(2k-1)}Q$, the tangent map of $\leg_L$ at $[\gamma]_0^{(2k-1)}$ is
locally given by the following $2kn \times 2kn$ matrix
\begin{equation*}
T_{[\gamma]_0^{(2k-1)}}\leg_L =
\left(
\begin{array}{cccc|cccc}
\Id_n & \mathbf{0}_n & \ldots & \mathbf{0}_n & \mathbf{0}_n & \mathbf{0}_n & \ldots & \mathbf{0}_n \\
\mathbf{0}_n & \Id_n & \ldots & \mathbf{0}_n & \mathbf{0}_n & \mathbf{0}_n & \ldots & \mathbf{0}_n \\
\vdots & \vdots & \ddots & \vdots & \vdots & \vdots & \ddots & \vdots \\
\mathbf{0}_n & \mathbf{0}_n & \ldots & \Id_n & \mathbf{0}_n & \mathbf{0}_n & \ldots & \mathbf{0}_n \\ \hline
\ & \ & \ & \ & \ & \ & \ & \ \\[-10pt]
\derpar{\hat{p}_i^{(0)}}{q_{(0)}^j} & \derpar{\hat{p}_i^{(0)}}{q_{(1)}^j} & \ldots & \derpar{\hat{p}_i^{(0)}}{q_{(k-1)}^j} & \derpar{\hat{p}_i^{(0)}}{q_{(k)}^j} & \derpar{\hat{p}_i^{(0)}}{q_{(k+1)}^j} & \ldots & \derpar{\hat{p}_i^{(0)}}{q_{(2k-1)}^j} \\[15pt]
\derpar{\hat{p}_i^{(1)}}{q_{(0)}^j} & \derpar{\hat{p}_i^{(1)}}{q_{(1)}^j} & \ldots & \derpar{\hat{p}_i^{(1)}}{q_{(k-1)}^j} & \derpar{\hat{p}_i^{(1)}}{q_{(k)}^j} & \derpar{\hat{p}_i^{(1)}}{q_{(k+1)}^j} & \ldots & \derpar{\hat{p}_i^{(1)}}{q_{(2k-1)}^j} \\
\vdots & \vdots & \ddots & \vdots & \vdots & \vdots & \ddots & \vdots \\
\derpar{\hat{p}_i^{(k-1)}}{q_{(0)}^j} & \derpar{\hat{p}_i^{(k-1)}}{q_{(1)}^j} & \ldots & \derpar{\hat{p}_i^{(k-1)}}{q_{(k-1)}^j} & \derpar{\hat{p}_i^{(k-1)}}{q_{(k)}^j} & \derpar{\hat{p}_i^{(k-1)}}{q_{(k+1)}^j} & \ldots & \derpar{\hat{p}_i^{(k-1)}}{q_{(2k-1)}^j}
\end{array}
\right) \, ,
\end{equation*}
where $\Id_n$ denotes the $n \times n$ identity matrix, $\mathbf{0}_n$ the $n \times n$ null matrix
and $(\partial{\hat{p}_i^{(r)}}/\partial{q_{(s)}^j})$ the $n \times n$ Jacobian matrix of the vector
function $(\hat{p}_1^{(r)},\ldots,\hat{p}_n^{(r)})$ with respect to the $n$ variables
$(q_{(s)}^1,\ldots,q_{(s)}^n)$. Moreover, using the relation \eqref{eqn:HamHOMomentumCoordRelation}
among the momenta function in combination with the coordinate expression of the Tulczyjew's derivation,
a long but straightforward computation shows that
\begin{equation*}
\derpar{\hat{p}^{(s)}_i}{q_{(2k-1-s)}^j} = \derpars{L}{q_{(k)}^i}{q_{(k)}^j} \, ,
\end{equation*}
from where we deduce that the antidiagonal $n \times n$ blocks in the lower right submatrix of
$T_{[\gamma]_0^{(2k-1)}}\leg_L$ coincide with the Hessian matrix of the Lagrangian function.
Therefore, it is clear that the Lagrangian function $L \in C^\infty(T^{(k)}Q)$ is regular if, and
only if, the bundle morphism $\leg_L \colon T^{(2k-1)}Q \to T^*T^{(k-1)}Q$ is a local diffeomorphism.
As a consequence of this, we have that if $L$ is a $k$th-order regular Lagrangian then the set
$(q^A_{(i)},\hat p_A^{(i)})$, $0\leqslant i\leqslant k-1$, is a set of local coordinates in $T^{(2k-1)}Q$,
and $(\hat p^{(i)}_A)$ are called the \textsl{Jacobi-Ostrogradsky momenta coordinates}.

Now, we introduce a Legendre transformation $\F L \colon \Sigma_{L}\to T^{*}(T^{(k-1)}Q)$ in
the Lagrangian submanifold generated by a higher-order Lagrangian function.

\begin{definition}
The higher-order Legendre transformation on the Lagrangian submanifold $\Sigma_{L}$,
$\F L \colon \Sigma_{L}\to T^{*}T^{(k-1)}Q$, is the map defined by
$\F L = \restric{\tau_{T^{*}(T^{(k-1)}Q)}\circ (\alpha_{T^{k-1}Q})^{-1}}{\Sigma_{L}}$.
\end{definition}

In the natural coordinates
$\left( q_{(0)}^i,\ldots,q_{(k)}^i,\tilde{p}^{(0)}_i,\ldots,\tilde{p}^{(k-2)}_i \right)$
of $\Sigma_L$ introduced in Subsection \ref{subsec:HOTulczyjewTriple}, the map $\F L$ is locally determined by
\begin{equation*}
\F L ( q_{(0)}^i,\ldots,q_{(k)}^i,\tilde{p}^{(0)}_i,\ldots,\tilde{p}^{(k-2)}_i )
= \left(q_{(0)}^i,\ldots,q_{(k-1)}^i,\tilde{p}^{(0)}_i,\ldots,\tilde{p}^{(k-2)}_i,\derpar{L}{q_{(k)}^i}\right).
\end{equation*}
\begin{definition}
A higher-order Lagrangian system determined by $L:T^{(k)}Q\to\R$ is regular if, and only if, $\F L$ is a local diffeomorphism.
\end{definition}
\begin{remark}
Observe that a higher-order Lagrangian system is regular if and only if
$\left(\derpars{L}{q_{(k)}^i}{q_{(k)}^j}\right)$ is a nondegenerate matrix. In such a case, since
$\tilde{p}^{(k-1)}_i = \derpar{L}{q_{(k)}^i}$, by the implicit function theorem, we can define the $n$
coordinate functions $q_{(k)}^j$ as functions depending on
$q_{(0)}^i,\ldots,q_{(k-1)}^i,\tilde{p}^{(k-1)}_i$; that is,
\begin{equation}\label{eqn:ImplicitHighestVelocity}
\tilde{q}_{(k)}^{\,j} = f(q_{(0)}^i,\ldots,q_{(k-1)}^i,\tilde{p}^{(k-1)}_i) \, .
\end{equation}
\end{remark}

Using the higher-order Legendre transformation we can give in an alternative way the solutions of
the higher-order Lagrangian system as follows.

\begin{proposition}\label{prop:RelationLegendreMaps01}
The solutions of a $k$th-order Lagrangian system described by a $k$th-order Lagrangian function
$L \in C^\infty(T^{(k)}Q)$ are the curves $\mu \colon I\subset\R\to \Sigma_{L}$ satisfying
\begin{equation*}
\alpha_{T^{(k-1)}Q}^{-1}(\mu(t)) = \frac{d}{dt}\F L(\mu(t)),
\end{equation*}
where $\mu$ satisfies $\restric{\pi_{T^{*}TT^{k-1}Q}}{\Sigma_{L}}(\mu(t)) = \gamma^{(k)}(t)$, and where
$\gamma^{(k)}$ is the $k$-lift of a curve $\gamma \colon I \to Q$.
\end{proposition}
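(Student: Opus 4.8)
The plan is to read the equation $\alpha_{T^{(k-1)}Q}^{-1}(\mu(t)) = \frac{d}{dt}\F L(\mu(t))$ as the intrinsic form of the ``holonomy'' (canonical-lift) condition that distinguishes, among the curves contained in the Lagrangian submanifold $N_{L} = \alpha_{T^{(k-1)}Q}^{-1}(\Sigma_{L}) \subset TT^{*}T^{(k-1)}Q$, the ones that are actual trajectories of the dynamics; and then to confirm the equivalence by a coordinate computation against the equations \eqref{eqn:HOLagSub01}--\eqref{eqn:HOLagSub03} already extracted from $N_{L}$.

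First I would argue intrinsically. Given a curve $\mu \colon I \to \Sigma_{L}$, set $\xi := \alpha_{T^{(k-1)}Q}^{-1}\circ\mu$, which by construction takes values in $N_{L}$. Directly from the definition $\F L = \restric{\tau_{T^{*}T^{(k-1)}Q}\circ(\alpha_{T^{(k-1)}Q})^{-1}}{\Sigma_{L}}$ we obtain $\tau_{T^{*}T^{(k-1)}Q}\circ\xi = \F L\circ\mu$, so that $\F L\circ\mu$ is precisely the projection of $\xi$ onto $T^{*}T^{(k-1)}Q$. Now the dynamics generated by $N_{L}$ through the $k$th-order Tulczyjew triple consists of those curves in $N_{L}$ that equal the canonical lift of their own base curve in $T^{*}T^{(k-1)}Q$; imposing this on $\xi$ reads $\xi(t) = \frac{d}{dt}\bigl(\tau_{T^{*}T^{(k-1)}Q}(\xi(t))\bigr) = \frac{d}{dt}\F L(\mu(t))$, i.e. $\alpha_{T^{(k-1)}Q}^{-1}(\mu(t)) = \frac{d}{dt}\F L(\mu(t))$. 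The projection condition $\restric{\pi_{TT^{(k-1)}Q}}{\Sigma_{L}}\circ\mu = j_{k}\circ\gamma^{(k)}$ inherited from the previous proposition is then simply the statement that the underlying base curve is the $k$-lift of a curve $\gamma \colon I \to Q$; in fact it comes out of the equation itself, as the coordinate check shows.

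To make the equivalence explicit I would pass to the natural coordinates $(q_{(0)}^{i},\ldots,q_{(k)}^{i},\tilde{p}^{(0)}_{i},\ldots,\tilde{p}^{(k-2)}_{i})$ on $\Sigma_{L}$ and expand both sides. Using \eqref{eqn:HOAlphaLocal} together with the explicit immersion of $\Sigma_{L}$ into $T^{*}TT^{(k-1)}Q$ displayed in Subsection \ref{subsec:HOTulczyjewTriple}, the point $\alpha_{T^{(k-1)}Q}^{-1}(\mu(t)) \in TT^{*}T^{(k-1)}Q$ has base coordinates $q_{(j)}^{i}(t)$, momenta $\tilde{p}^{(0)}_{i}(t),\ldots,\tilde{p}^{(k-2)}_{i}(t),\derpar{L}{q_{(k)}^{i}}$, base-velocities $q_{(j+1)}^{i}(t)$, and momentum-velocities $\derpar{L}{q_{(0)}^{i}}$ and $\derpar{L}{q_{(j)}^{i}}-\tilde{p}^{(j-1)}_{i}$ for $1\leqslant j\leqslant k-1$; whereas $\frac{d}{dt}\F L(\mu(t))$ has the same base point and momenta but base-velocities $\frac{d}{dt}q_{(j)}^{i}(t)$ and momentum-velocities $\frac{d}{dt}\tilde{p}^{(j)}_{i}(t)$ and $\frac{d}{dt}\derpar{L}{q_{(k)}^{i}}$. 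Matching the two lists yields on the one hand the holonomy relations $q_{(j+1)}^{i}=\frac{d}{dt}q_{(j)}^{i}$ (so $q_{(0)}^{i}=\gamma^{i}$ and $q_{(j)}^{i}$ is its $j$th derivative, i.e. the projection condition) and on the other hand exactly the equations \eqref{eqn:HOLagSub01}--\eqref{eqn:HOLagSub03}, which were shown above to be equivalent to the higher-order Euler--Lagrange equations $\sum_{j=0}^{k}(-1)^{j}\frac{d^{j}}{dt^{j}}\bigl(\derpar{L}{q_{(j)}^{i}}\bigr)=0$. Conversely, any solution of the latter determines, through the Jacobi--Ostrogradsky momenta $\hat{p}^{(j)}_{i}$, a curve $\mu$ in $\Sigma_{L}$ obeying the stated equation, which closes the equivalence.

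The one point requiring care is the step in the second paragraph: on $TT^{*}T^{(k-1)}Q$ there are two competing vector-bundle projections, $\tau_{T^{*}T^{(k-1)}Q}$ and $T\pi_{T^{(k-1)}Q}$, and one must check that it is the former -- the one appearing in the definition of $\F L$ -- that recovers the base curve of a canonical lift, so that the dynamical condition genuinely takes the form $\xi=\frac{d}{dt}(\tau_{T^{*}T^{(k-1)}Q}\circ\xi)$. Once this is settled, everything else is the bookkeeping already carried out in Subsection \ref{subsec:HOTulczyjewTriple}, and I expect no further obstruction.
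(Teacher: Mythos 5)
Your proposal is correct and is essentially the argument the paper has in mind: Proposition \ref{prop:RelationLegendreMaps01} is stated without an explicit proof, being regarded as immediate from the bookkeeping of Subsection \ref{subsec:HOTulczyjewTriple}, and your coordinate matching of $\alpha_{T^{(k-1)}Q}^{-1}(\mu(t))$ against $\frac{d}{dt}\F L(\mu(t))$ reproduces exactly the holonomy relations $q_{(j+1)}^i=\frac{d}{dt}q_{(j)}^i$ together with equations \eqref{eqn:HOLagSub01}--\eqref{eqn:HOLagSub03}, i.e.\ the higher-order Euler--Lagrange equations, with the converse handled correctly via the Jacobi--Ostrogradsky momenta. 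The projection subtlety you flag is indeed settled by that same coordinate computation, so nothing further is needed.
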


Next, we give an alternative characterization of the dynamics in the Lagrangian submanifold
$\Sigma_{L}$ in terms of the solution of the higher-order Euler-Lagrange equations.

\begin{proposition}\label{prop:RelationLegendreMaps02}
A curve $\gamma \colon I \to Q$ is a solution of the higher-order Euler-Lagrange equations derived
from a $k$th-order Lagrangian function $L \in C^\infty(T^{(k)}Q)$ if, and only if,
\begin{equation*}
\alpha_{T^{(k-1)}Q}\left(\frac{d}{dt} \left( \leg_{L} \circ \gamma^{(2k-1)} \right) \right) \in \Sigma_{L} \, ,
\end{equation*}
where $\gamma^{(2k-1)} \colon I \to T^{(2k-1)}Q$ is the $(2k-1)$-lift of $\gamma$ and
$\leg_L \colon T^{(2k-1)}Q\to T^{*}(T^{(k-1)}Q)$ is the Legendre-Ostrogradsky map defined
in \eqref{eqn:LegendreOstrogradskyMapDef}.
\end{proposition}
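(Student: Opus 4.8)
The plan is to verify the equivalence by a direct coordinate computation, exploiting the explicit local form of $\leg_L$, of $\alpha_{T^{(k-1)}Q}$ (given in \eqref{eqn:HOAlphaLocal}), and of the Lagrangian submanifold $\Sigma_L$ as parametrized in Subsection \ref{subsec:HOTulczyjewTriple}. The key observation is that everything is controlled by the Jacobi--Ostrogradsky momenta \eqref{eqn:JacobiOstrogradskyMomenta} and their recursive relation \eqref{eqn:HamHOMomentumCoordRelation}.

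First I would compute the curve $\leg_L \circ \gamma^{(2k-1)}$ in coordinates: along the $(2k-1)$-lift of $\gamma$ one has $q^i_{(j)}(t) = \d^j(q^i\circ\gamma)/\d t^j$ for $0\leqslant j\leqslant 2k-1$, and $\leg_L\circ\gamma^{(2k-1)}$ reads $\bigl(q^i_{(0)},\ldots,q^i_{(k-1)},\hat p^{(0)}_i,\ldots,\hat p^{(k-1)}_i\bigr)$ evaluated along $\gamma$. The crucial point is that, along a holonomic curve, the operator $d_T$ becomes the total time derivative $\d/\d t$, so the momenta $\hat p^{(r-1)}_i$ become honest functions of $t$ satisfying, by \eqref{eqn:HamHOMomentumCoordRelation}, $\hat p^{(r-1)}_i = \partial L/\partial q^i_{(r)} - \tfrac{\d}{\d t}\hat p^{(r)}_i$ for $1\leqslant r\leqslant k-1$. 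Differentiating this curve in $T^*T^{(k-1)}Q$ and applying $\alpha_{T^{(k-1)}Q}$ using \eqref{eqn:HOAlphaLocal} produces a point of $T^*TT^{(k-1)}Q$ with base point $(q^i_{(j)};\dot q^i_{(j)}) = (q^i_{(j)};q^i_{(j+1)})$, i.e.\ lying in $j_k(T^{(k)}Q)$, and with fiber components $(\dot{\hat p}^{(j)}_i; \hat p^{(j)}_i)$ suitably reshuffled.

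Next I would compare this with the defining local equations of $\Sigma_L$, namely the requirement $j_k^*\mu = \d L$, which in the parametrization of Subsection \ref{subsec:HOTulczyjewTriple} forces the fiber coordinates of $\mu$ to be
\begin{equation*}
\Bigl(\partial L/\partial q^i_{(0)},\,\partial L/\partial q^i_{(1)} - \tilde p^{(0)}_i,\,\ldots,\,\partial L/\partial q^i_{(k-1)} - \tilde p^{(k-2)}_i\,;\,\tilde p^{(0)}_i,\ldots,\tilde p^{(k-2)}_i,\,\partial L/\partial q^i_{(k)}\Bigr).
\end{equation*}
Demanding that the point $\alpha_{T^{(k-1)}Q}\bigl(\tfrac{\d}{\d t}(\leg_L\circ\gamma^{(2k-1)})\bigr)$ satisfy these equations, after identifying $\tilde p^{(j)}_i$ with $\hat p^{(j+1)}_i$ (which is exactly the top-momentum component $\partial L/\partial q^i_{(k)}$ in the last slot, consistent with $\leg_L$), yields: the equality in the last fiber slot is automatic; the equalities in the intermediate slots reproduce precisely \eqref{eqn:HamHOMomentumCoordRelation} along $\gamma$, hence are automatically satisfied; and the equality in the first fiber slot, $\dot{\hat p}^{(0)}_i = \partial L/\partial q^i_{(0)}$, is the single nontrivial condition. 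Unwinding $\hat p^{(0)}_i = \sum_{j=0}^{k-1}(-1)^j d_T^j(\partial L/\partial q^i_{(1+j)})$ and replacing $d_T$ by $\d/\d t$, this condition is literally $\sum_{j=0}^{k}(-1)^j \tfrac{\d^j}{\d t^j}(\partial L/\partial q^i_{(j)}) = 0$, the higher-order Euler--Lagrange equations; compare the derivation of equations \eqref{eqn:HOLagSub01}--\eqref{eqn:HOLagSub03} and their consequences in Subsection \ref{subsec:HOTulczyjewTriple}.

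I expect the main obstacle to be purely bookkeeping: carefully matching the indexing of the $\tilde p^{(j)}$ of $\Sigma_L$ (which run $0,\ldots,k-2$ plus the distinguished top slot $\partial L/\partial q^i_{(k)}$) with the $\hat p^{(j)}$ of $\leg_L$ (which run $0,\ldots,k-1$), and checking that the ``automatic'' intermediate equations indeed collapse to the recursion \eqref{eqn:HamHOMomentumCoordRelation} rather than hiding an extra constraint. Once the dictionary $\tilde p^{(j)}_i \leftrightarrow \hat p^{(j+1)}_i$ is fixed and \eqref{eqn:HamHOMomentumCoordRelation} is used along $\gamma$, the equivalence falls out: membership in $\Sigma_L$ is equivalent to the single first-slot equation, which is the Euler--Lagrange equation. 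An intrinsic alternative would be to invoke $\leg_L^*\theta_{T^{(k-1)}Q} = \theta_L$ together with $\alpha_{T^{(k-1)}Q} = \mathcal R_k^{-1}\circ\beta_{T^{(k-1)}Q}$ and the fact that $N_L = \alpha_{T^{(k-1)}Q}^{-1}(\Sigma_L)$ encodes the dynamics, but the coordinate route is shorter and self-contained given what has already been set up.
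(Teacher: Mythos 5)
Your proposal is correct and follows essentially the same route as the paper's proof: a direct coordinate computation of $\alpha_{T^{(k-1)}Q}\bigl(\tfrac{d}{dt}(\leg_L\circ\gamma^{(2k-1)})\bigr)$ compared against the local equations of $\Sigma_L$, with the intermediate conditions collapsing to the momentum recursion \eqref{eqn:HamHOMomentumCoordRelation} and the remaining condition $\tfrac{d}{dt}\hat p^{(0)}_i=\partial L/\partial q^i_{(0)}$ unwinding to the higher-order Euler--Lagrange equations. The only slip is the dictionary: it should be $\tilde p^{(j)}_i=\hat p^{(j)}_i$ for $0\leqslant j\leqslant k-2$, with the last slot filled automatically by $\hat p^{(k-1)}_i=\partial L/\partial q^i_{(k)}$, rather than $\tilde p^{(j)}_i=\hat p^{(j+1)}_i$; with that correction the equations you write are exactly those in the paper's proof.
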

\begin{proof}
This proof is easy in coordinates. Let $(U;(q^i))$ be a local chart in $Q$, and let us denote
$\gamma(t) = (q^i(t))$ in $U$. Then, the $(2k-1)$-lift of $\gamma$ is given by
$\gamma^{(2k-1)}(t) = ( q_{(0)}^i(t),\ldots,q_{(2k-1)}^i(t))$, and we have
\begin{equation*}
\left( \leg_L \circ \gamma^{(2k-1)} \right)(t) =
\left( q_{(0)}^i(t),\ldots,q_{(k-1)}^i(t),\hat{p}^{(0)}_i(t),\ldots,\hat{p}^{(k-1)}_i(t) \right) \, ,
\end{equation*}
where $\hat{p}_{(r)}^i$, $1 \leqslant i \leqslant n$ and $0 \leqslant r \leqslant k-1$, are the
Jacobi-Ostrogradsky momenta coordinates defined in \eqref{eqn:JacobiOstrogradskyMomenta}.
Then, bearing in mind the coordinate expression \eqref{eqn:HOAlphaLocal} of $\alpha_{T^{k-1}Q}$ we have that
\begin{equation*}
\alpha_{T^{(k-1)}Q}\left(\frac{d}{dt} \left( \leg_{L} \circ \gamma^{(2k-1)} \right) \right) =
\left( q_{(j)}^i(t) ; q_{(j+1)}^i(t) ; \frac{d}{dt} \hat{p}_i^{(j)}(t) ; \hat{p}_i^{(j)} (t) \right) \, ,
\end{equation*}
with $1 \leqslant i \leqslant n$ and $0 \leqslant j \leqslant k-1$. Requiring 
$\alpha_{T^{(k-1)}Q}\left(\frac{d}{dt} \left( \leg_{L} \circ \gamma^{(2k-1)} \right) \right) \in \Sigma_L$,
we obtain the following system of $(k+1)n$ differential equations on the component functions of
$\gamma^{(2k-1)}$
$$
\frac{d}{dt}\hat{p}^{(0)}_i = \derpar{L}{q_{(0)}^i}, \quad 
\frac{d}{dt}\hat{p}^{(j)}_i + \hat{p}^{(j-1)}_i = \derpar{L}{ q_{(j)}^i},\quad
\hat{p}^{(k-1)}_i = \derpar{L}{q_{(k)}^i}$$
with $1 \leqslant j \leqslant k-1$ in the second set of equations (which, observe, is exactly
the relation \eqref{eqn:HamHOMomentumCoordRelation} among the momenta). Combining these equations
following the same patterns as in the end of Subsection \ref{subsec:HOTulczyjewTriple}, we obtain
the higher-order Euler-Lagrange equations
\begin{equation*}
\sum_{j=0}^{k}(-1)^{j} \frac{d^{j}}{dt^{j}} \left(\derpar{L}{q_{(j)}^i}\right) = 0 \, . \qedhere
\end{equation*}
\end{proof}

From Propositions \ref{prop:RelationLegendreMaps01} and \ref{prop:RelationLegendreMaps02} it is
clear that we can characterize the solutions of the Euler-Lagrange equations in purely geometric
way by means of the either the higher-order Legendre map $\F L$ or the Legendre-Ostrogradsky map
$\leg_L$. Now, assume that the Lagrangian function $L \in C^\infty(T^{(k)}Q)$ is regular, so both
maps $\F L \colon \Sigma_L\to T^*T^{(k-1)}Q$ and $\leg_L \colon T^{(2k-1)}Q\to T^*T^{(k-1)}Q$ are
local diffeomorphisms. Then consider the map $\phi_L \colon T^{(2k-1)}Q \to \Sigma_L$ defined by
the composition $\phi_L = \F L^{-1}\circ \leg_L$, which is locally given by
\begin{equation*}
\phi_{L}\left(q_{(0)}^{i},\ldots,q_{(2k-1)}^{i}\right)=\left(q_{(0)}^{i},\ldots,q_{(k-1)}^{i};
\hat{p}^{(0)}_i,\ldots,\hat{p}^{(k-2)}_i, \derpar{L}{q_{(k)}^i}\right) \, .
\end{equation*}
This map is a local diffeomorphism with local inverse $\phi_L^{-1} = \leg_L^{-1} \circ \F L$ in the
corresponding open sets. If, moreover, $L$ is hyperegular, then $\phi_L$ is a global diffeomorphism,
from which we can recover the dynamics using the implicit function theorem and the inverse of the
Tulczyjew's isomorphism. In addition, in this case we can establish the Hamiltonian formalism in
$T^*T^{(k-1)}Q$ by defining a \textsl{canonical Hamiltonian function} $H \in C^\infty(T^*T^{(k-1)}Q)$
with coordinate expression
\begin{equation*}
H(q_{(0)}^i,\ldots,q_{(k-1)}^i,p^{(0)}_i,\ldots,p^{(k-1)}_i)
= \sum_{j=0}^{k-2} p^{(j)}_iq_{(j+1)}^i + p^{(k-1)}_i\tilde{q}^{\,i}_{(k)} - L(q_{(0)}^i,\ldots,q_{(k-1)}^i,\tilde{q}^{\,i}_{(k)}) \, ,
\end{equation*}
where $\tilde{q}^{\,i}_{(k)}$ is given implicitly by \eqref{eqn:ImplicitHighestVelocity}. The
canonical Hamiltonian function obtained in this way does not depend on the choice of the Legendre
transformation used to derive it, and it depends only on the starting Lagrangian function. The
corresponding Hamiltonian vector field $X_{H}$ is determined by $\inn_{X_H}\omega_{T^{(k-1)}Q} = dH$.
In this case we have that
\begin{equation*}
\hbox{Im}(X_H)=X_{H}(T^{*}T^{(k-1)}Q)=\beta_{T^{k-1}Q}^{-1}(dH(T^{*}T^{(k-1)}Q))=\alpha_{T^{(k-1)}Q}^{-1}(\Sigma_L) \, .
\end{equation*}

In the singular case, the submanifold $\hbox{Im}(dH)$ is not transversal with respect to
$\pi_{T^*T^{(k-1)}Q}$, and therefore it is necessary to apply an integrability algorithm to find, if it
exists, a subset where there are consistent solutions of the dynamics (see \cite{phd:Gotay} and
\cite{art:Gotay_Nester_Hinds78}, for example).

\subsection{An example: the dynamics of the end of a javelin}
\label{sec:Example1}

Let us consider the dynamical system that describes the motion of the end of a thrown javelin.
This gives rise to a $3$-dimensional second-order dynamical system, which is a particular case of
the problem of determining the trajectory of a particle rotating about a translating center
\cite{art:Constantelos84}.
Let $Q = \R^3$ be the manifold modeling the configuration space for this system with coordinates
$(q_{(0)}^1,q_{(0)}^2,q_{(0)}^3) = (q_{(0)}^i)$. Using the induced coordinates in $T^{(2)}\R^3$,
the Lagrangian function for this system is
\begin{equation*}
L(q_{(0)}^i,q_{(1)}^i,q_{(2)}^i) = \frac{1}{2} \sum_{i=1}^{3} \left( \left(q_{(1)}^i\right)^2 - \left(q_{(2)}^i\right)^2\right) \, ,
\end{equation*}
which is a regular Lagrangian function since the Hessian matrix of $L$ with respect to the second-order
velocities is
\begin{equation*}
\left( \derpars{L}{q_{(2)}^j}{q_{(2)}^i} \right) = \begin{pmatrix} -1 & 0 & 0 \\ 0 & -1 & 0 \\ 0 & 0 & -1 \end{pmatrix} \, .
\end{equation*}

The second-order Euler-Lagrange equations are
\begin{equation*}
\frac{d^4}{dt^4}q_{(0)}^{i} + \frac{d^{2}}{dt^2}q_{(0)}^{i}=0 \, ,
\end{equation*}
for $1 \leqslant i \leqslant 3$. The general solution for the second-order Euler-Lagrange equations is
a curve $\gamma \colon I \subset \R \to \R^3$ with component functions
$\gamma(t) = (\gamma^1(t),\gamma^2(t),\gamma^3(t))$ given by
\begin{equation*}
\gamma^{i}(t) = c_1^i + c_2^i t + c_3^i\sin(t) + c_4^i \cos(t) \, ,
\end{equation*}
with $c_{j}^i$ constants, $1 \leqslant i \leqslant 3$ and $1 \leqslant j \leqslant 4$.

The Jacobi-Ostrogradsky momenta are given by
\begin{equation*}
\hat{p}^{(0)}_{i} = \derpar{L}{q_{(1)}^i} - \frac{d}{dt} \left( \derpar{L}{q_{(2)}^i} \right)
= q_{(1)}^i + q_{(3)}^i \, ,
\quad \hat{p}^{(1)}_{i} = \derpar{L}{q_{(2)}^i} = -q_{(2)}^i \, .
\end{equation*}
Therefore, the Legendre-Ostrogradsky map is given by
\begin{equation*}
\leg_L \left( q_{(0)}^i,q_{(1)}^i,q_{(2)}^i,q_{(3)}^i \right) =
\left(q_{(0)}^i,q_{(1)}^i,\hat{p}^{(0)}_{i},\hat{p}^{(1)}_i \right)
= \left(q_{(0)}^i,q_{(1)}^i,q_{(1)}^i + q_{(3)}^i,-q_{(2)}^i \right) \, .
\end{equation*}

Consider a Lagrangian submanifold $\Sigma_L \hookrightarrow T^*TTQ$ associated with the Lagrangian
function $L$. Local coordinates in the Lagrangian submanifold are
$(q_{(0)}^i,q_{(1)}^i,q_{(2)}^i,p^{(0)}_i)$. The Legendre transformation on $\Sigma_L$ is given locally
by
\begin{equation*}
\F L \left( q_{(0)}^i,q_{(1)}^i,q_{(2)}^i,p^{(0)}_i \right) =
\left(q_{(0)}^i,q_{(1)}^i,p^{(0)}_i, \derpar{L}{q_{(2)}^i}\right)
= \left(q_{(0)}^i,q_{(1)}^i,p^{(0)}_i,-q_{(2)}^i \right) \, .
\end{equation*}
$\F L$ is a diffeomorphism, and thus the second-order system is regular. Therefore, we can define
the second-order derivative in terms of the momenta as $q_{(2)}^{i} = -p^{(1)}_i$ and define a
Hamiltonian function on $T^{*}TQ$, which is given in coordinates by
\begin{equation*}
H \left( q_{(0)}^i,q_{(1)}^i;p^{(0)}_i,p^{(1)}_i \right) =
p^{(0)}_iq_{(1)}^i - \frac{1}{2}\left( \left(q_{(1)}^i\right)^2 + \left(p^{(1)}_i\right)^2 \right) \, .
\end{equation*}
Hamilton's equations for this second-order dynamical system are
\begin{equation*}
\frac{d}{dt}p^{(0)}_i = 0 \, ,
\quad \frac{d}{dt}p^{(1)}_i = -p^{(0)}_i + q_{(1)}^i \, ,
\quad \frac{d}{dt}q_{(0)}^i = q_{(1)}^i \, ,
\quad \frac{d}{dt}q_{(1)}^i = -p^{(1)}_i \, .
\end{equation*}
From the equations given above, in the regular case, one can obtain in a straightforward way the
second-order Euler-Lagrange equations.

\section{On the definition and the regularity of a higher-order Hamiltonian function}
\label{sec:Definition&RegularityHOHamiltonian}

In this Section we aim at studying the Hamiltonian functions that describe the dynamics of
higher-order systems. More particularly, we want to define the notions of ``higher-order
Hamiltonian function'' and ``regularity'' of these functions in a precise way.

Henceforth, we consider a $k$th-order dynamical system with $n$ degrees of freedom, and let $Q$ be
a $n$-dimensional smooth manifold modeling the configuration space of this system. From the results
in \cite{book:DeLeon_Rodrigues85} and in Section \ref{sec:GeometricDescriptionHODS} we know
that the Hamiltonian phase space for this system is the cotangent bundle $T^*T^{(k-1)}Q$. Hence, let
$H \in C^\infty(T^*T^{(k-1)}Q)$ be a Hamiltonian function describing the dynamics of the system.

\subsection{Statement of the problems}
\label{subsec:StatementProblems}

Following the patterns in Section \ref{subsec:FiberDerivativeHamiltonian}, let $\Leg H \colon
T^*T^{(k-1)}Q \to TT^{(k-1)}Q$ be the fiber derivative of $H$. If $(U;(q^i))$,
$1 \leqslant i \leqslant n$, is a local chart in $Q$ and $(q^i_{(j)},p_i^{(j)})$, $0 \leqslant j
\leqslant k-1$, are the induced local coordinates in a suitable open subset of $T^*T^{(k-1)}Q$,
then the map $\Leg H$ is given locally by
\begin{equation*}
\Leg H \left( q^i_{(0)},\ldots,q^i_{(k-1)};p_i^{(0)},\ldots,p_i^{(k-1)} \right)
= \left( q^i_{(0)},\ldots,q^i_{(k-1)};\derpar{H}{p_i^{(0)}},\ldots,\derpar{H}{p_i^{(k-1)}} \right) \, .
\end{equation*}
Moreover, the regularity condition for the Hamiltonian function $H$ is, locally, equivalent to
\begin{equation*}
\det \left( \derpars{H}{p^{(j)}_i}{p^{(j')}_{i'}} \right)(\alpha_q) \neq 0 \, , \
\mbox{for every } \alpha_q \in T^*T^{(k-1)}Q \, .
\end{equation*}

\begin{proposition}\label{prop:LagrangianFromHamiltonian1}
Let $H \in C^\infty(T^*T^{(k-1)}Q)$ be a hyperregular Hamiltonian function, $\theta_{T^{(k-1)}Q} \in \df^{1}(T^*T^{(k-1)}Q)$ is the Liouville $1$-form and
$X_H \in \vf(T^*T^{(k-1)}Q)$ is the unique vector field solution to the dynamical equation
\begin{equation*}
\inn_{X_H}\omega_{T^{(k-1)}Q} = \d H \, ,
\end{equation*}
where $\omega_{T^{(k-1)}Q} = -\d\theta_{T^{(k-1)}Q} \in \df^{2}(T^*T^{(k-1)}Q)$ is the canonical
symplectic form. Then there exists a global diffeomorphism $L \in C^\infty(TT^{(k-1)}Q)$, locally determined by \begin{equation*}L\left( q^i_{(j)};v^i_{(j)} \right) = \tilde{p}_i^{(j)} \dot{q}^i_{(j)}
- H \left( q^i_{(j)};\tilde{p}_i^{(j)} \right) \, ,
\end{equation*}
where $\tilde{p}_i^{(j)} = (\Leg H^{-1})^*p_i^{(j)}\in TT^{(k-1)}Q$.
\end{proposition}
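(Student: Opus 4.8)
The plan is to observe that this proposition is Proposition~\ref{prop:LagrangianFromHamiltonian} (that is, Proposition~3.6.7 of \cite{book:Abraham_Marsden78}) applied verbatim with the base manifold $Q$ replaced by $T^{(k-1)}Q$. That result uses only that the base is a smooth manifold, and although $T^{(k-1)}Q$ is not a vector bundle over $Q$ in general, its cotangent bundle $T^*T^{(k-1)}Q$, Liouville $1$-form $\theta_{T^{(k-1)}Q}$ and canonical symplectic form $\omega_{T^{(k-1)}Q} = -\d\theta_{T^{(k-1)}Q}$ are constructed exactly as for any manifold. So I would first note that, since $H$ is hyperregular, $\Leg H \colon T^*T^{(k-1)}Q \to TT^{(k-1)}Q$ is a global diffeomorphism, hence the function
\[
L \;=\; \theta_{T^{(k-1)}Q}(X_H)\circ\Leg H^{-1} \;-\; H\circ\Leg H^{-1} \;\in\; C^\infty(TT^{(k-1)}Q)
\]
is well defined and smooth, and then invoke Proposition~\ref{prop:LagrangianFromHamiltonian} to conclude that $L$ is hyperregular with $\Leg L = \Leg H^{-1}$; in particular $\Leg L$ is a global diffeomorphism, which is the assertion.

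It then remains to identify the local expression by a Darboux-coordinate computation. In the natural coordinates $(q^i_{(j)},p_i^{(j)})$, $0\leqslant j\leqslant k-1$, on $T^*T^{(k-1)}Q$ one has $\theta_{T^{(k-1)}Q} = p_i^{(j)}\,\d q^i_{(j)}$ and $\omega_{T^{(k-1)}Q} = \d q^i_{(j)}\wedge\d p_i^{(j)}$. Solving $\inn_{X_H}\omega_{T^{(k-1)}Q} = \d H$ shows that the $q^i_{(j)}$-components of $X_H$ are $\partial H/\partial p_i^{(j)}$ and its $p_i^{(j)}$-components are $-\partial H/\partial q^i_{(j)}$, so that $\theta_{T^{(k-1)}Q}(X_H) = p_i^{(j)}\,\partial H/\partial p_i^{(j)}$. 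Using the coordinate form $\Leg H(q^i_{(j)},p_i^{(j)}) = (q^i_{(j)},\partial H/\partial p_i^{(j)})$ already recorded above, its global inverse is $\Leg H^{-1}(q^i_{(j)},v^i_{(j)}) = (q^i_{(j)},\tilde p_i^{(j)})$, where $\tilde p_i^{(j)}$ is the globally defined solution of $v^i_{(j)} = (\partial H/\partial p_i^{(j)})(q^i_{(j)},\tilde p_i^{(j)})$. Substituting into the definition of $L$ and cancelling via this identity gives $L(q^i_{(j)};v^i_{(j)}) = \tilde p_i^{(j)}\,v^i_{(j)} - H(q^i_{(j)};\tilde p_i^{(j)})$, which is the stated formula once one writes $v^i_{(j)}=\dot q^i_{(j)}$.

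I do not expect a genuine obstacle here: the argument is a transfer of the first-order construction of \cite{book:Abraham_Marsden78} to the base $T^{(k-1)}Q$, followed by a routine computation in Darboux coordinates. The two points that deserve attention are that the global (not merely local) definition of $L$ relies on the hyperregularity of $H$, so that $\Leg H^{-1}$ is an honest map; and that, as stressed in the Introduction, the function $L$ obtained in this way is only a first-order-type Lagrangian on the manifold $T^{(k-1)}Q$ and need not be a genuine $k$th-order Lagrangian, since nothing forces it to be supported on the holonomic submanifold $T^{(k)}Q\hookrightarrow TT^{(k-1)}Q$; clarifying that distinction is precisely the problem taken up in the sequel.
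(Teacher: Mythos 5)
Your proposal is correct and follows essentially the same route as the paper's own proof: it invokes Proposition \ref{prop:LagrangianFromHamiltonian} with $T^{(k-1)}Q$ as the base manifold to define $L = \theta_{T^{(k-1)}Q}(X_H)\circ\Leg H^{-1} - H\circ\Leg H^{-1}$, and then performs the same Darboux-coordinate computation of $X_H$ and $\theta_{T^{(k-1)}Q}(X_H)$ to arrive at the stated local expression with $\tilde{p}_i^{(j)} = (\Leg H^{-1})^*p_i^{(j)}$. Your closing remarks on hyperregularity and on $L$ not being a genuine $k$th-order Lagrangian match the discussion the paper gives immediately after the proof.
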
	

\begin{proof}
Using that the Hamiltonian function $H$ is hyperregular and using Proposition
\ref{prop:LagrangianFromHamiltonian} with $T^{(k-1)}Q$ as the base manifold, we define a
hyperregular Lagrangian function (a global diffeomorphism) ${L} \in C^\infty(TT^{(k-1)}Q)$ as follows
\begin{equation*}
{L} = \theta_{T^{(k-1)}Q}(X_H) \circ \Leg H^{-1} - H \circ \Leg H^{-1} \, ,
\end{equation*}
where $\theta_{T^{(k-1)}Q} \in \df^{1}(T^*T^{(k-1)}Q)$ is the Liouville $1$-form and
$X_H \in \vf(T^*T^{(k-1)}Q)$ is the unique vector field solution to the dynamical equation
\begin{equation*}
\inn_{X_H}\omega_{T^{(k-1)}Q} = \d H \, ,
\end{equation*}
where $\omega_{T^{(k-1)}Q} = -\d\theta_{T^{(k-1)}Q} \in \df^{2}(T^*T^{(k-1)}Q)$ is the canonical
symplectic form. In the induced local coordinates of $T^*T^{(k-1)}Q$, the Liouville $1$-form
is locally given by $\theta_{T^{(k-1)}Q} = p_i^{(j)} \d q^i_{(j)}$. From where we deduce that the
vector field $X_H \in \vf(T^*T^{(k-1)}Q)$ solution to the previous equation is locally given by
\begin{equation}\label{eqn:VectorFieldSolutionHamEq}
X_H = \derpar{H}{p_i^{(j)}} \derpar{}{q^i_{(j)}} - \derpar{H}{q^i_{(j)}} \derpar{}{p_i^{(j)}} \, .
\end{equation}
Then, the function $\Leg H^*{L} = \theta_{T^{(k-1)}Q}(X_H) - H \in C^\infty(T^*T^{(k-1)}Q)$ is
given in coordinates by
\begin{equation*}
\Leg H^*{L}\left( q^i_{(j)};p_i^{(j)} \right) = p_i^{(j)} \derpar{H}{p_i^{(j)}}
- H \left( q^i_{(j)};p_i^{(j)} \right) \, .
\end{equation*}
From this, the coordinate expression of the Lagrangian function ${L}$ in the induced natural
coordinates $\left( q^i_{(j)},v^i_{(j)} \right)$, $1 \leqslant i \leqslant n$,
$0 \leqslant j \leqslant k-1$, of $TT^{(k-1)}Q$ is
\begin{equation*}
L\left( q^i_{(j)};v^i_{(j)} \right) = \tilde{p}_i^{(j)} \dot{q}^i_{(j)}
- H \left( q^i_{(j)};\tilde{p}_i^{(j)} \right) \, ,
\end{equation*}
where $\tilde{p}_i^{(j)} = (\Leg H^{-1})^*p_i^{(j)}$ are local functions in $TT^{(k-1)}Q$.
\end{proof}

It is important to point out that the Lagrangian function obtained may not be a $k$th-order
Lagrangian function, in the physical sense: the coordinate functions in the basis $T^{(k-1)}Q$ may
not be well-related to the coordinate functions in the fibers, in the sense that, in general, we
have $q^i_{(j+1)} \neq \dot{q}^i_{(j)}$, that is, they are independent variables. From the geometric
point of view, the problem is that ${L}$ is not defined in the ``holonomic'' submanifold
$j_k \colon T^{(k)}Q \hookrightarrow TT^{(k-1)}Q$. In fact, observe that, up to this point, the
order of the system has not been taken into account at any step, since this condition is usually
inherited from the Lagrangian formulation. That is, the function $H \in C^\infty(T^*T^{(k-1)}Q)$
is not considered as a Hamiltonian function for a $k$th-order dynamical system, but just as a
Hamiltonian function for a first-order system defined on the cotangent bundle of a larger manifold,
since we do not require any relation among the momenta. This issue gives rise to the first problem
that we want to solve for Hamiltonian functions defined on $T^*T^{(k-1)}Q$.

\begin{problem}\label{prob:HOHamiltonianFunctions}
Given a function $H \in C^\infty(T^*T^{(k-1)}Q)$, find conditions on $H$ such that it describes
the dynamics of a $k$th-order dynamical system, that is, find a suitable definition of
\textnormal{$k$th-order Hamiltonian functions}.
\end{problem}

\begin{remark}
Observe that the equivalent problem in the Lagrangian formalism (find conditions on a function
${L} \in C^\infty(TT^{(k-1)}Q)$ such that ${L}$ is a $k$th-order Lagrangian function) is solved
straightforwardly, since there exists a distinguished ``holonomic'' submanifold
$j_k \colon T^{(k)}Q \hookrightarrow TT^{(k-1)}Q$. Nevertheless, there is not such a submanifold in
$T^*T^{(k-1)}Q$, and hence the problem is not trivial.
\end{remark}

Notice that a sufficient condition to ensure that ${L}$ is indeed a $k$th-order Lagrangian function
in the usual sense, that is, ${L} \in C^\infty(T^{(k)}Q)$ (as a submanifold of $TT^{(k-1)}Q$), is
to require $\Im(\Leg H) \subseteq j_k(T^{(k)}Q)$. However, since $\dim T^{(k)}Q = (k+1)n < 2kn =
\dim T^*T^{(k-1)}Q$ for $k > 1$, this requirement on the fiber derivative of $H$ prevents the
Hamiltonian function to be regular in the sense of Definition \ref{def:RegularHamiltonian}.
Nevertheless, recall that the regularity condition for a $k$th-order Lagrangian function
${L} \in C^\infty(T^{(k)}Q)$ is locally equivalent to
\begin{equation*}
\det \left( \derpars{{L}}{q^i_{(k)}}{q^j_{(k)}} \right) \left( [\gamma]_0^{(k)} \right) \neq 0 \, , \
\mbox{for every } [\gamma]_0^{(k)} \in T^{(k)}Q \, ,
\end{equation*}
with $1 \leqslant i,j \leqslant n$, instead of
\begin{equation*}
\det \left( \derpars{{L}}{q^i_{(r)}}{q^j_{(s)}} \right) \left( [\gamma]_0^{(k)} \right) \neq 0 \, , \
\mbox{for every } [\gamma]_0^{(k)} \in T^{(k)}Q \, ,
\end{equation*}
with $1 \leqslant i,j \leqslant n$ and $1 \leqslant r,s \leqslant k$. That is, the Hessian of ${L}$
is taken only with respect to the highest-order ``velocities'' $q^i_{(k)}$, and not with respect to
all the ``velocities''. Therefore, we deduce that the regularity condition given in Definition
\ref{def:RegularHamiltonian} is not suitable for higher-order Hamiltonian functions, since too many
``orders'' of the momenta coordinates are taken into account. This gives rise to the second problem
that we want to solve.

\begin{problem}\label{prob:HORegularHamiltonianFunctions}
To find a suitable definition of \textnormal{regularity} for $k$th-order Hamiltonian functions
$H \in C^\infty(T^*T^{(k-1)}Q)$ in terms of the fiber derivative of $H$ such that Propositions
\ref{prop:LagrangianFromHamiltonian} and \ref{prop:FiberDerivativeHInverse} hold for $k$th-order
dynamical systems.
\end{problem}

\subsection{A particular case: the Hamiltonian function associated to a (hyper)regular Lagrangian system}
\label{subsec:ParticularCase}

In order to solve Problems \ref{prob:HOHamiltonianFunctions} and \ref{prob:HORegularHamiltonianFunctions}
stated in the previous Subsection, we first consider the particular case of a well-known Hamiltonian
that describes properly the dynamics of a higher-order system: the Hamiltonian function associated
to a higher-order Lagrangian system. 

\begin{proposition} Given a hyperregular $k$th-order Lagrangian function, ${L} \in C^\infty(T^{(k)}Q)$, 
there exists a unique $k$th-order Hamiltonian function associated to this Lagrangian given
locally by
\begin{equation*}
H \left( q^i_{(0)},\ldots,q^i_{(k-1)};p_i^{(0)},\ldots,p_i^{(k-1)} \right)
= \sum_{j=0}^{k-2} q^i_{(j+1)}p_i^{(j)} + \tilde{q}^{\,i}_{(k)}p_i^{(k-1)}
- {L} \left( q^i_{(0)},\ldots,q^i_{(k-1)},\tilde{q}^{\,i}_{(k)} \right) \, ,
\end{equation*}
where $\tilde{q}^{\,i}_{(k)} = (\leg_{L}^{-1})^*q^i_{(k)}$. Moreover, $\Im(\Leg H) \subset j_{k}(T^{(k)}Q)$. 
\end{proposition}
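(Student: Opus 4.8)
The plan is to split the statement into its two assertions---existence with the given local formula, and the inclusion $\Im(\Leg H) \subset j_k(T^{(k)}Q)$---and to derive both from the classical Legendre--Ostrogradsky construction already recalled in Subsection \ref{subsec:HOLegendreMaps}. First I would invoke the regularity discussion there: since ${L} \in C^\infty(T^{(k)}Q)$ is hyperregular, the Legendre--Ostrogradsky map $\leg_{L} \colon T^{(2k-1)}Q \to T^*T^{(k-1)}Q$ is a global diffeomorphism, so the relation $\tilde{p}_i^{(k-1)} = \partial {L}/\partial q_{(k)}^i$ can be inverted to express the top-order velocity as $\tilde{q}_{(k)}^{\,i} = (\leg_{L}^{-1})^*q_{(k)}^i$, a smooth function of $(q_{(0)}^i,\ldots,q_{(k-1)}^i,p_i^{(k-1)})$ --- this is exactly \eqref{eqn:ImplicitHighestVelocity}. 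Then I would \emph{define} $H$ by the stated coordinate formula and check it is the global, chart-independent function $H = E_{L} \circ \leg_{L}^{-1}$, where $E_{L}$ is the Ostrogradsky energy on $T^{(2k-1)}Q$ associated with the Poincar\'e--Cartan form $\theta_{L}$; the point is that $E_L = \hat p_i^{(j)} q_{(j+1)}^i - {L}$ pulls back under $\leg_L^{-1}$ precisely to $\sum_{j=0}^{k-2} q^i_{(j+1)}p_i^{(j)} + \tilde{q}^{\,i}_{(k)}p_i^{(k-1)} - {L}$, so the local expressions glue. Uniqueness: any two $k$th-order Hamiltonian functions describing the same dynamics and compatible with $\leg_L$ must agree with $E_L\circ\leg_L^{-1}$ on each chart, hence globally.

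For the second assertion I would compute $\Leg H$ directly from the coordinate formula. Differentiating $H$ with respect to $p_i^{(j)}$ for $0 \leqslant j \leqslant k-2$ gives $\partial H/\partial p_i^{(j)} = q_{(j+1)}^i$, using that $\tilde q_{(k)}^{\,i}$ depends only on the $p^{(k-1)}$'s among the momenta and that $\partial E_L/\partial(\text{interior momenta})$ telescopes. Differentiating with respect to $p_i^{(k-1)}$ gives, after the chain rule applied to $\tilde q_{(k)}^{\,i}$ and cancellation via $p_i^{(k-1)} = \partial {L}/\partial q_{(k)}^i$, simply $\partial H/\partial p_i^{(k-1)} = \tilde q_{(k)}^{\,i}$. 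Hence
\begin{equation*}
\Leg H\left(q_{(0)}^i,\ldots,q_{(k-1)}^i;p_i^{(0)},\ldots,p_i^{(k-1)}\right) = \left(q_{(0)}^i,\ldots,q_{(k-1)}^i;q_{(1)}^i,\ldots,q_{(k-1)}^i,\tilde q_{(k)}^{\,i}\right),
\end{equation*}
and comparing with the local description $v_{(j)}^i = q_{(j+1)}^i$ of $j_k(T^{(k)}Q) \hookrightarrow TT^{(k-1)}Q$ from Subsection \ref{subsec:HOTangentBundles}, the image lies in $j_k(T^{(k)}Q)$. Intrinsically this reflects the fact that $\Leg H = \leg_L^{-1}$ composed with the Legendre map on the Lagrangian side restricted to holonomic points, but the coordinate check is cleaner.

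The main obstacle I anticipate is the bookkeeping in the cancellation when differentiating $H$ with respect to the momenta: one must verify carefully that the derivatives of the implicit function $\tilde q_{(k)}^{\,i}$ with respect to $p_i^{(j)}$ for $j < k-1$ vanish (which is clear, since \eqref{eqn:ImplicitHighestVelocity} involves only $p^{(k-1)}$) and that the $p^{(k-1)}$-derivative produces the exact cancellation $\sum_i\bigl(p_i^{(k-1)} - \partial{L}/\partial q_{(k)}^i\bigr)\,\partial\tilde q_{(k)}^{\,i}/\partial p^{(k-1)} = 0$. A secondary care point is the global well-definedness of $H$: I would emphasize that the expression $\hat p_i^{(j)}q_{(j+1)}^i - {L}$ on $T^{(2k-1)}Q$ is intrinsically the contraction of the Poincar\'e--Cartan form with the canonical semispray (Liouville-type vector field) minus ${L}$, hence globally defined, and that hyperregularity of $\leg_L$ transports it to a genuine smooth function on $T^*T^{(k-1)}Q$, independent of coordinates. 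Everything else is a routine unwinding of definitions already set up in Section \ref{sec:GeometricDescriptionHODS}.
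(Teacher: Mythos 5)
Your proposal is correct and follows essentially the same route as the paper: defining $H = E_{L} \circ \leg_{L}^{-1}$ via the hyperregular Legendre--Ostrogradsky map, differentiating the local formula with respect to the momenta, and using the cancellation coming from $p_i^{(k-1)} = \partial {L}/\partial q^i_{(k)}$ (valid since $\Im(\leg_L) = T^*T^{(k-1)}Q$) to read off that $\Leg H$ lands in $j_k(T^{(k)}Q)$. The only cosmetic difference is that you define $H$ by its coordinate expression and then verify it glues to $E_L\circ\leg_L^{-1}$, whereas the paper starts from the intrinsic definition and derives the local formula; the content is identical.
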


\begin{proof}

Let $\leg_{L} \colon T^{(2k-1)}Q \to T^*T^{(k-1)}Q$ be the Legendre-Ostrogradsky map defined in
\eqref{eqn:LegendreOstrogradskyMapDef}. From the Lagrangian function ${L}$ we construct the
Lagrangian energy $E_{L} \in C^\infty(T^{(2k-1)}Q)$, with coordinate expression
\begin{equation}\label{eqn:LagrangianEnergyLocal}
E_{L} \left( q^i_{(0)},\ldots,q^i_{(2k-1)} \right)
= \sum_{r=1}^{k} q_{(r)}^i \sum_{j=0}^{k-r} (-1)^j d_T^j\left( \derpar{{L}}{q_{(r+j)}^i} \right)
- {L} \left( q^i_{(0)},\ldots,q^i_{(k)} \right) \, .
\end{equation}
Then, since ${L}$ is hyperregular, the Legendre-Ostrogradsky map is a diffeomorphism, and thus
there exists a unique $k$th-order Hamiltonian function associated to this Lagrangian system defined
by $H = E_{L} \circ \leg_{L}^{-1} \in C^\infty(T^*T^{(k-1)}Q)$. This Hamiltonian function is given
locally by
\begin{equation*}
H \left( q^i_{(0)},\ldots,q^i_{(k-1)};p_i^{(0)},\ldots,p_i^{(k-1)} \right)
= \sum_{j=0}^{k-2} q^i_{(j+1)}p_i^{(j)} + \tilde{q}^{\,i}_{(k)}p_i^{(k-1)}
- {L} \left( q^i_{(0)},\ldots,q^i_{(k-1)},\tilde{q}^{\,i}_{(k)} \right) \, ,
\end{equation*}
where $\tilde{q}^{\,i}_{(k)} = (\leg_{L}^{-1})^*q^i_{(k)}$ are local functions in $T^*T^{(k-1)}Q$.
Observe that the derivative of $H$ with respect to a momenta coordinate $p_r^{(j)}$ gives
\begin{equation*}
\derpar{H}{p_r^{(j)}} =
\begin{cases}
q^r_{(j+1)} 
& \mbox{if } 0 \leqslant j \leqslant k-2 \, , \\[15pt]
\tilde{q}^{\,r}_{(k)} + p_i^{(k-1)}\derpar{\tilde{q}^{\,i}_{(k)}}{p_r^{(k-1)}} - \derpar{{L}}{q^i_{(k)}}
\derpar{\tilde{q}^{\,i}_{(k)}}{p_r^{(k-1)}} = \tilde{q}^{\,r}_{(k)} & \mbox{if } j = k-1 \, ,
\end{cases}
\end{equation*}
where, since $T^*T^{(k-1)}Q = \Im(\leg_{L})$, we have $p^{(k-1)}_i = \derpar{{L}}{q^i_{(k)}}$,
and therefore the last two terms in the above sums cancel each other. From this we deduce that the
coordinate expression of the fiber derivative of $H$ in this particular case is
\begin{equation}\label{eqn:FiberDerivativeHamFromLagLocal}
\Leg H \left( q^i_{(0)},\ldots,q^i_{(k-1)};p_i^{(0)},\ldots,p_i^{(k-1)} \right)
= \left( q^i_{(0)},\ldots,q^i_{(k-1)}; q^i_{(1)},\ldots,q^i_{(k-1)},\tilde{q}^{\,i}_{(k)} \right) \, .
\end{equation}
It is clear from this coordinate expression that
$\Im(\Leg H) \subseteq T^{(k)}Q \stackrel{j_k}{\hookrightarrow} TT^{(k-1)}Q$. 
\end{proof}

Let 
$\Leg H_o \colon T^*T^{(k-1)}Q \to T^{(k)}Q$ be the map defined by $\Leg H = \Leg H_o \circ j_k$,
that is, the unique map such that the following diagram commutes
\begin{equation*}
\xymatrix{
T^*T^{(k-1)}Q \ar[rrr]^{\Leg H} \ar[drrr]_{\Leg H_o} & \ & \ & TT^{(k-1)}Q \\
\ & \ & \ & T^{(k)}Q \ar@{^{(}->}[u]^{j_k}
}
\end{equation*}
with coordinate expression
\begin{equation*}
\Leg H_o \left( q^i_{(0)},\ldots,q^i_{(k-1)};p_i^{(0)},\ldots,p_i^{(k-1)} \right)
= \left( q^i_{(0)},\ldots,q^i_{(k-1)},\tilde{q}^{\,i}_{(k)} \right) \, .
\end{equation*}

Now, let us compute the coordinate expression of the tangent map of $\Leg H$ in an arbitrary point
$\alpha_q \in T^*T^{(k-1)}Q$. Bearing in mind the coordinate expression
\eqref{eqn:FiberDerivativeHamFromLagLocal} of $\Leg H$, the map $T_{\alpha_q}\Leg H$ is given in
coordinates by the following $2kn \times 2kn$ real matrix
\begin{equation*}
T_{\alpha_q}\Leg H =
\left(
\begin{array}{cccc|ccccc}
\Id_n & \mathbf{0}_n & \ldots & \mathbf{0}_n & \mathbf{0}_n & \mathbf{0}_n & \ldots & \mathbf{0}_n & \mathbf{0}_n \\
\mathbf{0}_n & \Id_n & \ldots & \mathbf{0}_n & \mathbf{0}_n & \mathbf{0}_n & \ldots & \mathbf{0}_n & \mathbf{0}_n \\
\vdots & \vdots & \ddots & \vdots & \vdots & \vdots & \ddots & \vdots & \vdots \\
\mathbf{0}_n & \mathbf{0}_n & \ldots & \Id_n & \mathbf{0}_n & \mathbf{0}_n & \ldots & \mathbf{0}_n & \mathbf{0}_n \\ \hline
\mathbf{0}_n & \Id_n & \ldots & \mathbf{0}_n & \mathbf{0}_n & \mathbf{0}_n & \ldots & \mathbf{0}_n & \mathbf{0}_n \\
\vdots & \vdots & \ddots & \vdots & \vdots & \vdots & \ddots & \vdots & \vdots \\
\mathbf{0}_n & \mathbf{0}_n & \ldots & \Id_n & \mathbf{0}_n & \mathbf{0}_n & \ldots & \mathbf{0}_n & \mathbf{0}_n
\\
\derpar{\tilde{q}^{\,i}_{(k)}}{q^j_{(0)}} & \derpar{\tilde{q}^{\,i}_{(k)}}{q^j_{(1)}} & \ldots &
\derpar{\tilde{q}^{\,i}_{(k)}}{q^j_{(k-1)}} & \derpar{\tilde{q}^{\,i}_{(k)}}{p_j^{(0)}} &
\derpar{\tilde{q}^{\,i}_{(k)}}{p_j^{(1)}} & \ldots & \derpar{\tilde{q}^{\,i}_{(k)}}{p_j^{(k-2)}} &
\derpar{\tilde{q}^{\,i}_{(k)}}{p_j^{(k-1)}}
\end{array}
\right) \, ,
\end{equation*}
where every entry is a $n \times n$ real matrix, and every block of the matrix has size $kn \times kn$.
In particular, $\Id_n$ denotes the $n \times n$ identity matrix, $\mathbf{0}_n$ the $n \times n$
null matrix and, in the last row, we have $1 \leqslant i,j \leqslant n$. Observe that, in the most
favorable case, the map $\Leg H \colon T^*T^{(k-1)}Q \to TT^{(k-1)}Q$ has rank $(k+1)n = \dim T^{(k)}Q$,
that is, at the best the map $\Leg H_o \colon T^*T^{(k-1)}Q \to T^{(k)}Q$ is a submersion
onto $T^{(k)}Q$. A long but straightforward calculation shows that
\begin{equation*}
\left( \derpars{H}{p_i^{(k-1)}}{p_j^{(k-1)}} \right) =
\left( \derpar{\tilde{q}^{\,i}_{(k)}}{p_j^{(k-1)}} \right)
= \left( \derpar{\left( q^i_{(k)} \circ \leg_{L}^{-1} \right)}{p_j^{(k-1)}} \right) =
\left( \derpars{{L}}{q^i_{(k)}}{q^j_{(k)}} \right)^{-1} \, ,
\end{equation*}
which is a well-defined $n \times n$ matrix because ${L}$ is (hyper)regular, and therefore the map
$\Leg H_o$ is a submersion onto $T^{(k)}Q$. Observe that, as a consequence, $\Leg H_o$ admits local
sections, that is, maps $\sigma \colon T^{(k)}Q \to T^*T^{(k-1)}Q$ satisfying
$\Leg H_o \circ \sigma = \Id_{T^{(k)}Q}$.

The following result gives the illuminating key idea for the ``true'' definition of higher-order Hamiltonian function and regularity of a higher-order Hamiltonian function that we will exploit in the next Subsection.

\begin{theorem} Let $L:T^{(k)}Q\to\mathbb{R}$ be a $kth$-order Lagrangian function
\begin{itemize}
\item[(i)] If $L$ is regular, there exists an open subset $U\subset T^{(2k-1)}Q$ such that \begin{equation*}
\restric{\Leg H_o}{U} = \restric{\left( \tau_Q^{(k,2k-1)} \circ \leg_{L}^{-1} \right)}{U} \, .
\end{equation*}
\item[(ii)]If $L$ is hyperregular then $\Leg H_o:T^{*}T^{(k-1)}Q\to T^{(k)}Q$ is a submersion onto $T^{(k)}Q$.
\item[(iii)] If $L$ is hyperregular then $\Leg H_o$ admits a global section $\Upsilon:T^{(k)}Q\to T^{*}T^{(k-1)}Q$.
\end{itemize}
\end{theorem}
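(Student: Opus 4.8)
The plan is to exploit the explicit coordinate expressions already established in this subsection, since all three statements are essentially consequences of the local formulas for $\Leg H_o$, $\leg_L$ and their Jacobians. For part (i), I would start from the coordinate expression of $\Leg H_o$, namely $\Leg H_o(q^i_{(0)},\ldots,q^i_{(k-1)};p_i^{(0)},\ldots,p_i^{(k-1)}) = (q^i_{(0)},\ldots,q^i_{(k-1)},\tilde q^{\,i}_{(k)})$ with $\tilde q^{\,i}_{(k)} = (\leg_L^{-1})^*q^i_{(k)}$, and compare it with the composite $\tau_Q^{(k,2k-1)}\circ\leg_L^{-1}$. Since $L$ is regular, $\leg_L\colon T^{(2k-1)}Q\to T^*T^{(k-1)}Q$ is a local diffeomorphism, so there is an open set $U\subset T^{(2k-1)}Q$ on which $\leg_L$ restricts to a diffeomorphism onto its image; on that image the inverse $\leg_L^{-1}$ is defined, and composing with the projection $\tau_Q^{(k,2k-1)}\colon T^{(2k-1)}Q\to T^{(k)}Q$ one reads off precisely the first $(k+1)n$ coordinates $(q^i_{(0)},\ldots,q^i_{(k)})$. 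The point is that the construction $H = E_L\circ\leg_L^{-1}$ forces $\tilde q^{\,i}_{(k)}\circ\leg_L = q^i_{(k)}$, so the two maps agree on $U$ (more precisely, on $\leg_L(U)\subset T^*T^{(k-1)}Q$, after the obvious identification via $\leg_L$); this is a direct comparison of coordinate expressions, no hard work.

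For part (ii), the submersion claim was in fact already proved in the discussion immediately preceding the theorem: the tangent map $T_{\alpha_q}\Leg H$ was written out as an explicit $2kn\times 2kn$ block matrix, and the lower-right $n\times n$ block was computed to be $\left(\partial^2 L/\partial q^i_{(k)}\partial q^j_{(k)}\right)^{-1}$, which is invertible precisely because $L$ is hyperregular. Reading off the rank of $T_{\alpha_q}\Leg H_o$ from this matrix, one sees the first $kn$ rows contribute the identity in the $q$-block and the last $n$ rows contribute (via the invertible Hessian-inverse block) an independent $n$ dimensions, so $\operatorname{rank} T_{\alpha_q}\Leg H_o = (k+1)n = \dim T^{(k)}Q$ at every $\alpha_q$. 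Hence $\Leg H_o$ is a submersion. I would also note surjectivity onto $T^{(k)}Q$: given any point $(q^i_{(0)},\ldots,q^i_{(k)})\in T^{(k)}Q$, hyperregularity of $L$ means $\leg_L$ is a global diffeomorphism, so there is a unique preimage in $T^{(2k-1)}Q$, and applying $\leg_L$ to it gives a point of $T^*T^{(k-1)}Q$ mapping to the prescribed point under $\Leg H_o$; alternatively this follows once (iii) is established.

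For part (iii), the existence of a global section is where the genuine content lies, since a submersion only automatically admits \emph{local} sections (this was already observed in the running text). The plan is to build the section explicitly from the Legendre-Ostrogradsky map: define $\Upsilon\colon T^{(k)}Q\to T^*T^{(k-1)}Q$ by first choosing, for each point of $T^{(k)}Q$, a canonical lift into $T^{(2k-1)}Q$ — for instance the zero section of the bundle $\tau_Q^{(k,2k-1)}\colon T^{(2k-1)}Q\to T^{(k)}Q$, i.e. setting the top $(k-1)n$ coordinates $q^i_{(k+1)},\ldots,q^i_{(2k-1)}$ equal to zero — and then composing with $\leg_L$, which is a global diffeomorphism by hyperregularity. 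Concretely, $\Upsilon = \leg_L\circ s_0$ where $s_0\colon T^{(k)}Q\hookrightarrow T^{(2k-1)}Q$ is the zero section of $\tau_Q^{(k,2k-1)}$. Using part (i) globally (valid here since $L$ is hyperregular, so $U$ can be taken to be all of $T^{(2k-1)}Q$), one checks $\Leg H_o\circ\Upsilon = \tau_Q^{(k,2k-1)}\circ\leg_L^{-1}\circ\leg_L\circ s_0 = \tau_Q^{(k,2k-1)}\circ s_0 = \Id_{T^{(k)}Q}$. The only thing to verify carefully is that $s_0$ is well-defined globally and smooth, which it is because $T^{(2k-1)}Q\to T^{(k)}Q$ is an affine bundle (modelled on a vector bundle) admitting the canonical zero section in the adapted coordinates; I would remark that any smooth global section of $\tau_Q^{(k,2k-1)}$ would do equally well, so the construction is not canonical but existence is clear.

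The main obstacle, such as it is, is part (iii): one must resist the temptation to invoke a general ``submersions admit sections'' principle (false without extra hypotheses) and instead produce the section by hand from $\leg_L^{-1}$ and a section of $\tau_Q^{(k,2k-1)}$; the verification that the composite is the identity then reduces to part (i). Parts (i) and (ii) are essentially bookkeeping on the coordinate formulas already displayed.
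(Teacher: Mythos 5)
Your proposal is correct and follows essentially the same route as the paper's proof: part (i) by reading off the coordinate identity $\Leg H_o\circ\leg_L=\tau_Q^{(k,2k-1)}$ locally and composing with $\leg_L^{-1}$, part (ii) from the Hessian-block rank computation in the preceding discussion together with the global factorization $\Leg H_o=\tau_Q^{(k,2k-1)}\circ\leg_L^{-1}$, and part (iii) by taking $\Upsilon=\leg_L\circ\Psi$ for a global section $\Psi$ of $\tau_Q^{(k,2k-1)}$, exactly as in the paper. The only minor caveat is that your explicit ``zero section'' $s_0$ (setting $q^i_{(k+1)},\ldots,q^i_{(2k-1)}$ to zero) is not globally well defined, since these coordinates transform with inhomogeneous affine terms under changes of chart in $Q$, but your own remark that any smooth global section of $\tau_Q^{(k,2k-1)}$ works --- and such sections exist because the fibers are affine and contractible --- keeps the argument intact and matches the paper, which simply takes an arbitrary $\Psi\in\Gamma\left(\tau_Q^{(k,2k-1)}\right)$.
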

\begin{proof}
Let us consider the map $\Leg H_o \circ \leg_{L} \colon T^{(2k-1)}Q \to T^{(k)}Q$. In the
natural coordinates $\left( q^i_{(j)} \right)$ of $T^{(2k-1)}Q$ ($1 \leqslant i \leqslant n$,
$0 \leqslant j \leqslant k$), the local expression of this map is given by
\begin{align*}
\left( \Leg H_o \circ \leg_{L} \right) \left( q^i_{(0)},\ldots,q^i_{(2k-1)} \right) &=
\Leg H_o \left( q^i_{(0)},\ldots,q^i_{(k-1)};\hat{p}_i^{(0)},\ldots,\hat{p}_i^{(k-1)} \right) \\
&= \left( q^i_{(0)},\ldots,q^i_{(k-1)},q^i_{(k)} \right) \, ,
\end{align*}
since $\tilde{q}^{\,i}_{(k)} = (\leg_{L}^{-1})^*q_{(k)}^i$. From this coordinate expression we
deduce that for every point $[\gamma]^{(2k-1)}_0 \in T^{(2k-1)}Q$ there exists an open subset
$U \subseteq T^{(2k-1)}Q$ such that $[\gamma]^{(2k-1)}_0 \in U$ and
\begin{equation*}
\restric{\left( \Leg H_o \circ \leg_{L} \right)}{U} = \restric{\tau_Q^{(k,2k-1)}}{U} \, ,
\end{equation*}
and, since ${L} \in C^\infty(T^{(k)}Q)$ is a regular Lagrangian function, the Legendre-Ostrogradsky
map is a local diffeomorphism, we have
\begin{equation*}
\restric{\Leg H_o}{U} = \restric{\left( \tau_Q^{(k,2k-1)} \circ \leg_{L}^{-1} \right)}{U} \, .
\end{equation*}
Observe that, moreover, we assume that ${L}$ is a hyperregular Lagrangian function. Therefore, the
map $\leg_{L}^{-1} \colon T^*T^{(k-1)}Q \to T^{(2k-1)}Q$ is bijective and defined in the entire
manifold $T^*T^{(k-1)}Q$. On the other hand, the map $\tau_Q^{(k,2k-1)}$ is surjective, from where
we deduce that $\Leg H_o \colon T^*T^{(k-1)}Q \to T^{(k)}Q$ is surjective in this case, since we have
\begin{equation*}
\Leg H_o = \tau_Q^{(k,2k-1)} \circ \leg_{L}^{-1} \, ,
\end{equation*}
that is, $\Im(\Leg H_o) = T^{(k)}Q$.

Observe that, in addition, there exists a map $\Upsilon \colon T^{(k)}Q \to
T^*T^{(k-1)}Q$ defined by $\Upsilon = \leg_{L} \circ \Psi$, with
$\Psi \in \Gamma\left( \tau_Q^{(k,2k-1)} \right)$ being a global section of $\tau_Q^{(k,2k-1)}$,
which satisfies
\begin{equation*}
\Leg H_o \circ \Upsilon
= \tau_Q^{(k,2k-1)} \circ \leg_{L}^{-1} \circ \leg_{L} \circ \Psi
= \tau_Q^{(k,2k-1)}  \circ \Psi = \Id_{T^{(k)}Q} \, ,
\end{equation*}
that is, $\Upsilon$ is a global section of $\Leg H_o$.
\end{proof}
\subsection{Higher-order Hamiltonian functions: definition and regularity}

Bearing in mind the results obtained in the previous Section, we are now able to give a solution to
Problems \ref{prob:HOHamiltonianFunctions} and \ref{prob:HORegularHamiltonianFunctions}. From the
last theorem the statement of Problem \ref{prob:HOHamiltonianFunctions} and the calculations
for the particular case of a Hamiltonian system associated to a hyperregular Lagrangian system, we
can give the following definition.

\begin{definition}\label{def:HOHamiltonianFunction}
A function $H \in C^\infty(T^*T^{(k-1)}Q)$ is a \textnormal{$k$th-order Hamiltonian function} if
its fiber derivative, $\Leg H \colon T^*T^{(k-1)}Q \to TT^{(k-1)}Q$, takes values in the
``holonomic'' submanifold $j_k \colon T^{(k)}Q \hookrightarrow TT^{(k-1)}Q$, that is,
$\Im(\Leg H) \subseteq j_k(T^{(k)}Q)$.
\end{definition}

In the induced natural coordinates $\left( q^i_{(j)},p_i^{(j)} \right)$, $1 \leqslant i \leqslant n$
and $0 \leqslant j \leqslant k-1$, of $T^*T^{(k-1)}Q$, taking into account that the submanifold
$j_k \colon T^{(k)}Q \to TT^{(k-1)}Q$ is defined locally by the $(k-1)n$ constraints
$v_{(j)}^i = q_{(j+1)}^i$ ($0 \leqslant j \leqslant k-2$) and that the fiber
derivative of $H$ is a fiber bundle morphism, the condition for $H$ to be a $k$th-order Hamiltonian
function gives in coordinates
\begin{equation}\label{eqn:HOHamiltonianFunctionLocal}
\derpar{H}{p^{(j)}_i} = \Leg H^*v_{(j)}^i = \Leg H^*q_{(j+1)}^i = q_{(j+1)}^i \, , \
\mbox{for every } 1 \leqslant i \leqslant n \, , \ 0 \leqslant j \leqslant k-2 \, .
\end{equation}

Observe that if $H \in C^\infty(T^*T^{(k-1)}Q)$ is a $k$th-order Hamiltonian function, then the
fiber derivative $\Leg H$ of $H$ induces a map $\Leg H_o \colon T^*T^{(k-1)}Q \to T^{(k)}Q$ defined
as $\Leg H = j_k \circ \Leg H_o$. This map is given in coordinates by
\begin{equation}\label{eqn:InducedFiberDerivativeHam}
\Leg H_o \left( q^i_{(0)},\ldots,q^i_{(k-1)};p_i^{(0)},\ldots,p_i^{(k-1)} \right)
= \left( q^i_{(0)},\ldots,q^i_{(k-1)}, \derpar{H}{p_i^{(k-1)}} \right) \, .
\end{equation}
The map $\Leg H_o \colon T^*T^{(k-1)}Q \to T^{(k)}Q$ enables us to give a solution to Problem
\ref{prob:HORegularHamiltonianFunctions} in terms of its (local) ``inverse map''. Nevertheless,
due to the restriction imposed by the dimensions of the manifolds involved, there is no inverse map
to $\Leg H_o$ (even locally). Hence, instead of an inverse, we use the most similar approach, which
consists in considering inverses in just one way, that is, sections. Therefore, following the
patterns in \cite{art:Saunders_Crampin90}, we give the following definition.

\begin{definition}\label{def:RegularHOHamiltonian}
A $k$th-order Hamiltonian function $H \in C^\infty(T^*T^{(k-1)}Q)$ is said to be \textnormal{regular}
if the map $\Leg H_o \colon T^*T^{(k-1)}Q \to T^{(k)}Q$ is a submersion onto $T^{(k)}Q$. If, moreover,
$\Leg H_o$ admits a global section $\Upsilon \colon T^{(k)}Q \to T^*T^{(k-1)}Q$, the Hamiltonian
function is said to be \textnormal{hyperregular}. Otherwise, the Hamiltonian function is said to be
\textnormal{singular}.
\end{definition}

In the induced natural coordinates $\left( q^i_{(j)},p_i^{(j)} \right)$, $1 \leqslant i \leqslant n$
and $0 \leqslant j \leqslant k-1$, of $T^*T^{(k-1)}Q$ from the relation
\eqref{eqn:HOHamiltonianFunctionLocal} we have
\begin{equation}\label{eqn:CrossedMomentaDerivativeHam}
\derpars{H}{p^{(s)}_j}{p_i^{(k-1)}} = \derpar{}{p_i^{(k-1)}} \left( \derpar{H}{p^{(s)}_j} \right)
= \derpar{q_{(s+1)}^j}{p_i^{(k-1)}} = 0 \, ,
\end{equation}
where $1 \leqslant i,j \leqslant n$ and $0 \leqslant s \leqslant k-2$. From those and the local
expression \eqref{eqn:InducedFiberDerivativeHam} of $\Leg H_o$ we deduce that its tangent
map in an arbitrary point $\alpha_q \in T^*T^{(k-1)}Q$ is given locally by the following
$(k+1)n \times 2kn$ real matrix
\begin{equation}\label{eqn:TangentInducedFiberDerivativeHam}
T_{\alpha_q}\Leg H_o =
\left(
\begin{array}{ccc|cccc}
\Id_n & \ldots & \mathbf{0}_n & \mathbf{0}_n & \ldots & \mathbf{0}_n & \mathbf{0}_n \\
\vdots & \ddots & \vdots & \vdots & \ddots & \vdots & \vdots \\
\mathbf{0}_n & \ldots & \Id_n & \mathbf{0}_n & \ldots & \mathbf{0}_n & \mathbf{0}_n \\ \hline
\ & \ & \ & \ & \ & \ \\[-10pt]
\derpars{H}{q^j_{(0)}}{p_i^{(k-1)}} & \ldots & \derpars{H}{q^j_{(k-1)}}{p_i^{(k-1)}} &
\mathbf{0}_n & \ldots & \mathbf{0}_n & \derpars{H}{p^{(k-1)}_j}{p_i^{(k-1)}}
\end{array}
\right) \, ,
\end{equation}
where every entry is a $n \times n$ real matrix. In particular, $\Id_n$ denotes the $n \times n$
identity matrix, $\mathbf{0}_n$ the $n \times n$ null matrix and, in the last row, we have
$1 \leqslant i,j \leqslant n$. Therefore, the local condition for a $k$th-order Hamiltonian function
$H \in C^\infty(T^*T^{(k-1)}Q)$ to be a regular is
\begin{equation*}
\det \left( \derpars{H}{p^{(k-1)}_j}{p_i^{(k-1)}} \right) (\alpha_q) \neq 0 \, , \
\mbox{for every } \alpha_q \in T^*T^{(k-1)}Q \, .
\end{equation*}

Using both Definitions \ref{def:HOHamiltonianFunction} and \ref{def:RegularHOHamiltonian} we can
now state and prove the analogous results to Propositions \ref{prop:LagrangianFromHamiltonian} and
\ref{prop:FiberDerivativeHInverse} in the higher-order setting. As for Proposition
\ref{prop:LagrangianFromHamiltonian}, we have the following result.

\begin{proposition}\label{prop:HOLagrangianFromHamiltonian}
Let $H \in C^\infty(T^*T^{(k-1)}Q)$ be a regular $k$th-order Hamiltonian function,
$\theta_{T^{(k-1)}Q} \in \df^{1}(T^*T^{(k-1)}Q)$ the Liouville 1-form,
$\omega_{T^{(k-1)}Q} \in \df^{2}(T^*T^{(k-1)}Q)$ the canonical symplectic form, and
$X_H \in \vf(T^*T^{(k-1)}Q)$ the unique vector field solution to the equation
\begin{equation}\label{eqn:DynamicalEquationHOHam}
\inn_{X_H}\omega_{T^{(k-1)}Q} = \d H \, .
\end{equation}
Then the function $\theta_{T^{(k-1)}Q}(X_H) - H \in C^\infty(T^*T^{(k-1)}Q)$ is $\Leg H_o$-projectable,
and the function ${L} \in C^\infty(T^{(k)}Q)$ defined by
$(\Leg H_o)^*{L} = \theta_{T^{(k-1)}Q}(X_H) - H$ is a regular $k$th-order Lagrangian function.
Moreover, for every $[\gamma]^{(2k-1)}_0 \in T^{(2k-1)}Q$ there exists an open set
$U \subseteq T^{(2k-1)}Q$ such that $[\gamma]^{(2k-1)}_0 \in U$ and
$\restric{\left( \Leg H_o \circ \leg_{L} \right)}{U} = \restric{\tau_Q^{(k,2k-1)}}{U}$.
\end{proposition}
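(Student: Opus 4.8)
The plan is to work in natural coordinates throughout, mimicking the structure of the proof of Proposition \ref{prop:LagrangianFromHamiltonian} but carefully tracking the extra holonomic constraints coming from Definition \ref{def:HOHamiltonianFunction}. First I would write down the vector field $X_H$ solving \eqref{eqn:DynamicalEquationHOHam}: exactly as in the proof of Proposition \ref{prop:LagrangianFromHamiltonian1}, using $\theta_{T^{(k-1)}Q} = p_i^{(j)}\d q^i_{(j)}$ one gets
\begin{equation*}
X_H = \derpar{H}{p_i^{(j)}}\derpar{}{q^i_{(j)}} - \derpar{H}{q^i_{(j)}}\derpar{}{p_i^{(j)}} \, ,
\end{equation*}
and hence $\theta_{T^{(k-1)}Q}(X_H) - H = p_i^{(j)}\,\partial H/\partial p_i^{(j)} - H$ as a function on $T^*T^{(k-1)}Q$. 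Now I would use that $H$ is a $k$th-order Hamiltonian function, so by \eqref{eqn:HOHamiltonianFunctionLocal} the lower-order derivatives collapse: $\partial H/\partial p_i^{(j)} = q^i_{(j+1)}$ for $0\leqslant j\leqslant k-2$. Thus
\begin{equation*}
\theta_{T^{(k-1)}Q}(X_H) - H = \sum_{j=0}^{k-2} p_i^{(j)} q^i_{(j+1)} + p_i^{(k-1)}\derpar{H}{p_i^{(k-1)}} - H \, .
\end{equation*}

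Next I would show this function is $\Leg H_o$-projectable, i.e.\ constant along the fibers of $\Leg H_o$. By \eqref{eqn:InducedFiberDerivativeHam}, $\Leg H_o$ has fiber coordinates $q^i_{(j)}$ ($0\leqslant j\leqslant k-1$) fixed and target coordinate $q^i_{(k)} = \partial H/\partial p_i^{(k-1)}$; so projectability along $\Leg H_o$ means the function depends on the $p$'s only through the combination $\partial H/\partial p_i^{(k-1)}$. Since $H$ is regular, $\Leg H_o$ is a submersion onto $T^{(k)}Q$ (this is exactly the content of the matrix \eqref{eqn:TangentInducedFiberDerivativeHam} having maximal rank $(k+1)n$, equivalently $\det(\partial^2 H/\partial p_j^{(k-1)}\partial p_i^{(k-1)})\neq 0$), so it suffices to differentiate the candidate function by each $p_r^{(s)}$ and check the result vanishes modulo the relations defining the fibers. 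Differentiating with respect to $p_r^{(s)}$ for $0\leqslant s\leqslant k-2$ and using \eqref{eqn:CrossedMomentaDerivativeHam} kills the would-be problematic terms; differentiating with respect to $p_r^{(k-1)}$ one gets $q^r_{(k)} + p_i^{(k-1)}\,\partial^2 H/\partial p_i^{(k-1)}\partial p_r^{(k-1)} - (\partial H/\partial q^i_{(k-1)})$-type corrections — but on $\mathrm{Im}(\leg_L)$, which is where we are, the correction terms cancel just as in the last displayed computation of the preceding proof. More conceptually, I would argue projectability directly: a tangent vector to a fiber of $\Leg H_o$ at $\alpha_q$ lies in $\ker T_{\alpha_q}\Leg H_o$, and one computes the differential of $\theta_{T^{(k-1)}Q}(X_H)-H$ on such a vector and shows it vanishes, which is a short bilinear check using $\d(\theta_{T^{(k-1)}Q}(X_H)-H) = \Lie_{X_H}\theta_{T^{(k-1)}Q}$ together with $\inn_{X_H}\omega_{T^{(k-1)}Q} = \d H$. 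Having projectability, define ${L}\in C^\infty(T^{(k)}Q)$ by $(\Leg H_o)^*{L} = \theta_{T^{(k-1)}Q}(X_H) - H$; its coordinate expression is obtained by substituting $\tilde{q}^{\,i}_{(k)}$ for $\partial H/\partial p_i^{(k-1)}$ and expressing $p_i^{(k-1)}$ via the (local) inverse of $q^i_{(k)}\mapsto \partial H/\partial p_i^{(k-1)}$.

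For the regularity of ${L}$, I would compute $\partial^2{L}/\partial q^i_{(k)}\partial q^j_{(k)}$ and show it equals the inverse of the nondegenerate matrix $\partial^2 H/\partial p_i^{(k-1)}\partial p_j^{(k-1)}$ — the same reciprocity computation that appears in the particular case treated above (where it was checked that $(\partial^2 H/\partial p_i^{(k-1)}\partial p_j^{(k-1)}) = (\partial^2{L}/\partial q^i_{(k)}\partial q^j_{(k)})^{-1}$); invertibility of the $H$-Hessian is precisely the regularity hypothesis, so the $L$-Hessian is invertible and $L$ is a regular $k$th-order Lagrangian. Finally, for the last assertion I would form $\Leg H_o\circ\leg_L \colon T^{(2k-1)}Q\to T^{(k)}Q$ and compute it in coordinates: $\leg_L$ sends $(q^i_{(0)},\ldots,q^i_{(2k-1)})$ to $(q^i_{(0)},\ldots,q^i_{(k-1)},\hat p_i^{(0)},\ldots,\hat p_i^{(k-1)})$, and then $\Leg H_o$ returns the first $kn$ coordinates together with $\partial H/\partial p_i^{(k-1)}$ evaluated at these momenta, which by construction of ${L}$ as the projected function equals $q^i_{(k)}$; hence the composite agrees locally with $\tau_Q^{(k,2k-1)}$ on a neighbourhood of any point, using the inverse function theorem to pass from the regular (submersive) to the locally invertible situation. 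The main obstacle I expect is not the regularity reciprocity (routine) but establishing $\Leg H_o$-projectability cleanly — one must verify that \emph{all} the $p$-derivatives of $\theta_{T^{(k-1)}Q}(X_H)-H$ respect the fiber structure of $\Leg H_o$, and this requires genuinely using the defining relations \eqref{eqn:HOHamiltonianFunctionLocal} and \eqref{eqn:CrossedMomentaDerivativeHam} of a $k$th-order Hamiltonian function rather than just regularity; an intrinsic argument via $\Lie_{X_H}\theta_{T^{(k-1)}Q}$ and the geometry of $\ker T\Leg H_o$ is the cleanest route and is the step I would write out most carefully.
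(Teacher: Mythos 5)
Your overall route is the paper's: compute $X_H$ and $\theta_{T^{(k-1)}Q}(X_H)-H$ in coordinates, establish $\Leg H_o$-projectability, obtain the regularity of $L$ from the Hessian reciprocity, and read off the final statement from the coordinate expression of $\Leg H_o\circ\leg_L$. The derivatives you compute along $p_r^{(s)}$, $0\leqslant s\leqslant k-2$, using \eqref{eqn:HOHamiltonianFunctionLocal} and \eqref{eqn:CrossedMomentaDerivativeHam}, are exactly the paper's projectability computation, and the last two steps are sketched at the same level of detail as the paper.

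The genuine problem is your handling of the $p_r^{(k-1)}$-direction. Projectability along $\Leg H_o$ does \emph{not} require the derivative of $\theta_{T^{(k-1)}Q}(X_H)-H$ with respect to $p_r^{(k-1)}$ to vanish, and in general it does not: it equals $p_i^{(k-1)}\derpars{H}{p_r^{(k-1)}}{p_i^{(k-1)}}$. What projectability requires is annihilation of $\ker T\Leg H_o$, and the fact you never state — made explicit in the paper from the matrix \eqref{eqn:TangentInducedFiberDerivativeHam} — is that the regularity hypothesis $\det\left(\derpars{H}{p_j^{(k-1)}}{p_i^{(k-1)}}\right)\neq 0$ forces $\ker T\Leg H_o=\left\langle \derpar{}{p_i^{(0)}},\ldots,\derpar{}{p_i^{(k-2)}}\right\rangle$: a kernel vector has no $q$-components, and the invertible Hessian block then kills its $p^{(k-1)}$-component. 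Hence only the derivatives you already computed matter. Your attempted rescue of the $p^{(k-1)}$-check, ``on $\Im(\leg_L)$, which is where we are, the correction terms cancel,'' is moreover circular: at this stage no Lagrangian exists yet ($L$ is precisely what is being constructed), so there is no $\leg_L$ and no identity $p_i^{(k-1)}=\derpar{L}{q^i_{(k)}}$ to invoke; that cancellation belongs to the particular case of Subsection \ref{subsec:ParticularCase}, where $H$ comes from a Lagrangian. Delete that detour and replace it with the explicit identification of $\ker T\Leg H_o$ (or, if you prefer the intrinsic route, evaluate $\Lie_{X_H}\theta_{T^{(k-1)}Q}=\d\left(\theta_{T^{(k-1)}Q}(X_H)-H\right)$ on that kernel), and your argument coincides with the paper's proof.
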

\begin{proof}
Following the patterns in \cite{book:Abraham_Marsden78}, the easiest proof of this result is done
in coordinates. Hence, let $\left( q^i_{(j)},p_i^{(j)} \right)$, $1 \leqslant i \leqslant n$
and $0 \leqslant j \leqslant k-1$, be the induced natural coordinates in a suitable open set of
$T^*T^{(k-1)}Q$. Along this proof we only consider these coordinates.

First, we must prove that the function $\theta_{T^{(k-1)}Q}(X_H) - H \in C^\infty(T^*T^{(k-1)}Q)$
is $\Leg H_o$-projectable, that is,
\begin{equation*}
\Lie(Y)\left( \theta_{T^{(k-1)}Q}(X_H) - H \right) = 0 \, , \
\mbox{for every } Y \in \ker T\Leg H_o \, .
\end{equation*}
From the coordinate expression \eqref{eqn:TangentInducedFiberDerivativeHam} of the tangent map
$T\Leg H_o$ at a point $\alpha_q \in T^*T^{(k-1)}Q$, it is clear that a local basis for
$\ker T\Leg H_o$ is given by
\begin{equation*}
\ker T\Leg H_o = \left\langle \derpar{}{p_i^{(0)}},\ldots,\derpar{}{p_i^{(k-2)}} \right\rangle \, .
\end{equation*}
On the other hand, from our calculations in Section \ref{subsec:StatementProblems} we know that
the Hamiltonian vector fiels solution to equation \eqref{eqn:DynamicalEquationHOHam} is given
locally by \eqref{eqn:VectorFieldSolutionHamEq}, which in combination with the identities
\eqref{eqn:HOHamiltonianFunctionLocal} gives
\begin{equation*}
X_H = \sum_{l=0}^{k-2} q_{(l+1)}^i\derpar{}{q_{(l)}^i}
+ \derpar{H}{p_i^{(k-1)}} \derpar{}{q^i_{(k-1)}} - \derpar{H}{q^i_{(j)}} \derpar{}{p_i^{(j)}} \, .
\end{equation*}
Thus, since the Liouville $1$-form is given in coordinates by
$\theta_{T^{(k-1)}Q} = p_i^{(j)} \d q_{(j)}^i$, the function
$\theta_{T^{(k-1)}Q}(X_H) - H \in C^\infty(T^*T^{(k-1)}Q)$ has the following coordinate expression
\begin{equation*}
\theta_{T^{(k-1)}Q}(X_H) - H =
\sum_{l=0}^{k-2} p_i^{(l)} q_{(l+1)}^i + p_i^{(k-1)}\derpar{H}{p_i^{(k-1)}} - H \, .
\end{equation*}
Hence for every $Y = \partial / \partial p_r^{(s)}$, $1 \leqslant r \leqslant n$ and
$0 \leqslant s \leqslant k-2$, we have
\begin{align*}
\Lie(Y)\left( \theta_{T^{(k-1)}Q}(X_H) - H \right) &=
\Lie\left( \derpar{}{p_r^{(s)}} \right)\left( \sum_{l=0}^{k-2} p_i^{(l)} q_{(l+1)}^i
+ p_i^{(k-1)}\derpar{H}{p_i^{(k-1)}} - H \right) \\
&= q_{(s+1)}^r + p_i^{(k-1)} \derpars{H}{p_r^{(s)}}{p_i^{(k-1)}} - \derpar{H}{p_r^{(s)}} = 0 \, ,
\end{align*}
where, in the last equality, we have used the identities \eqref{eqn:HOHamiltonianFunctionLocal}
and \eqref{eqn:CrossedMomentaDerivativeHam}. Therefore, the function 
$\theta_{T^{(k-1)}Q}(X_H) - H \in C^\infty(T^*T^{(k-1)}Q)$ is $\Leg H_o$-projectable.

It is now a long but straightforward calculation to prove that the Lagrangian function
$L \in C^\infty(T^{(k)}Q)$ defined by $(\Leg H_o)^*{L} = \theta_{T^{(k-1)}Q}(X_H) - H$ is regular.
Indeed, using implicit differentiation and the chain rule, we have
\begin{equation*}
\left( \derpars{L}{q_{(k)}^i}{q_{(k)}^j} \right) = \left( \derpar{\hat{p}^{(k-1)}_j}{q^i_{(k)}} \right)
= \left( \derpar{p^j_{(k-1)} \circ \leg_L}{q^i_{(k)}} \right) =
\left( \derpars{H}{p_j^{(k-1)}}{p_i^{(k-1)}} \right)^{-1} \, ,
\end{equation*}
where $\hat{p}^{(k-1)}_j$ are $n$ last Jacobi-Ostrogradsky momenta functions defined in
\eqref{eqn:JacobiOstrogradskyMomenta}. Therefore, since $H$ is a $k$th-order regular Hamiltonian
function, the Hessian of $H$ with respect to the highest order momenta is invertible at every point
of $T^*T^{(k-1)}Q$, and hence so is the Hessian of $L$ with respect to the highest order velocities.
Thus, $L$ is regular.

Finally, let us consider the map $\Leg H_o \circ \leg_{L} \colon T^{(2k-1)}Q \to T^{(k)}Q$. In the
natural coordinates of $T^{(2k-1)}Q$ the local expression of this map is given by
$$
\left( \Leg H_o \circ \leg_{L} \right) \left( q^i_{(0)},\ldots,q^i_{(2k-1)} \right) =
\Leg H_o \left( q^i_{(0)},\ldots,q^i_{(k-1)},\hat{p}_i^{(0)},\ldots,\hat{p}_i^{(k-1)} \right) 
= \left( q^i_{(0)},\ldots,q^i_{(k-1)},q^i_{(k)} \right)$$
since $\tilde{q}^{\,i}_{(k)} = \Leg H_o^*q_{(k)}^i$. From this coordinate expression we
deduce that for every point $[\gamma]^{(2k-1)}_0 \in T^{(2k-1)}Q$ there exists an open set
$U \subseteq T^{(2k-1)}Q$ such that $[\gamma]^{(2k-1)}_0 \in U$ and
$\restric{\left( \Leg H_o \circ \leg_{L} \right)}{U} = \restric{\tau_Q^{(k,2k-1)}}{U}$.
\end{proof}

\begin{remark}
Since $\Leg H_o$ is at the best a submersion we can not directly pull-back a function in
$T^*T^{(k-1)}Q$ to $T^{(k)}Q$ by the inverse of $\Leg H_o$ (since there isn't), and hence we must
first prove the $\Leg H_o$-projectability of the function.
\end{remark}

\begin{remark}
It is important to point out that in the conclusion of Proposition \ref{prop:HOLagrangianFromHamiltonian}
we only prove the \textit{regularity} of $L$, instead of its hyperregularity (as in Proposition
\ref{prop:LagrangianFromHamiltonian}). Indeed, even if the Hamiltonian function is hyperregular, we
can only prove that $\leg_L$ is a surjective local diffeomorphism, but not the injectivity.
\end{remark}

Before stating and proving the analogous to Proposition \ref{prop:FiberDerivativeHInverse}, we need
the following technical result, which we state and prove in a more general case than we need.

\begin{lemma}\label{lemma:TechLemma}
Let $L \in C^\infty(T^{(k)}Q)$ be a $k$th-order Lagrangian function,
$\theta_L \in \Omega^1(T^{(2k-1)}Q)$ the associated Poincar\'e-Cartan $1$-form and $E_L \in
C^\infty(T^{(2k-1)}Q)$ the $k$th-order Lagrangian energy. If $X \in \vf(T^{(2k-1)}Q)$ is a semispray
of type $k$, then $\left(\tau_Q^{(k,2k-1)}\right)^*L = \theta_L(X) - E_L$.
\end{lemma}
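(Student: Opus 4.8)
The plan is to establish the identity in natural coordinates, the decisive observation being that the Poincar\'e--Cartan $1$-form $\theta_L$ only involves the differentials $\d q^i_{(0)},\ldots,\d q^i_{(k-1)}$, so that in the pairing $\theta_L(X)$ only the ``holonomic part'' of the semispray $X$ intervenes. First I would fix a local chart $(U;(q^i))$ in $Q$ and the induced coordinates $\left(q^i_{(0)},\ldots,q^i_{(2k-1)}\right)$ in $\left(\tau_Q^{2k-1}\right)^{-1}(U)\subset T^{(2k-1)}Q$, and rewrite the data in terms of the Jacobi--Ostrogradsky momenta $\hat p^{(r-1)}_i$ of \eqref{eqn:JacobiOstrogradskyMomenta}. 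With this notation, \eqref{eqn:PoincareCartan1FormLocal} reads $\theta_L=\sum_{j=0}^{k-1}\hat p^{(j)}_i\,\d q^i_{(j)}$, and re-indexing \eqref{eqn:LagrangianEnergyLocal} with $j=r-1$ gives $E_L=\sum_{j=0}^{k-1}q^i_{(j+1)}\hat p^{(j)}_i-\left(\tau_Q^{(k,2k-1)}\right)^{*}L$, where $L=L\left(q^i_{(0)},\ldots,q^i_{(k)}\right)$ is viewed as a function on $T^{(2k-1)}Q$ via $\tau_Q^{(k,2k-1)}$.

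Next I would write down the local form of a semispray $X\in\vf(T^{(2k-1)}Q)$ of type $k$. By Definition \ref{defholonomic} (applied with ambient manifold $T^{(2k-1)}Q$), every integral curve of such an $X$ is the lift of a curve in $Q$ up to $T^{(2k-1-k+1)}Q=T^{(k)}Q$; hence, by the coordinate expression \eqref{eqn:SemisprayTypeRLocal} adapted to $T^{(2k-1)}Q$,
\[
X=\sum_{j=0}^{k-1}q^i_{(j+1)}\derpar{}{q^i_{(j)}}+\sum_{j=k}^{2k-1}F^i_{(j)}\derpar{}{q^i_{(j)}}
\]
for some local functions $F^i_{(j)}$. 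Since $\theta_L$ contains no $\d q^i_{(j)}$ with $j\geqslant k$, the second sum contributes nothing to the contraction, so that $\theta_L(X)=\sum_{j=0}^{k-1}\hat p^{(j)}_i\,\langle\d q^i_{(j)},X\rangle=\sum_{j=0}^{k-1}q^i_{(j+1)}\hat p^{(j)}_i$. Subtracting the expression for $E_L$ obtained above, the momentum terms cancel in pairs and one is left with $\theta_L(X)-E_L=\left(\tau_Q^{(k,2k-1)}\right)^{*}L$, which is exactly the claim.

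The calculation itself is routine; the only point demanding care is the bookkeeping of index ranges, namely verifying that ``type $k$'' (rather than a lower type) is precisely the hypothesis that forces the coefficients of $\partial/\partial q^i_{(j)}$ to equal $q^i_{(j+1)}$ for $0\leqslant j\leqslant k-1$, while the higher-order coefficients $F^i_{(j)}$ remain unconstrained --- which is harmless here exactly because $\theta_L$ does not detect those directions. (The same identity has an intrinsic counterpart, mirroring the first-order relation $\theta_L(\Gamma)-E_L=L$ for a SODE $\Gamma$ on $TQ$, but the coordinate argument above is the most economical.)
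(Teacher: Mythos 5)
Your proposal is correct and follows essentially the same route as the paper's proof: a local-coordinate computation pairing the type-$k$ semispray expression \eqref{eqn:SemisprayTypeRLocal} (with $2k-1$ in place of $k$ and $r=k$) against the Poincar\'e--Cartan form \eqref{eqn:PoincareCartan1FormLocal} and subtracting the energy \eqref{eqn:LagrangianEnergyLocal}, so that the momentum terms cancel and only $\left(\tau_Q^{(k,2k-1)}\right)^*L$ survives. Rewriting everything via the Jacobi--Ostrogradsky momenta $\hat p^{(j)}_i$ is only a notational repackaging of the same cancellation, and your remark that the unconstrained coefficients $F^i_{(j)}$, $j\geqslant k$, are invisible to $\theta_L$ is exactly the point implicit in the paper's calculation.
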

\begin{proof}
This proof is easy in coordinates. Recall that the coordinate expression of a semispray of type $r$
in $T^{(k)}Q$ is given by \eqref{eqn:SemisprayTypeRLocal}, from where we deduce that a semispray
of type $k$ $X$ in $T^{(2k-1)}Q$ is given locally by
\begin{equation*}
X = q_{(1)}^i\derpar{}{q_{(0)}^i} + \ldots + q_{(k)}^i\derpar{}{q_{(k-1)}^i} +
F_{(k)}^i\derpar{}{q_{(k)}^i} + \ldots + F_{(2k-1)}^i\derpar{}{q_{(2k-1)}^i} \, .
\end{equation*}
Hence, bearing in mind the coordinate expression \eqref{eqn:PoincareCartan1FormLocal}
of the Poincar\'{e}-Cartan $1$-form, we deduce that the smooth function $\theta_L(X)$ is given
locally by
\begin{equation*}
\theta_L(X) = \sum_{r=1}^k q_{(r)}^i \sum_{j=0}^{k-r}(-1)^j d_T^j\left(\derpar{L}{q_{(r+j)}^i}\right) \, .
\end{equation*}
Hence, bearing in minda the coordinate expression \eqref{eqn:LagrangianEnergyLocal} of the Lagrangian
energy $E_L$, we obtain
\begin{align*}
\theta_L(X) - E_L &= \sum_{r=1}^k q_{(r)}^i \sum_{j=0}^{k-r}(-1)^j d_T^j\left(\derpar{L}{q_{(r+j)}^i}\right)
- \sum_{r=1}^k q_{(r)}^i \sum_{j=0}^{k-r}(-1)^j d_T^j\left(\derpar{L}{q_{(r+j)}^i}\right)
+ \left( \tau_Q^{(k,2k-1)} \right)^*L \\
&= \left( \tau_Q^{(k,2k-1)} \right)^*L \, .
\end{align*}
as claimed.
\end{proof}

Finally, as for Proposition \ref{prop:FiberDerivativeHInverse}, we have the following result.

\begin{proposition}\label{prop:HOFiberDerivativeHInverse}
Let ${L} \in C^\infty(T^{(k)}Q)$ be a hyperregular $k$th-order Lagrangian function and
$H = E_{L} \circ \leg_{L}^{-1} \in C^\infty(T^*T^{(k-1)}Q)$ the associated Hamiltonian function.
Then $H$ is a hyperregular $k$th-order Hamiltonian function and
$\Leg H_o = \tau_Q^{(k,2k-1)} \circ \leg_{L}^{-1}$. In addition, if
$\tilde{L} \in C^\infty(T^{(k)}Q)$ is the $k$th-order Lagrangian function associated
to $H$ by Proposition \ref{prop:HOLagrangianFromHamiltonian}, then $\tilde{{L}} = {L}$.
\end{proposition}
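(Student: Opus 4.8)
The plan is to prove the statement in two stages: first that $H$ is a hyperregular $k$th-order Hamiltonian function with $\Leg H_o=\tau_Q^{(k,2k-1)}\circ\leg_L^{-1}$, and then that the Lagrangian $\tilde L$ returned by Proposition~\ref{prop:HOLagrangianFromHamiltonian} coincides with $L$. For the first stage I would simply collect what has already been established: since $L$ is hyperregular, $\leg_L\colon T^{(2k-1)}Q\to T^*T^{(k-1)}Q$ is a diffeomorphism, so $H=E_L\circ\leg_L^{-1}$ is globally defined and is exactly the canonical Hamiltonian function attached to $L$; in particular $\Im(\Leg H)\subseteq j_k(T^{(k)}Q)$, so $H$ is a $k$th-order Hamiltonian function in the sense of Definition~\ref{def:HOHamiltonianFunction}. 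The preceding theorem on $\Leg H_o$ for (hyper)regular Lagrangians then gives both the identity $\Leg H_o=\tau_Q^{(k,2k-1)}\circ\leg_L^{-1}$ and the facts that $\Leg H_o$ is a surjective submersion onto $T^{(k)}Q$ and admits a global section $\Upsilon=\leg_L\circ\Psi$ with $\Psi$ a global section of $\tau_Q^{(k,2k-1)}$; hence $H$ is hyperregular in the sense of Definition~\ref{def:RegularHOHamiltonian}.

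For the second stage, recall that $\tilde L\in C^\infty(T^{(k)}Q)$ is characterized by $(\Leg H_o)^*\tilde L=\theta_{T^{(k-1)}Q}(X_H)-H$, with $X_H$ the solution of $\inn_{X_H}\omega_{T^{(k-1)}Q}=\d H$. Since $\Leg H_o$ is a surjective submersion, $(\Leg H_o)^*$ is injective on functions, so it is enough to show $(\Leg H_o)^*L=\theta_{T^{(k-1)}Q}(X_H)-H$, and, $\leg_L$ being a diffeomorphism, it is equivalent to verify this identity after applying $\leg_L^*$. On the left-hand side this produces $(\Leg H_o\circ\leg_L)^*L=(\tau_Q^{(k,2k-1)})^*L$ by the formula for $\Leg H_o$. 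On the right-hand side, $\leg_L^*H=E_L$ by definition of $H$, while $\leg_L^*(\theta_{T^{(k-1)}Q}(X_H))=\theta_L(X)$, where $X\in\vf(T^{(2k-1)}Q)$ is the vector field $\leg_L$-related to $X_H$ (which exists because $\leg_L$ is a diffeomorphism); here I would use the identity $\leg_L^*\theta_{T^{(k-1)}Q}=\theta_L$ from Subsection~\ref{subsec:HOLegendreMaps} together with the elementary fact that $f^*(\inn_Y\alpha)=\inn_X(f^*\alpha)$ whenever $X$ is $f$-related to $Y$. Thus the desired identity collapses to $(\tau_Q^{(k,2k-1)})^*L=\theta_L(X)-E_L$, which is precisely Lemma~\ref{lemma:TechLemma}, \emph{provided} $X$ is a semispray of type $k$ on $T^{(2k-1)}Q$.

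The only genuinely computational point, and the one I expect to require the most care, is therefore checking that this $X$ is a semispray of type $k$. I would do this in the natural coordinates $(q^i_{(j)})$ of $T^{(2k-1)}Q$: from $X$ being $\leg_L$-related to $X_H$ one gets $X(q^i_{(l)})=\bigl(\partial H/\partial p^{(l)}_i\bigr)\circ\leg_L$ for $0\leqslant l\leqslant k-1$, which equals $q^i_{(l+1)}$ for $0\leqslant l\leqslant k-2$ by the $k$th-order Hamiltonian relations~\eqref{eqn:HOHamiltonianFunctionLocal}, and equals $q^i_{(k)}$ for $l=k-1$ because $\Leg H_o\circ\leg_L=\tau_Q^{(k,2k-1)}$; comparing with the local form~\eqref{eqn:SemisprayTypeRLocal} of a semispray of type $r$ (with $r=k$ and base $T^{(2k-1)}Q$) this is exactly the defining condition. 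With Lemma~\ref{lemma:TechLemma} in hand the chain of pull-back identities closes up and yields $\tilde L=L$. The rest of the argument is a direct appeal to the preceding theorem and to Lemma~\ref{lemma:TechLemma}; the subtle bookkeeping is making sure the abstract pull-back steps ($\leg_L^*\theta_{T^{(k-1)}Q}=\theta_L$, $f$-relatedness of $X$ and $X_H$, injectivity of $(\Leg H_o)^*$) are invoked in the right order.
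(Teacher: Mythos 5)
Your proposal is correct and follows essentially the same route as the paper's proof: hyperregularity of $H$ and the identity $\Leg H_o=\tau_Q^{(k,2k-1)}\circ\leg_{L}^{-1}$ come from the computations of the preceding theorem, and $\tilde{L}=L$ is obtained by pulling the defining identity back by $\leg_L$, applying Lemma \ref{lemma:TechLemma}, and using injectivity of the pullbacks along the surjections $\Leg H_o$ and $\tau_Q^{(k,2k-1)}$. The only (harmless) deviation is that where the paper cites the known fact that the Lagrangian dynamical vector field $X_L$ is a semispray of type $1$ for hyperregular $L$, you verify directly in coordinates that $(\leg_L^{-1})_*X_H$ (which is the same vector field) is a semispray of type $k$ via \eqref{eqn:HOHamiltonianFunctionLocal} and $\Leg H_o\circ\leg_L=\tau_Q^{(k,2k-1)}$, which is exactly the hypothesis Lemma \ref{lemma:TechLemma} needs.
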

\begin{proof}
The proof of $H$ being a hyperregular $k$th-order Hamiltonian functions follows exactly the same
patterns as the computations carried out in Section \ref{subsec:ParticularCase}, and hence we omit
them. Therefore, the only need to prove that if $\tilde{L} \in C^\infty(T^{(k)}Q)$ is the $k$th-order
Lagrangian function associated to $H$ by Proposition \ref{prop:HOLagrangianFromHamiltonian}, then
$\tilde{{L}} = {L}$. Observe that from the properties of the map $\Leg H_o^*$ we have
\begin{equation*}
\Leg H_o^*\tilde{L} = (\tau_Q^{(k,2k-1)} \circ \leg_L^{-1})^*\tilde{L}
= \left(\leg_L^{-1}\right)^*\left( \left(\tau_Q^{(k,2k-1)}\right)^*\tilde{L} \right) \, .
\end{equation*}
On the other hand, from the definition of the Hamiltonian function and the properties of the
Legendre-Ostrogradsky map we have
\begin{align*}
\theta_{T^{(k-1)}Q}(X_H) - H &= \theta_{T^{(k-1)}Q}(X_H) - (E_L \circ \leg_L^{-1})
= \left( \leg_L^{-1} \right)^*\left( \theta_{T^{(k-1)}Q}(X_H) \circ \leg_L - E_L \right) \\
&= \left( \leg_L^{-1} \right)^*\left( \theta_L(X_L) - E_L \right)
= \left( \leg_L^{-1} \right)^*\left( \left( \tau_Q^{(k,2k-1)}\right)^*L \right) \, ,
\end{align*}
where in the last step we have used Lemma \ref{lemma:TechLemma}, as the vector field $X_L$
solution to the Lagrangian dynamical equation $\inn_{X_L}\omega_L = \d E_L$ is a semispray of type
$1$ when $L$ is a hyperregular higher-order Lagrangian function (see \cite{book:DeLeon_Rodrigues85}
for details). Equating these two expressions, we have
\begin{equation*}
(\leg_L^{-1})^*\left( \left(\tau_Q^{(k,2k-1)}\right)^*\tilde{L} \right)
= \left( \leg_L^{-1} \right)^*\left( \left( \tau_Q^{(k,2k-1)}\right)^*L \right) \, .
\end{equation*}
Now, since $\leg_L^{-1}$ is a global diffeomorphism and $\tau_Q^{(k,2k-1)}$ is the canonical
projection, this last equality holds if, and only if, $\tilde{L} = L$.
\end{proof}

\section*{Acknowledgments} 

This work has been partially supported by NSF grant INSPIRE-1363720, \textsl{Ministerio de
Econom\'{\i}a y Competividad} (Spain) grants MTM2013-42870-P and MTM2014-54855-P, and 
\textsl{Generalitat de Catalunya} (Catalonia) project 2014-SGR-634.
We would like to thank D. Mart\'{\i}n de Diego and N. Rom\'an Roy for providing the initial stimulus
for this project. Also we wish to thanks M. de Le\'on, J.C. Marrero and M.C. Mu\~{n}oz Lecanda for
fruitful comments and discussions. 



\begin{thebibliography}{2}

\bibitem{book:Abraham_Marsden78}
R. Abraham and J.E. Marsden,
\textit{Foundations of mechanics},
2nd ed., Benjamin-Cummings, New York 1978.

\bibitem{art:Batlle_Gomis_Pons_Roman88}
C. Batlle, J. Gomis, J.M. Pons and N. Rom\'{a}n-Roy,
``Lagrangian and Hamiltonian constraints for second-order singular Lagrangians'',
\textit{J. Phys. A: Math. Gen.} \textbf{21}(12)  (1988) 2693--2703.

\bibitem{book:Bloch03}
A. Bloch,
\textit{Nonholonomic Mechanics and Control},
Interdisciplinary Applied Mathematics Series, 24, Springer-Verlag, New York 2003.

\bibitem{art:Bloch_Colombo_Gupta_Martin15}
A. Bloch, L. Colombo, R. Gupta and D. Mart\'in de Diego,
``A geometric approach to the optimal control of nonholonomic mechanical systems'',
\textit{Analysis and Geometry in Control Theory and its Applications. INdAM-Springer},
\textbf{11} (2015) 35--64.

\bibitem{art:BlochCrouch96}
A. Bloch and P. Crouch,
``On the equivalence of higher order variational problems and optimal control problems''.
\textit{Proceedings of 35rd IEEE Conference on Decision and Control}, (1996) 1648--1653.



\bibitem{art:Bruce_Grabowska_Grabowski_Urbanski15}
A.J. Bruce, K. Grabowska, J. Grabowski and P. Urba\'{n}ski,
``New developments in geometric mechanics'',
arXiv:1510.00296 [math-ph] (2015). To appear in Banach Center Publications.

\bibitem{art:Camarinha_Silva_Crouch95}
M. Camarinha, F. Silva Leite and P. Crouch,
``Splines of class $C^k$ on non-Euclidean spaces'',
\textit{IMA J. Math. Control Inform.}, \textbf{12}(4) (1995) 399--410.

\bibitem{book:Cannas01}
A. Cannas da Silva,
\textit{Lectures on Symplectic Geometry},
Lecture Notes in Mathematics, vol. 1764, Springer-Verlag, New York 2001.

\bibitem{proc:Cantrijn_Crampin_Sarlet86}
F. Cantrijn, M. Crampin and W. Sarlet,
``Higher-order differential equations and higher-order Lagrangian mechanics'',
\textit{Math. Proc. Cambridge Philos. Soc.} \textbf{99}(3) (1986) 565--587.

\bibitem{art:Carinena_Lopez92}
J.F. Cari\~{n}ena and C. L\'{o}pez,
``The time-evolution operator for higher-order singular Lagrangians'',
\textit{Internat. J. Modern Phys. A} \textbf{7}(11)  (1992) 2447--2468.

\bibitem{art:Colombo_Martin_Zuccalli10}
L. Colombo, D. Mart\'{\i}n de Diego and M. Zuccalli,
``Optimal control of underactuated mechanical systems: a geometric approach'',
\textit{J. Math. Phys.} \textbf{51}(8) (2010) 083519.
  
\bibitem{art:Colombo_Martin14}
L. Colombo and D. Mart\'{\i}n de Diego,
``Higher-order variational problems on Lie groups and optimal control applications'',
\textit{J. Geom. Mech.} \textbf{6}(4) (2014) 451--478.

\bibitem{art:Colombo_Prieto14}
L. Colombo and P.D. Prieto-Mart\'{\i}nez,
``Unified formalism for higher-order variational problems and its applications in optimal control'',
\textit{Int. J. Geom. Methods Mod. Phys.} \textbf{11}(4) (2014) 1450034.

\bibitem{art:Constantelos84}
G.C. {Constantelos},
``On the Hamilton--Jacobi Theory with Derivatives of Higher Order'',
\textit{Nuovo Cimento B (11)} \textbf{84}(1) (1984) 91--101.

\bibitem{Fer}
M. de Le\'on, F. Jim\'enez and M. Mart\'in de Diego. ``Lagrangian submanifold, hamiltonian dynamics and constrained variational calculus: continuous and discrete settings''. \textit{J. Phys. A: Math. Theor}. \textbf{45}, 205204.

\bibitem{art:DeLeon_Lacomba88}
M. de Le\'on and E.A. Lacomba,
``Les sous-vari\'et\'es lagrangiennes dans la dynamique lagrangienne d'ordre sup\'erieur'',
\textit{C. R. Acad. Sci. Paris S\'er. II M\'ec. Phys. Chim. Sci. Univers Sci. Terre}
\textbf{307}(10) (1988) 1137--1139.


\bibitem{art:DeLeon_Lacomba89}
M. de Le\'on and E.A. Lacomba,
``Lagrangian submanifolds and higher-order dynamical systems'',
\textit{J. Phys. A} \textbf{22}(18) (1989) 3809--3820.

\bibitem{book:DeLeon_Rodrigues85}
M. de Le\'on and P.R. Rodrigues,
\textit{Generalized classical mechanics and field theory},
North-Holland Math. Studies, vol. 112, Elsevier Science Publishers B.V., Amsterdam 1985.

\bibitem{art:Gracia_Pons_Roman91}
X. Gr\`{a}cia, J.M. Pons and N. Rom\'{a}n-Roy,
``Higher-order Lagrangian systems: Geometric structures, dynamics and constraints'',
\textit{J. Math. Phys.} \textbf{32}(10)  (1991) 2744--2763.

\bibitem{art:Gracia_Pons_Roman92}
X. Gr\`{a}cia, J.M. Pons and N. Rom\'{a}n-Roy,
``Higher-order conditions for singular Lagrangian systems'',
\textit{J. Phys. A: Math. Gen.} \textbf{25} (1992) 1981--2004.

\bibitem{art:Garcia_Guzman_Marrero_Mestdag14}
E. Garc\'{\i}a-Tora\~{n}o, E. Guzm\'an, J.C. Marrero and T. Mestdag,
``Reduced dynamics and Lagrangian submanifolds of symplectic manifolds''.
\textit{J. Phys. A: Math. Theor.} \textbf{47}(22) (2014) 225203.

\bibitem{art:Gay_Holm_Meier_Ratiu_Vialard12_1}
F. Gay-Balmaz, D.D. Holm, D.M. Meier, T.S. Ratiu and F.-X. Vialard,
``Invariant higher-order variational problems'',
\textit{Comm. Math. Phys.} \textbf{309}(2) (2012) 413--458.

\bibitem{art:Gay_Holm_Meier_Ratiu_Vialard12_2}
F. Gay-Balmaz, D.D. Holm, D.M. Meier, T.S. Ratiu and F.-X. Vialard,
``Invariant higher-order variational problems II''
\textit{J. Nonlinear Sci.} \textbf{22}(4) (2012) 553--597.

\bibitem{art:Gay_Holm_Ratiu11}
F. Gay-Balmaz, D.D. Holm and T.S. Ratiu,
``Higher-Order Lagrange-Poincar\'e and Hamilton-Poincar\'e reductions'',
\textit{Bull. Braz. Math. Soc. (N.S.)} \textbf{42}(4) (2011) 579--606.

\bibitem{book:Godbillon69}
C. Godbillon,
\textit{Geom\'etrie diferentielle et m\'ecanique analytique},
Hermann, Paris 1969.

\bibitem{phd:Gotay}
M.J. {Gotay},
\textit{Presymplectic manifolds, geometric constraint theory and the Dirac-Bergmann theory of constraints},
Ph.D. thesis, University of Maryland 1979.

\bibitem{art:Gotay_Nester_Hinds78}
M.J. Gotay, J.M. Nester and G. Hinds,
``Presymplectic manifolds and the Dirac-Bergmann theory of constraints'',
\textit{J. Math. Phys.} \textbf{19}(11) (1978) 2388--2399.


\bibitem{art:Grabowska_Vitagliano15}
K. Grabowska and L. Vitagliano,
``Tulczyjew triples in higher derivative field theory'',
\textit{J. Geom. Mech.} \textbf{7}(1) (2015) 1--33.

\bibitem{art:Grabowski_Urbanski99}
J. Grabowski and P. Urba\'nski,
``Algebroids -- general differential calculi on vector bundles'',
\textit{J. Geom. Phys.} \textbf{31}(2-3) (1999) 111--141.


\bibitem{art:Hussein_Bloch07}
I. Hussein and A. Bloch,
``Dynamic coverage optimal control for multiple spacecraft interferometric imaging'',
\textit{J. Dyn. Control Syst.} \textbf{13}(1) (2007) 69--93.




\bibitem{art:Jozwikowski_Rotkiewicz13}
M. J\'{o}\'{z}wikowski and  M. Rotkiewicz,
``Bundle-theoretic methods for higher-order variational calculus'',
\textit{J. Geom. Mech.} \textbf{6}(1) (2014) 99--120.

\bibitem{art:Jozwikowski_Rotkiewicz15}
M. J\'{o}\'{z}wikowski and M. Rotkiewicz,
``Models for higher algebroids'',
\textit{J. Geom. Mech.} \textbf{7}(3) (2015) 317--359.

\bibitem{art:Klein62}
J. Klein,
``Espaces variationnels et m\'ecanique'',
\textit{Ann. Inst. Fourier (Grenoble)} \textbf{12} (1962) 1--124.

\bibitem{art:Konieczna_Urbanski99}
K. Konieczna and P. Urba\'nski,
``Double vector bundles and duality'',
\textit{Arch. Math. (Brno)} \textbf{35}(1) (1999) 59--95.

\bibitem{art:Krupkova94}
O. {Krupkova},
``A Geometrical Setting for Higher-Order Dirac-Bergmann Theory of Constraints'',
\textit{J. Math. Phys.} \textbf{35}(12)  (1994) 6557--6576.

\bibitem{book:Libermann_Marle87}
P. Libermann and C.M. Marle,
\textit{Symplectic geometry and analytical mechanics},
Mathematics and its Applications, D. Reidel Publishing Company, Dordrecht 1987.

\bibitem{art:Machado_Silva_Krakowski10}
L. Machado, F. Silva Leite and K. Krakowski,
``Higher-order smoothing splines versus least squares problems on Riemannian manifolds'',
\textit{J. Dyn. Control Syst.}, \textbf{16}(1) (2010), 121--148.

\bibitem{Mackenzie}
K. Mackenzie,
\textit{Lie groupoids and Lie algebroids in differential geometry},
London Mathematical Society Lecture Notes Series, 124, Cambridge University Press, 1987.

\bibitem{phd:Martinez}
E. {Mart\'{\i}nez},
\textit{Geometr\'{\i}a de ecuaciones diferenciales aplicada a la mec\'{a}nica},
Ph.D. thesis, Universidad de Zaragoza 1991.

\bibitem{art:Martinez15}
E. {Mart\'{\i}nez},
``Higher-order variational calculus on Lie algebroids'',
\textit{J. Geom. Mech.} \textbf{7}(1) (2015) 81--108.

\bibitem{art:Maruskin_Bloch11}
J. Maruskin and A. Bloch,
``The Boltzmann-Hamel equations for the optimal control of mechanical systems with nonholonomic constraints'',
\textit{Internat. J. Robust Nonlinear Control} \textbf{21}(4) (2011) 373--386. 

\bibitem{art:Mendella_Marmo_Tulczyjew95}
G. Mendella, G. Marmo and W.M. Tulczyjew,
``Integrability of implicit differential equations'',
\textit{J. Phys. A: Math. Theor.} \textbf{28}(1) (1995) 149--163.

\bibitem{phd:Meier}
D.M. Meier,
\textit{Invariant higher-order variational problems: Reduction, geometry and applications},
Ph.D Thesis, Imperial College London 2013.

\bibitem{art:Noakes_Heinzinger_Paden89}
L. Noakes, G. Heinzinger and B. Paden,
``Cubic splines on curved spaces'',
\textit{IMA J. Math. Control Inform.} \textbf{6}(4) (1989) 465--473.

\bibitem{Popescu1}
P. Popescu. ``On higher order geometry on anchored vector bundles'', \textit{Central European Journal of Mathematics}. \textbf{2}(5) (2004) 826--839.
\bibitem{Popescu}
P. Popescu and M. Popescu. ``Affine Hamiltonians in Higher Order Geometry'', \textit{International Journal of Theoretical Physics.} \textbf{46} (10) (2007) 2531--2549.


\bibitem{art:Pradines67}
J. Pradines,
``Th\'eorie de Lie pour les groupo\"{\i}des diff\'erentiables. Calcul diff\'erentiel dans la cat\'egorie des
groupo\"{\i}des infinit\'esimaux'',
\textit{C. R. Acad. Sci. Paris S\'er. A-B}, \textbf{264} (1967) A245--A248.

\bibitem{book:Pradines77}
J. Pradines,
\textit{Fibr\'es vectoriels doubles et calcul des jets non holonomes},
Esquisses Math\'ematiques, 29. Universit\'{e} d'Amiens, U.E.R. de Math\'{e}matiques, Amiens 1977.

\bibitem{art:Prieto_Roman11}
P.D. Prieto-Mart\'{\i}nez and N. Rom\'an-Roy,
``Lagrangian-Hamiltonian unified formalism for autonomous higher-order dynamical systems'',
\textit{J. Phys. A: Math. Theor.} \textbf{44}(38) (2011) 385203.

\bibitem{art:Prieto_Roman13}
P.D. Prieto-Mart\'{\i}nez and N. Rom\'an-Roy,
``Higher-order mechanics: variational principles and other topics'',
\textit{J. Geom. Mech.} \textbf{5}(3) (2013) 493--510.

\bibitem{book:Saunders89}
D.J. {Saunders},
\textit{The geometry of jet bundles},
London Mathematical Society, Lecture notes series, vol. 142, Cambridge University Press, Cambridge, New York 1989.

\bibitem{art:Saunders_Crampin90}
D.J. Saunders and M. Crampin,
``On the Legendre map in higher-order field theories'',
\textit{J. Phys. A: Math. Gen.} \textbf{23}(14) (1990) 3169--3182.

\bibitem{art:Sniatycki_Tulczyjew72}
J. \'Sniatycki and W.M. Tulczyjew,
``Generating forms of Lagrangian submanifolds'',
\textit{Indiana Univ. Math. J.} \textbf{22}(3) (1972/73) 267--275.

\bibitem{art:Tulczyjew76_2}
W.M. Tulczyjew,
``Les sous-vari\'et\'es lagrangiennes et la dynamique hamiltonienne'',
\textit{C. R. Acad. Sci. Paris S\'er. A-B} \textbf{283}(1) (1976) Ai, A15--A18.

\bibitem{art:Tulczyjew76_1}
W.M. Tulczyjew,
``Les sous-vari\'et\'es lagrangiennes et la dynamique lagrangienne'',
\textit{C. R. Acad. Sci. Paris S\'er. A-B} \textbf{283}(8) (1976) Av, A675--A678.

\bibitem{art:Vitagliano10}
L. Vitagliano,
``The Lagrangian-Hamiltonian formalism for higher order field theories'',
\textit{J. Geom. Phys.} \textbf{60}(6-8) (2010) 847--873.

\bibitem{book:Weinstein77}
A. Weinstein,
\textit{Lectures on symplectic manifolds},
Expository lectures from the CBMS Regional Conference held at the University of North Carolina,
March 8--12, 1976, Regional Conference Series in Mathematics, No. 29, American Mathematical Society,
Providence, R.I. 1977.

\end{thebibliography}
\end{document}